\newtheorem{theorem}{Theorem}
\newtheorem{corollary}[theorem]{Corollary}
\newtheorem{proposition}[theorem]{Proposition}
\newtheorem{lemma}[theorem]{Lemma}
\definecolor{orange}{rgb}{0.8,0.4,0}
\definecolor{violet}{rgb}{0.6,0,0.8}
\definecolor{darkgreen}{rgb}{0,0.5,0}
\definecolor{verydarkgreen}{rgb}{0.0,0.3,0.0}
\definecolor{darkblue}{rgb}{0,0,0.6}
\definecolor{darkred}{rgb}{0.75,0,0}
\definecolor{grey}{rgb}{0.35,0.35,0.35}
\DeclareMathOperator{\dist}{dist}
\renewcommand{\epsilon}{\varepsilon}
\newcommand{\BRD}{BonnRoute\xspace}
\def\Rbb{\mathbb{R}}
\def\E{\textsc{e}}
\def\N{\textsc{n}}
\def\W{\textsc{w}}
\def\S{\textsc{s}}
\def\NE{\textsc{ne}}
\def\NW{\textsc{nw}}
\def\SW{\textsc{sw}}
\def\SE{\textsc{se}}
\newcommand\blfootnote[1]{%
  \begingroup
  \renewcommand\thefootnote{}\footnote{#1}%
  \addtocounter{footnote}{-1}%
  \endgroup
}
\let\@fnsymbol\@arabic
\author{Markus Ahrens\thanks{
IBM Deutschland Research \& Development GmbH.
\texttt{markus.johannes.ahrens@ibm.com}
}
\and 
Dorothee Henke\thanks{
Department of Mathematics, TU Dortmund University.
\texttt{dorothee.henke@math.tu-dortmund.de}
}
\and 
Stefan Rabenstein\thanks{
Research Institute for Discrete Math., Hausdorff Center for Mathematics, University of Bonn.
\texttt{rabenstein@dm.uni-bonn.de}
}
\and 
Jens Vygen\thanks{
Research Institute for Discrete Math., Hausdorff Center for Mathematics, University of Bonn. 
\texttt{vygen@dm.uni-bonn.de}
}
}
\date{}
\title{Faster Goal-Oriented Shortest Path Search \\ for Bulk and Incremental Detailed Routing}
\begin{document}

\maketitle

\blfootnote{
Corresponding Author: Stefan Rabenstein \\
Affiliation: Research Institute for Discrete Math., Hausdorff Center for Mathematics, University of Bonn. \\
E-mail Address: \texttt{rabenstein@dm.uni-bonn.de}
}

\begin{abstract}
We develop new algorithmic techniques for VLSI detailed routing.
First, we improve the goal-oriented version of Dijkstra's algorithm to find shortest paths
in huge incomplete grid graphs with edge costs depending on the direction and the layer, and possibly on rectangular regions. 
We devise estimates of the distance to the targets that offer better trade-offs between
running time and quality than previously known methods, leading to an overall speed-up.
Second, we combine the advantages of the two classical detailed routing approaches ---
global shortest path search and track assignment with local corrections --- 
by treating input wires (such as the output of track assignment) as reservations that can be used at a discount by the
respective net. We show how to implement this new approach efficiently. 
\end{abstract}

\section{Introduction}
\begin{figure}[t]
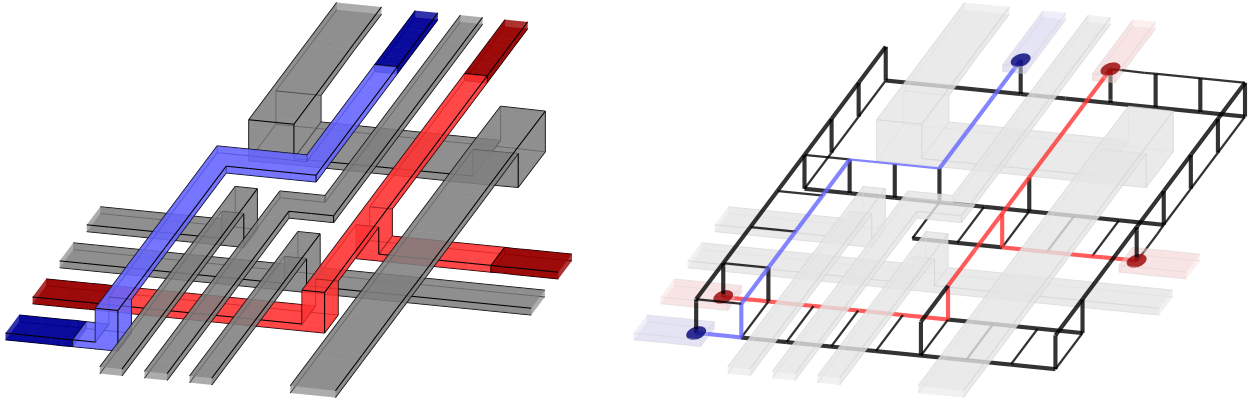

 \centering
 \begin{tikzpicture}[scale=0.6]
  \input{example}
 \end{tikzpicture}
 \hfill
\begin{tikzpicture}[scale=0.6]
  \input{example_graph}
 \end{tikzpicture}
 \caption{\label{fig:routedchipexample}Left: tiny part of a routed chip with two layers. The blue wires connect the two dark blue pins and the red wires connect the three dark red pins. Pins of the same color belong to the same net. The gray wires are part of the connections of multiple other nets with pins outside of the visible region.
 Right: the relevant part of the detailed routing graph before routing the blue and the red net. Steiner trees for the red net and the blue net.}
\end{figure}

The task of VLSI routing \cite{AlpertMehtaSapatnekar2008, Tellez2016}
is to connect the set of pins of every net on a chip by wires so that wires of different nets
are sufficiently far apart and various other constraints are met.
See Figure \ref{fig:routedchipexample} (left) for an example.
In a simple but useful model, we have a huge 3-dimensional grid graph 
(the \emph{detailed routing graph}), and the \emph{pins} are vertices in this graph.
Each \emph{net} is a set of pins and needs to be connected by a Steiner tree in the detailed routing graph.
The Steiner trees of distinct nets must be vertex-disjoint.
The detailed routing graph is induced by \emph{routing tracks}, which are pre-computed parallel lines on each layer. 
Every two routing tracks on adjacent layers are orthogonal to each other and induce one vertex on each of these tracks.
These two vertices are connected by an edge; 
metal connections along those edges (connecting adjacent layers) are called \emph{vias}. 
Depending on the manufacturing process, vertices on adjacent tracks of the same layer may also be connected by an edge.  

Typically, one first computes a \emph{global routing}, a rough packing of wires that ignores all local constraints
but guarantees that the wires in certain areas do not require more space than available.
Global routing allows for globally optimizing objectives such as power consumption and timing constraints
\cite{MuellerRadkeVygen2011,Heldetal2018}.

The output of global routing then restricts the search space for every net in detailed routing, where
many complicated rules need to be obeyed and one essentially routes one net at a time.
While the detailed routing graph on an entire chip can contain about
$10^{13}$ vertices on 10--20 layers, the restricted area corresponding to the global routing solution for a net
results in a much smaller detailed routing graph, with rarely more than $10^{8}$ vertices. 
Nevertheless, these subgraphs are still huge, and there are millions of nets to connect.
Two general strategies have been proposed (cf.\ \cite{Tellez2016}). 

The first approach is based on a fast subroutine to find a shortest path that connects two metal components, each of which
can consist of a pin or a set of previously computed wires connecting a subset of the pins of that net. 
The subgraph is given by the global routing solution, excluding vertices and edges that would result in a conflict to previously routed wires.
For an example of the resulting graph, see Figure \ref{fig:routedchipexample} (right).
To allow for an efficient packing of wires and to model various aspects such as signal delays, one uses  
different costs for horizontal and for vertical edges on each layer as well as for vias.

The second approach first considers the layers one after the other and assigns wires to routing tracks so that
the most important detailed routing rules are satisfied, at least for most wires. 
This is often called \emph{track assignment} \cite{sarrafzadeh1994restricted,batterywala2002track}.
Then detailed routing tries to correct violations locally. 
A very similar detailed routing problem occurs when a detailed routing has already been computed, but a few changes
to the input have been made (for example corrections of the logical behavior or to speed up signals that arrived too late).
In both cases, one asks for an \emph{incremental} detailed routing, largely following the input but deviating where necessary.
However, local corrections are often not possible if the routing is very dense.

Goal-oriented path search (sometimes called $A^*$) is a
classical speed-up technique of Dijkstra's shortest path algorithm \cite{dijkstra}. It is based on a feasible potential that estimates the distance to the targets \cite{HarNR68,Rub74,LawLP83}. 
Instead of the undirected graph with the original edge cost $c(e)$,
we orient each edge in both ways and run Dijkstra's algorithm with the \emph{reduced cost} 
$c_{\pi}(e):=c(e)-\pi(v)+\pi(w)$ for every edge $e$ directed from $v$ to $w$,
where the vertex potentials~$\pi$ are chosen so that $c_{\pi}$ is nonnegative and $\pi(t)=0$ for every target $t$.
These conditions imply that $\pi(v)$ is a lower bound on the distance between $v$ and the closest target.
The better this lower bound is, the fewer vertices this goal-oriented version of Dijkstra's algorithm must label before it knows a shortest path
to a target.

Hence, there is a trade-off between a possible preprocessing time, the time to compute
the potential of a vertex, and the quality of the lower bound.
For example, in subgraphs of unweighted grid graphs, the $\ell_1$-distance to the nearest target can be a reasonable choice for $\pi$ \cite{Hetzel1998}. 
A better estimate, which however requires substantial preprocessing, was suggested by \cite{PeyerRautenbachVygen2009}. 
In this paper, we propose new methods with better trade-offs than previously known.

Moreover, we combine the advantages of the two classical detailed routing approaches mentioned above.
Our new, more global approach treats given input wires (e.g., the output of track assignment) 
as so-called \emph{reservations}. A reservation for a net $N$ is a set of edges reserved for $N$ until $N$ is routed: no other net must use these edges.
We encourage, but not force, the detailed router to follow the reservations where feasible.
This is achieved by finding a shortest path where reservations of the currently routed net can be used at a discount 
(so we reduce the cost of reserved edges by a fixed factor smaller than 1).

However, this does not work well together with the classical goal-oriented techniques. For example,
if there are some reservations (edges) that can be used at a 50\,\% discount, 
the $\ell_1$-distance would have to be divided by 2 in order to induce a feasible potential.
This would often be a very inaccurate estimate, leading to an increased number of labels in Dijkstra's algorithm and hence larger running time.
We show that our better potentials make goal-oriented Dijkstra not only as fast as without reservations,
but in fact faster. Overall, this yields a new efficient incremental detailed routing algorithm.

\subsection{Problem statement}

\begin{figure}[b]
 \centering
   \resizebox{0.5\textwidth}{!}
  {
  \begin{tikzpicture}
   \newcommand{\yslant}{0.2} %0.5
   \newcommand{\xslant}{-0.5} %-1
   \newcommand{\yshift}{190} %379
   \newcommand{\opa}{0.6}

   % Variables for coordinates
   \newcommand{\xfirst}{0}
   \newcommand{\xione}{1}
   \newcommand{\xitwominus}{3}
   \newcommand{\xitwo}{4}
   \newcommand{\xithreeminus}{7}
   \newcommand{\xithree}{8}
   \newcommand{\xlast}{9}

   \newcommand{\yfirst}{0}
   \newcommand{\upsone}{1}
   \newcommand{\upstwominus}{2}
   \newcommand{\upstwo}{3}
   \newcommand{\upsthreeminus}{5}
   \newcommand{\upsthree}{6}
   \newcommand{\ylast}{7}

   % Layer 1
  \begin{scope}[
 yshift=-\yshift,every node/.append style={
 yslant=\yslant,xslant=\xslant},yslant=\yslant,xslant=\xslant
  ]
   \draw ($(\xlast + 1.5, \ylast / 2 + \yfirst / 2)$) node {\fontsize{20}{25}\selectfont layer $1$};
   \fill[black!20,fill opacity=\opa] (\xfirst,\yfirst) rectangle (\xlast,\ylast);

   % Hanan grid
   \foreach \x in { \xione, \xitwo, \xithree }
   {
     \draw[ultra thick, gray, dashed] (\x,\yfirst) -- (\x,\ylast);
   }
   \foreach \y in { \upsone, \upstwo, \upsthree }
   {
     \draw [ultra thick, gray, dashed] (\xfirst,\y) -- (\xlast,\y);
   }

   \draw (\xione,\yfirst-0.5) node {\fontsize{20}{25}\selectfont $\xi_1$};
   \draw (\xitwo,\yfirst-0.5) node {\fontsize{20}{25}\selectfont $\xi_2$};
   \draw (\xithree,\yfirst-0.5) node {\fontsize{20}{25}\selectfont $\xi_3$};

   \draw (\xfirst-0.5,\upsone) node {\fontsize{20}{25}\selectfont $\upsilon_1$};
   \draw (\xfirst-0.5,\upstwo) node {\fontsize{20}{25}\selectfont $\upsilon_2$};
   \draw (\xfirst-0.5,\upsthree) node {\fontsize{20}{25}\selectfont $\upsilon_3$};

   % horizontal red edges
   \foreach \x in {\xione, ..., \xitwominus}
   {
     \foreach \y in {\upstwo, ..., \upsthree}
     {
       \draw[red,ultra thick] (\x,\y) -- (\x+1,\y);
     }
   }
   % vertical red edges
   \foreach \x in {\xione, ..., \xitwo}
   {
     \foreach \y in {\upstwo, ..., \upsthreeminus}
     {
       \draw[red,ultra thick] (\x,\y) -- (\x,\y+1);
     }
   }
   % horizontal green edges
   \foreach \x in {\xione, ..., \xitwominus}
   {
     \foreach \y in {\upsone, ..., \upstwominus}
     {
       \draw[darkgreen,ultra thick] (\x,\y) -- (\x+1,\y);
     }
   }
   % vertical green edges
   \foreach \x in {\xione, ..., \xitwominus}
   {
     \foreach \y in {\upsone, ..., \upstwominus}
     {
       \draw[darkgreen,ultra thick] (\x,\y) -- (\x,\y+1);
     }
   }
   % horizontal blue edges
   \foreach \x in {\xitwo, ..., \xithreeminus}
   {
     \foreach \y in {\upsone, ..., \upstwo}
     {
       \draw[blue,ultra thick] (\x,\y) -- (\x+1,\y);
     }
   }
   % vertical blue edges
   \foreach \x in {\xitwo, ..., \xithree}
   {
     \foreach \y in {\upsone, ..., \upstwominus}
     {
       \draw[blue,ultra thick] (\x,\y) -- (\x,\y+1);
     }
   }
   % vertices
   \foreach \x in {\xfirst, ..., \xlast}
   {
     \foreach \y in {\yfirst, ..., \ylast}
     {
       \node[fill, circle, inner sep=1.5pt] at (\x,\y) {};
     }
   }
   
  \end{scope}
   
   %    Vias
  \begin{scope}[
 yshift=-\yshift,every node/.append style={
 yslant=\yslant,xslant=\xslant},yslant=\yslant,xslant=\xslant
  ]
  % blue vias
      \foreach \x in {\xitwo, ..., \xithree}
   {
     \foreach \y in {\upsone, ..., \upstwo}
     {
       \draw [blue, ultra thick] (\x,\y) -- ($(\x, \y) + 0.5*(6.675,13.35)$);
     }
   }
  % red vias
  \foreach \x in {\xione, ..., \xitwo}
  {
    \foreach \y in {\upstwo, ..., \upsthree}
    {
      \draw [red, ultra thick] (\x,\y) -- ($(\x, \y) + 0.5*(6.675,13.35)$);
    }
  }
   % green vias
   \foreach \x in {\xione, ..., \xitwo}
   {
     \foreach \y in {\upsone, ..., \upstwominus}
     {
       \draw [darkgreen, ultra thick] (\x,\y) -- ($(\x, \y) + 0.5*(6.675,13.35)$);
     }
   }
  \end{scope}
   
  % Layer 2
  \begin{scope}[
 yshift=0,every node/.append style={
 yslant=\yslant,xslant=\xslant},yslant=\yslant,xslant=\xslant
  ]
  \draw ($(\xlast + 1.5, \ylast / 2 + \yfirst / 2)$) node {\fontsize{20}{25}\selectfont layer $2$};
   \fill[black!20,fill opacity=\opa] (\xfirst,\yfirst) rectangle (\xlast,\ylast);

   % Hanan grid
   \foreach \x in { \xione, \xitwo, \xithree }
   {
     \draw[ultra thick, gray, dashed] (\x,\yfirst) -- (\x,\ylast);
   }
   \foreach \y in { \upsone, \upstwo, \upsthree }
   {
     \draw [ultra thick, gray, dashed] (\xfirst,\y) -- (\xlast,\y);
   }

   % horizontal violet edges
   \foreach \x in {\xione, ..., \xitwominus}
   {
     \foreach \y in {\upstwo, ..., \upsthree}
     {
       \draw[violet,ultra thick] (\x,\y) -- (\x+1,\y);
     }
   }
   % vertical violet edges
   \foreach \x in {\xione, ..., \xitwo}
   {
     \foreach \y in {\upstwo, ..., \upsthreeminus}
     {
       \draw[violet,ultra thick] (\x,\y) -- (\x,\y+1);
     }
   }
   % horizontal orange edges
   \foreach \x in {\xitwo, ..., \xithreeminus}
   {
     \foreach \y in {\upsone, ..., \upstwo}
     {
       \draw[orange,ultra thick] (\x,\y) -- (\x+1,\y);
     }
   }
   % vertical orange edges
   \foreach \x in {\xitwo, ..., \xithree}
   {
     \foreach \y in {\upsone, ..., \upstwominus}
     {
       \draw[orange,ultra thick] (\x,\y) -- (\x,\y+1);
     }
   }
   % horizontal grey edges
   \foreach \x in {\xione, ..., \xitwominus}
   {
     \foreach \y in {\upsone, ..., \upstwo}
     {
       \draw[grey,ultra thick] (\x,\y) -- (\x+1,\y);
     }
   }
   % vertical grey edges
   \foreach \x in {\xione, ..., \xitwo}
   {
     \foreach \y in {\upsone, ..., \upstwominus}
     {
       \draw[grey,ultra thick] (\x,\y) -- (\x,\y+1);
     }
   }

   % vertices
   \foreach \x in {\xfirst, ..., \xlast}
   {
     \foreach \y in {\yfirst, ..., \ylast}
     {
       \node[fill, circle, inner sep=1.5pt] at (\x,\y) {};
     }
   }
   
  \end{scope}
  \end{tikzpicture}
  }
  \caption 
  {
    \label{fig:generalcostmodel}
    Example of the general cost model with $p = 3$, $q = 3$, and $l = 2$.
    The \textcolor{red}{red} edges form the set~$\textcolor{red}{E_1^{12}}$.
    The \textcolor{blue}{blue} horizontal edges have cost $\textcolor{blue}{c_1^{21\leftrightarrow}}$,
    the \textcolor{blue}{blue} vertical edges have cost $\textcolor{blue}{c_1^{21\updownarrow}}$,
    and the \textcolor{blue}{blue} via edges have cost $\textcolor{blue}{c_{1,2}^{21}}$.
    The cost of the \textcolor{blue}{blue} vertical edges on $\xi_2$ is given by
    $\min\{\textcolor{darkgreen}{c_1^{11\updownarrow}}, \textcolor{blue}{c_1^{21\updownarrow}}\}$,
    which is $\textcolor{blue}{c_1^{21\updownarrow}}$ in this example.
    The cost of the \textcolor{darkgreen}{green} via edges on $\xi_2$ is
    $\min\{\textcolor{darkgreen}{c_{1,2}^{11}}, \textcolor{blue}{c_{1,2}^{21}}\}$,
    which is $\textcolor{darkgreen}{c_{1,2}^{11}}$ here.
    The edges that are not drawn have cost infinity because, in this example, they lie outside the area corresponding to the global routing solution.
  }
\end{figure}

Our core problem will consist of computing distances in a weighted grid graph with a simple structure.
To define the grid graph,
we number the layers $1,\ldots,l$ and let $V=\mathbb{Z}\times\mathbb{Z}\times\{1,\ldots,l\}$
and $$E=\left\{\{(x,y,z),(x',y',z')\}\in{\textstyle\binom{V}{2}} \mid |x-x'|+|y-y'|+|z-z'|=1\right\}$$ be the vertex set and edge set
of an infinite grid with $l$ layers. Edges connecting adjacent layers are called \emph{vias}, edges in x-direction are \emph{horizontal} and edges in y-direction \emph{vertical}.
We will consider finite subgraphs of $G=(V,E)$.
These subgraphs correspond to the area defined by the global routing solution and
to the restriction to the routing tracks that can be used for the current net. 
Often, many vertices of these subgraphs will have degree 2 and will not be considered explicitly,
but we ignore this here for the sake of a simpler exposition. 

Every layer has a preference direction ($\leftrightarrow$ or $\updownarrow$, the direction of its tracks); 
edges in the other direction are more expensive or sometimes even forbidden, depending on the manufacturing process.
Horizontal and vertical layers alternate.
Moreover, the layers have very different electrical properties, which is reflected by appropriate edge costs. 
In the simplest model, the cost of an edge depends only on its direction and the layer: 
let $c_{z}^{\leftrightarrow},c_{z}^{\updownarrow}>0$ for $z\in\{1,\ldots,l\}$ and $c_{z,z+1}>0$ for $z\in\{1,\ldots,l-1\}$; then
$$c(\{(x,y,z),(x',y',z')\}) = \begin{cases} 
c_{z}^{\leftrightarrow} & \text{if } x'=x+1 \\[1mm] 
c_{z}^{\updownarrow} & \text{if } y'=y+1 \\ 
c_{z,z'} & \text{if } z'=z+1 
\end{cases}.$$

In a more general model, a rectilinear grid induces rectangular regions,
called \emph{tiles}, and the cost also depends on the tile.
Let $$-\infty=\xi^0<\xi^1\leq\ldots\leq\xi^p<\xi^{p+1}=\infty, \qquad-\infty=\upsilon^0<\upsilon^1\leq\ldots\leq\upsilon^q<\upsilon^{q+1}=\infty$$ 
be integer coordinates that define the rectangular tiles
$$V^{ij}_z=\left\{(x,y,z)\in V \mid \xi^{i}\le x\le\xi^{i+1},\, \upsilon^{j}\le y\le\upsilon^{j+1}\right\},$$
and set
$$E^{ij}_{z}=\left\{\{(x,y,z),(x',y',z')\}\in E \mid \xi^{i} \le x \le x' \le \xi^{i+1},\, \upsilon^{j} \le y \le y' \le \upsilon^{j+1},\, z\le z'\right\}.$$
Now we have costs $c_{z}^{ij\leftrightarrow},c_{z}^{ij\updownarrow},c_{z,z+1}^{ij}>0$ that also depend on the tile 
and define the edge costs accordingly. If an edge belongs to more than one tile, the minimum cost applies. See Figure~\ref{fig:generalcostmodel} for an example.
We allow that two (but not three) consecutive coordinates are identical, i.e., $\xi^i = \xi^{i+1}$ or $\upsilon^j = \upsilon^{j+1}$, in order to model a cheap cost at one x- or y-coordinate only.

With this more general model, one can, for example, punish wires on low layers near the electrical source of a net
(which would lead to poor delays) or implement a discount on reservations as we will describe in detail in Section~\ref{sec:reservations}. 
Moreover, we can set edge costs to infinity outside the area corresponding to the global routing solution so that
the distances in $(G,c)$ reflect necessary detours that are implied by routing in this subgraph. 

Given a finite subgraph $G'=(V',E')$ of $G$ and sets $S,T\subseteq V'$, we look for a shortest (minimum-cost) path from $S$ to $T$
in $G'$ with respect to the cost function $c$.
The graph $G'$ does normally not contain vertices and edges whose use  would result in a conflict to nets routed previously
(an exception will be described at the end of Section~\ref{subsection:implementation}), and it can have
a very complicated structure.
For a goal-oriented path search, we define a potential $\pi(v)$ for every vertex $v\in V'$ by
the distance to $T$ in $G$ (instead of $G'$):
$$\pi(v) := \dist_{(G,c)}(v,T).$$
The idea is that distances in $G$ are much easier to compute than in the subgraph $G'$ (we will see how fast),
but often still give a good lower bound. The reason is that $(G,c)$ has a simple structure, given by the tiles.
 
This allows us to use Dijkstra's algorithm with the reduced costs $c_{\pi}$ in the digraph resulting from $G'$ by orienting every edge in both ways,
since the reduced costs are nonnegative.
Indeed, we have $c_{\pi}((v,w))=c(e)-\pi(v)+\pi(w)=c(e)-\dist_{(G,c)}(v,T)+\dist_{(G,c)}(w,T)\ge 0$ for all $e=\{v,w\}\in E$.
After introducing a super-source $\bar s$ and arcs $(\bar s,s)$ of cost 0 for all $s\in S$, with $\pi(\bar s):=\min\{\pi(s):s\in S\}$,   
every path $P$ from $\bar s$ to $T$ satisfies $c_{\pi}(P)=c(P)-\pi(\bar s) $, so shortest $\bar s$-$T$-paths with respect to $c_{\pi}$
are shortest $\bar s$-$T$-paths (and correspond to shortest $S$-$T$-paths) with respect to $c$.

The better the lower bound $\pi$ on the distance to $T$ in $(G',c)$ is, the fewer vertices will be labeled by Dijkstra's algorithm with reduced costs  $c_{\pi}$;
more precisely all vertices $v$ with $\dist_{(G',c)}(S,v)+\pi(v)<\dist_{(G',c)}(S,T)$ will be processed.

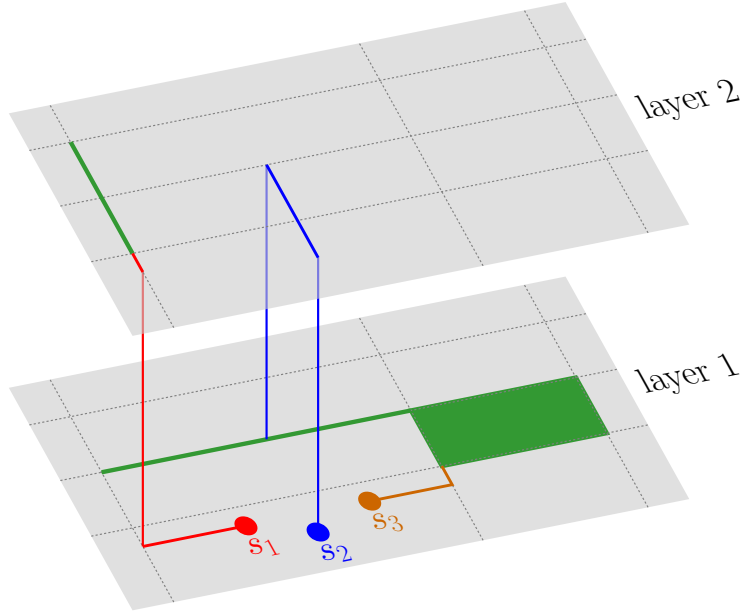
\begin{figure}[b]
 \centering
   \resizebox{0.6\textwidth}{!}
  {
  \begin{tikzpicture}
   \newcommand{\yslant}{0.2} %0.5
   \newcommand{\xslant}{-0.5} %-1
   \newcommand{\yshift}{379}
   \newcommand{\opa}{0.6}
  \begin{scope}[
 yshift=-\yshift,every node/.append style={
 yslant=\yslant,xslant=\xslant},yslant=\yslant,xslant=\xslant
  ]
   \draw (35, 11) node {\fontsize{50}{60}\selectfont layer $1$};
   \fill[black!20,fill opacity=\opa] ( 5,  5) rectangle (32, 17);  
   %target areas
   \fill [darkgreen!80] (22,  9) rectangle (30, 12);
   \draw [darkgreen!80, line width=6pt] (22,  9) rectangle (30, 12);
   
   %Hanan grid
   \foreach \x in { 7, 22, 30 }
   {
    \draw[ultra thick, gray, dashed] (\x,5) -- (\x,17);
   }
   \foreach \y in { 9, 12, 15 }
   {
    \draw [ultra thick, gray, dashed] (5,\y) -- (32,\y);
   }
   
   %query location
   \fill [ red] (12.0,  8.0) circle (0.5cm);
   \draw [ red] (12.2,  6.8) node {\fontsize{50}{60}\selectfont $s_1$};
   \fill [blue] (15.0,  7.0) circle (0.5cm);
   \draw [blue] (15.2,  5.8) node {\fontsize{50}{60}\selectfont $s_2$};
   \fill [orange] (18.0,  8.0) circle (0.5cm);
   \draw [orange] (18.2,  6.8) node {\fontsize{50}{60}\selectfont $s_3$};
   
   %part of shortest path
   \draw [ red, line width = 4pt] ( 7,  8) -- (12,  8);
   \draw [orange, line width = 4pt] (18,  8) -- (22, 8) -- (22,  9);

   %target 
   \draw [darkgreen!80, line width = 6pt] ( 7, 12) -- (22, 12);
  \end{scope}
   
   %    Vias
  \begin{scope}[
 yshift=-\yshift,every node/.append style={
 yslant=\yslant,xslant=\xslant},yslant=\yslant,xslant=\xslant
  ]
   \draw [ red, line width=3pt] ( 7,  8) -- ($( 7,  8) + (6.675,13.35)$);
   \draw [blue, line width=3pt] (15,  7) -- ($(15,  7) + (6.675,13.35)$);
   \draw [blue, line width=3pt] (15, 12) -- ($(15, 12) + (6.675,13.35)$);
  \end{scope}
   
  \begin{scope}[
 yshift=0,every node/.append style={
 yslant=\yslant,xslant=\xslant},yslant=\yslant,xslant=\xslant
  ]
   \draw (35, 11) node {\fontsize{50}{60}\selectfont layer $2$};
   \fill[black!20,fill opacity=\opa] ( 5,  5) rectangle (32, 17);  
   %target areas
   %\fill [darkgreen!80] (22, 9) rectangle (30, 12);
   
   %Hanan grid
   \foreach \x in { 7, 22, 30 }
   {
    \draw[ultra thick, gray, dashed] (\x,5) -- (\x,17);
   }
   \foreach \y in { 9, 12, 15 }
   {
    \draw [ultra thick, gray, dashed] (5,\y) -- (32,\y);
   }
   
   %target 
   \draw [darkgreen!80, line width = 6pt] (7,  9) -- (7, 15);
   
   %part of shortest path
   \draw [blue, line width = 4pt] (15,  7) -- (15, 12);
   \draw [ red, line width = 4pt] ( 7,  8) -- ( 7,  9);
  \end{scope}
  \end{tikzpicture}
  }
  \caption 
  {
    \label{fig:refinedgrid}
    A target set \textcolor{darkgreen}{$T$} consisting of \textcolor{darkgreen}{three target rectangles} on two layers.
    The coarsest partition into tiles that is consistent with $T$ is shown by the dotted lines.
    The figure also shows three possible locations
    (\textcolor{red}{$s_1$}, \textcolor{blue}{$s_2$}, and \textcolor{orange}{$s_3$})
    for which we might be interested in the distance to the closest target.
    The shortest \textcolor{red}{$s_1$-$T$-path}, the shortest \textcolor{blue}{$s_2$-$T$-path}
    and the shortest \textcolor{orange}{$s_3$-$T$-path}
    for a cost function depending only on direction and layer are shown.
    In this example,
    the preference directions of layer $1$ and $2$ are $\leftrightarrow$ and $\updownarrow$, respectively.
    Nevertheless it is cheapest for the \textcolor{orange}{$s_3$-$T$-path}
    to stay on layer $1$ since its vertical segment is very short.
  }
\end{figure}

The target set $T$ can consist of a single vertex (corresponding to a pin), but a pin sometimes covers more than one vertex,
and when constructing Steiner trees from paths we often want to connect to a connected component that contains wires and more than one pin.
We assume that $T$ is represented as the union of~$t$ rectangles,
where a rectangle is a vertex set of the form $\{(x,y,z)\in V \mid \xi^-\le x\le\xi^+,\, \upsilon^-\le y\le\upsilon^+,\, z=\zeta\}$
for some $\xi^-,\xi^+,\upsilon^-,\upsilon^+\in\mathbb{Z}$ and $\zeta\in\{1,\ldots,l\}$.
Often $t$ is small in practice.
While we can deal with complicated target sets, some of our algorithms work best for simple targets (i.e., small $t$).

Sometimes it will be useful to assume that this representation is \emph{consistent} with the partition of $V$ into tiles in the following sense:
each of the $t$ rectangles representing $T$ fits into the grid, i.e., is of the form $\{(x,y,z)\in V \mid \xi^{i^-}\le x\le\xi^{i^+},\, \upsilon^{j^-}\le y\le\upsilon^{j^+},\, z=\zeta\}$
for some indices $i^-$, $i^+$, $j^-$ and $j^+$.
This can be achieved by adding at most $2t$ new x-coordinates~$\xi^i$ and at most $2t$ new y-coordinates~$\upsilon^i$.
We call this procedure \emph{refining the grid with respect to the targets}.
See Figure~\ref{fig:refinedgrid} for an example of the empty grid (i.e., $p=q=0$) refined with respect to several target rectangles.

\subsection{Previous work and our results}

In the simple model without regions (i.e., for $p=q=0$), one can query $\pi$ (i.e., evaluate $\pi(v)$ for a given query location $v\in V$) easily 
in $O(t l^2)$ time without preprocessing; see Proposition~\ref{prop:futurecost_l2_time}.
We show that this can be reduced to $O(t l)$; see Theorem~\ref{thm:futurecost_l_time}.
With a preprocessing time polynomial in $t$ and $l$, we obtain a query time of $O(\log (t + l))$; see Theorem~\ref{thm:futurecost_log_l_time}.
These results will be presented in Sections~\ref{section:NoPreprocessing} and \ref{section:LogarithmicQuery}.

For the more general model, which is the subject of Section~\ref{sec:general_model}, 
Peyer et al.\ \cite{PeyerRautenbachVygen2009} refined the grid with respect to the targets
and showed that then the restriction of $\pi:V\to\mathbb{R}_{\ge 0}$ to $V^{ij}_z$
is the minimum of $k^2$ affine functions for any $i,j,z$, where $k$ is
the number of different horizontal and vertical edge costs, i.e.,
%$k \coloneqq |C^\leftrightarrow| \cdot |C^\updownarrow|$ with
\begin{equation}
\label{eq:def_k}
k \coloneqq \left| \left\{c_{z}^{ijd}\mid
i \in \{0, \dots, p\}, j \in \{0, \dots, q\}, d \in \{\leftrightarrow, \updownarrow\}, z \in \{1, \dots, l\}\right\} \right|. 
\end{equation}
%(In fact instead of $k^2$ they proved as upper bound the possible combinations of costs in horizontal and vertical direction.)
They also showed that all these functions can be computed in
$O((p+t)(q+t)lk^4\log (p+q+l+t))$ time,
allowing $O(k^2)$ time queries after this preprocessing (plus $O(\log (p+q+t))$
to find the region containing the given vertex by binary search; 
here and henceforth $p$ and $q$ refer to the original number of rows and columns, before refining the grid).

We make multiple improvements over the approach of Peyer et al.\ \cite{PeyerRautenbachVygen2009}. 
By considering domination between affine functions with different slopes, we reduce the number of affine functions that are needed to describe the minimum. 
By first computing the distances from the edges on the boundaries of the tiles to the targets, we can compute these affine functions faster. 
Finally, we use a regional query data structure to reduce query time. For any $0 < \epsilon \leq 1$, we can obtain an algorithm with preprocessing time $O((p+t)(q+t)\min\{k,(p+q+1)l\}l^{1+\epsilon}\frac{1}{\epsilon}\log(p+q+l+t))$ and query time $O(\log(p + q + t) + \frac{1}{\epsilon} \log(k + l))$.
See Table~\ref{table:overview} for an overview.

\renewcommand*{\arraystretch}{1.2}
\begin{table}
\small
\begin{center}
\setlength{\tabcolsep}{3pt}
\begin{tabular}{llll}
Model & Preprocessing time & Query time & Reference \\
\hline
simple & -- & $O(tl^2)$ & Proposition~\ref{prop:futurecost_l2_time} \\
simple & -- & $O(tl)$ & Theorem~\ref{thm:futurecost_l_time} \\
simple & $O(t^2l^3\log l)$ & $O(\log (t+l))$ & Theorem~\ref{thm:futurecost_log_l_time} \\
general & $O((p+t)(q+t)lk^4\log (p+q+l+t))$ & $O(\log(p+q+t)+k^2)$ & \cite{PeyerRautenbachVygen2009} \\
general & $O((p+t)(q+t)\min\{k,(p+q+1)l\}l^{1+\epsilon}\frac{1}{\epsilon}\log (p+q+l+t))$ &$O(\log(p+q+t) + \frac{1}{\epsilon}\log(k+l))$ & Corollary~\ref{thm:futurecost_stefan} \\
\end{tabular}
\end{center}
\caption{Various methods to compute $\pi(v)$, possibly after preprocessing. 
The running times depend on the number $t$ of target rectangles, the number $l$ of layers, and in the general model
on the numbers $p$ and $q$ of coordinates that define the $(p+1)(q+1)$ regions, and
on the number $k$ of different horizontal and vertical edge costs (cf.\ \eqref{eq:def_k}). 
Note that $k\le 2(p+1)(q+1)l$.
\label{table:overview}}
\end{table}
\renewcommand*{\arraystretch}{1}
  
Our second contribution is a new approach to incremental routing. 
Rather than trying to correct a given infeasible input routing with local transformations only, 
we compute a new routing from scratch, at least for all nets for which
the input routing does not obey all rules. 
However, in an incremental routing setting most wires will be legal, i.e., do not have a conflict with any other wire.
In order to compute a solution similar to the input where reasonable,
we reserve the space occupied by legal input wires for the respective net and allow to use edges corresponding to
input wires at a discount.
By letting each input wire be a separate tile $V^{ij}_z$, we can model the discount in the cost function $c$
and work with reduced costs efficiently. When most of the input routing can be used, we can find a shortest path 
much faster than without a discount. 

This makes this new approach not only useful for incremental routing, but also for bulk routing. Treating the output
of a track assignment as reservations (wherever it is legal) and then pursuing our new incremental routing approach
can combine the advantages of the two classical bulk routing approaches, 
successive shortest paths and track assignment with local corrections.
We explain our new approach in detail in Section~\ref{section:practical}, where we also show experimental results.

\section{Distances without preprocessing in the simple model}
\label{section:NoPreprocessing}

In the simple model, there is always a shortest path with a very simple structure:

\begin{lemma}
\label{lemma:onlyonehorizontalandonevertical}
Let $c:E\to\mathbb{R}_{>0}$ depend only on direction and layer, and let $r,s\in V$. 
Then there is a shortest path $P$ between $r$ and $s$ in $(G,c)$ that consists of at most
one sequence of horizontal edges, at most one sequence of vertical edges, and hence at most three sequences of vias.
\end{lemma}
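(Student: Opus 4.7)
The plan is to start from an arbitrary shortest path $P$ from $r=(x_r,y_r,z_r)$ to $s=(x_s,y_s,z_s)$ and explicitly construct a path $P'$ of the claimed form whose cost does not exceed that of $P$; since $P$ is shortest, $P'$ must be shortest as well. Set $h := |x_s - x_r|$ and $v := |y_s - y_r|$. Whenever $h>0$, choose $z_h$ to minimize $c_z^{\leftrightarrow}$ over all layers $z$ on which $P$ uses at least one horizontal edge; define $z_v$ analogously from the vertical edges of $P$ (using a dummy layer such as $z_r$ when the corresponding displacement vanishes). I will consider two candidate structured paths: $P_1$ takes vias from $z_r$ directly to $z_h$, then $h$ horizontal edges on layer $z_h$, then vias directly from $z_h$ to $z_v$, then $v$ vertical edges on layer $z_v$, then vias directly to $z_s$; and $P_2$ does the same with the horizontal and vertical blocks interchanged. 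Let $P'$ be whichever of $P_1,P_2$ is cheaper; by construction both manifestly have at most one horizontal sequence, at most one vertical sequence, and at most three via sequences.

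The horizontal and vertical cost comparisons are immediate. Let $h_z$ denote the number of horizontal edges of $P$ on layer $z$. Then $\sum_z h_z \geq h$, and by the choice of $z_h$, every term with $h_z>0$ satisfies $c_z^{\leftrightarrow}\geq c_{z_h}^{\leftrightarrow}$, so the total horizontal contribution to $\mathrm{cost}(P)$ is at least $h\cdot c_{z_h}^{\leftrightarrow}$, which is exactly the horizontal contribution of $P'$. The analogous inequality holds for vertical edges.

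The main obstacle is the via cost comparison. Writing $\mathrm{via}(a,b) := \sum_{\min(a,b)\leq z<\max(a,b)} c_{z,z+1}$, the via cost of $P_1$ is $\mathrm{via}(z_r,z_h)+\mathrm{via}(z_h,z_v)+\mathrm{via}(z_v,z_s)$, and that of $P_2$ is the analogous expression with $z_h$ and $z_v$ swapped. I want to show that the via cost of $P$ dominates the smaller of the two. Projecting $P$ onto its layer coordinate yields a walk $W$ from $z_r$ to $z_s$ in the one-dimensional path graph on $\{1,\dots,l\}$ whose total cost equals the via cost of $P$; since $P$ uses horizontal edges on $z_h$ and vertical edges on $z_v$, $W$ visits both layers. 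Whichever of $z_h,z_v$ is reached first by $W$ partitions $W$ into three consecutive sub-walks whose individual costs are each bounded below by the corresponding $\mathrm{via}(\cdot,\cdot)$, because any walk from $a$ to $b$ in the one-dimensional path graph must cross every via edge strictly between $a$ and $b$ at least once and all via costs are positive. Summing the three lower bounds yields the via cost of either $P_1$ or $P_2$, and hence that of $P'$. Combining this with the horizontal and vertical bounds gives $\mathrm{cost}(P')\leq\mathrm{cost}(P)$, so $P'$ is a shortest path with the required structure.
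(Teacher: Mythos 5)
Your argument is correct, but it is a genuinely different proof from the one in the paper. The paper uses an exchange argument: starting from an arbitrary shortest path, it takes two maximal same-direction subpaths, translates and swaps them to merge them into one, and concludes by induction on the number of such subpaths. You instead construct a canonical candidate path directly --- using the cheapest horizontal layer $z_h$ and cheapest vertical layer $z_v$ actually touched by $P$ --- and bound each of the three cost components (horizontal, vertical, via) separately. Your via argument, projecting $P$ onto the layer coordinate and splitting the resulting one-dimensional walk at the first visits of $z_h$ and $z_v$, is the non-obvious step, and it is sound: any walk between two layers in a path graph must cross every intermediate via edge. One small omission: your $P_1$ or $P_2$ may in degenerate cases (e.g.\ $h=0$ or $v=0$ with non-monotone layer sequence $z_r, z_h, z_v, z_s$) be a walk that revisits a vertex; you should add one sentence noting that shortcutting yields a path of no greater cost and the same structure, as the paper does for its own walk. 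As for what each approach buys: yours is more explicit and essentially already contains the enumeration used algorithmically in Proposition~\ref{prop:futurecost_l2_time}; the paper's local exchange argument, on the other hand, adapts with almost no change to the general tile-dependent cost model (this is precisely how Lemma~\ref{lemma:pathstructure} is proved), whereas your global choice of a single cheapest $z_h$ and $z_v$ would not survive tile-dependent costs.
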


\begin{proof}
Let $P$ be a shortest path, and let
$P_{[v,w]}$ and $P_{[v',w']}$ be two maximal subpaths of $P$ in the same direction (all-horizontal or all-vertical), 
say from $v$ to $w$ and from $v'$ to $w'$, respectively, and let $P_{[w,v']}$ be the subpath in between.
Suppose, without loss of generality, that $P_{[v,w]}$ and $P_{[v',w']}$ are horizontal paths and that the cost of an edge of $P_{[v,w]}$ is not more expensive 
than the cost of an edge of $P_{[v',w']}$ (note that these paths may be on different layers).
Then translating $P_{[v',w']}$ by adding $w-v'$ to all its vertices, translating $P_{[w,v']}$ by adding $w'-v'$ to all its vertices,
and swapping these two paths in $P$ yields a walk from $r$ to $s$ 
with one maximal horizontal subpath less and at most the same number of maximal vertical subpaths.
Moreover, the cost does not increase. If the walk is not a path, we can shortcut it to a path.
By induction, the assertion follows.
\end{proof}

Hence, in order to compute a shortest path, we can enumerate the layers on which the horizontal sequence and the vertical sequence are, and which of the two comes first:

\begin{proposition}
\label{prop:futurecost_l2_time}
Let $c:E\to\mathbb{R}_{>0}$ depend only on direction and layer.
Then, without preprocessing, one can compute $\dist_{(G,c)}(s,T)$ for any given $s\in V$ and given $T\subseteq V$ consisting of $t$ rectangles in $O(tl^2)$ time.
\end{proposition}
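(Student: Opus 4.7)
Plan: The strategy is to apply Lemma~\ref{lemma:onlyonehorizontalandonevertical} and simply enumerate over all path shapes it allows. Write $s=(s_x,s_y,s_z)$. For each of the $t$ rectangles $R=\{(x,y,z)\in V\mid \xi^-\le x\le\xi^+,\,\upsilon^-\le y\le\upsilon^+,\, z=\zeta\}$ representing $T$, any $s$-$R$-path has horizontal and vertical displacement at least $\Delta_x(R):=\max(0,\xi^--s_x,\,s_x-\xi^+)$ and $\Delta_y(R):=\max(0,\upsilon^--s_y,\,s_y-\upsilon^+)$, respectively, and this lower bound is attained by choosing as endpoint $t^*\in R$ the projection of $(s_x,s_y)$ onto $[\xi^-,\xi^+]\times[\upsilon^-,\upsilon^+]$; any further movement inside $R$ strictly increases cost since all edge costs are positive. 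Combining this with Lemma~\ref{lemma:onlyonehorizontalandonevertical}, $\dist_{(G,c)}(s,R)$ is the minimum, over all choices of a horizontal layer $z_h$, a vertical layer $z_v$ and an ordering of the two segments, of a sum of three via costs (successively connecting $s_z$, the first segment's layer, the second segment's layer, and $\zeta$) plus $\Delta_x(R)\cdot c^{\leftrightarrow}_{z_h}$ plus $\Delta_y(R)\cdot c^{\updownarrow}_{z_v}$; a zero horizontal (resp.\ vertical) displacement means the corresponding term and the associated layer choice are dropped.

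First I would precompute, in $O(l)$ time, a cumulative via-cost array $\Sigma(z):=\sum_{z'=1}^{z-1}c_{z',z'+1}$, so that the via cost between any two layers $a,b\in\{1,\ldots,l\}$ is obtained in $O(1)$ time as $|\Sigma(b)-\Sigma(a)|$. Then I would iterate through the $t$ target rectangles; for each rectangle $R$ I compute $\Delta_x(R),\Delta_y(R),\zeta$ in $O(1)$ and evaluate the cost expression above for each of the $O(l^2)$ choices of $(z_h,z_v)$ and for each of the two orderings, each evaluation taking $O(1)$ time with the help of $\Sigma$. Taking the minimum over all $(R,z_h,z_v,\text{ordering})$ yields $\dist_{(G,c)}(s,T)$ in total time $O(l)+O(tl^2)=O(tl^2)$, where the $O(l)$ preamble is absorbed whenever $t\ge 1$ (and $\dist_{(G,c)}(s,\emptyset)=\infty$ is trivial).

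The hard part is really only the setup, i.e., verifying that this finite enumeration covers every path shape permitted by Lemma~\ref{lemma:onlyonehorizontalandonevertical}. This reduces to the two observations already made: (i)~the endpoint within $R$ can be taken to be the coordinate-wise projection of $s$, forcing the horizontal/vertical displacements to be exactly $\Delta_x(R)$ and $\Delta_y(R)$; and (ii)~once the two layers $z_h,z_v$ and the ordering of the horizontal and vertical segments are fixed, the three via sub-sequences must connect $s_z,z_h,z_v,\zeta$ in the order dictated by the path, so their total length is uniquely determined and given by the appropriate differences of $\Sigma$-values. The only remaining bookkeeping is the degenerate case analysis (paths without a horizontal segment, paths without a vertical segment, and the case $s_z=\zeta$), which is routine.
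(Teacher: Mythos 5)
Your proof is correct and takes essentially the same approach as the paper's: enumerate over the $t$ target rectangles, pick the geometrically closest point in each, then enumerate the $O(l^2)$ choices of horizontal/vertical layer together with the two orderings, appealing to Lemma~\ref{lemma:onlyonehorizontalandonevertical} for coverage. Your explicit cumulative via-cost array $\Sigma$ and the discussion of degenerate cases merely spell out bookkeeping that the paper leaves implicit.
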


\begin{proof}
Enumerate over all $t$ rectangles that $T$ is composed of. 
For each such rectangle $R$, we can determine the vertex $r\in R$ that is closest to $s$ (geometrically) in constant time.
Then, for each pair of layers $z^\leftrightarrow,z^\updownarrow\in\{1,\ldots,l\}$, we consider two paths that connect $r$ and $s$.
The first one is composed of the path of vias that goes from $r$ to layer $z^\leftrightarrow$, 
followed by the horizontal path that goes to the x-coordinate of $s$,
followed by the path of vias  that goes to layer $z^\updownarrow$, 
followed by the vertical path that goes to the y-coordinate of $s$,
followed by the path of vias  that goes to $s$. Some of these subpaths can be empty.
The second one is constructed analogously, swapping the roles of $r$ and $s$. By Lemma~\ref{lemma:onlyonehorizontalandonevertical}, 
one of these $2l^2$ paths must be optimal.
\end{proof}

We now show how to improve on this, obtaining a linear dependence on the number of layers:

\begin{theorem}
\label{thm:futurecost_l_time}
Let $c:E\to\mathbb{R}_{>0}$ depend only on direction and layer.
Then, without preprocessing, one can compute $\dist_{(G,c)}(s,T)$ for any given $s\in V$ and given $T\subseteq V$ consisting of $t$ rectangles in $O(tl)$ time.
\end{theorem}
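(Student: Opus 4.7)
The plan is to keep the reduction of Proposition~\ref{prop:futurecost_l2_time} from rectangles to closest vertices, but replace the $O(l^2)$ enumeration of layer pairs by an $O(l)$ dynamic program. As in that proof, for each of the $t$ rectangles $R$ I pick the geometrically closest $r\in R$ to $s$ in constant time, so it suffices to compute $\dist_{(G,c)}(s,r)$ in $O(l)$ time.

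By Lemma~\ref{lemma:onlyonehorizontalandonevertical}, a shortest $s$-$r$-path has the form vias, horizontal run on some layer $z^\leftrightarrow$, vias, vertical run on some layer $z^\updownarrow$, vias, or the same with the two runs swapped. Writing $V(a,b):=\sum_{z=\min(a,b)}^{\max(a,b)-1} c_{z,z+1}$, the horizontal-first cost is
\[
V(z_s,z^\leftrightarrow) + |x_r-x_s|\,c_{z^\leftrightarrow}^\leftrightarrow + V(z^\leftrightarrow,z^\updownarrow) + |y_r-y_s|\,c_{z^\updownarrow}^\updownarrow + V(z^\updownarrow,z_r),
\]
and the vertical-first case is symmetric. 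The quantities $V(z_s,\cdot)$ and $V(\cdot,z_r)$ are prefix sums, so they are available in $O(l)$ preprocessing per query.

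To avoid the product over $z^\leftrightarrow$ and $z^\updownarrow$, I plan a two-stage min-plus computation. First set $a(z):=V(z_s,z)+|x_r-x_s|\,c_z^\leftrightarrow$ for all $z$, which is $O(l)$. Next compute $b(z):=\min_{z'}\{a(z')+V(z',z)\}$. The key step is that this min-plus convolution with the one-dimensional via-distance function can be done by two linear sweeps: let $b_L(z)$ satisfy $b_L(z)=\min\{a(z),\,b_L(z-1)+c_{z-1,z}\}$ and $b_R(z)=\min\{a(z),\,b_R(z+1)+c_{z,z+1}\}$, then $b(z)=\min\{b_L(z),b_R(z)\}$. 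Then set $d(z):=b(z)+|y_r-y_s|\,c_z^\updownarrow$ and, by the same two-sweep trick, take $\min_{z'}\{d(z')+V(z',z_r)\}$. All four passes are $O(l)$, so the horizontal-first distance is obtained in $O(l)$; the vertical-first case is handled identically with the roles of $\leftrightarrow$ and $\updownarrow$ swapped, and we return the smaller of the two.

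The only real obstacle is to verify that the two sweeps indeed realise $\min_{z'}\{f(z')+V(z',z)\}$ for any function $f$; this follows because $V(z',z)$ equals the sum of the $c_{\cdot,\cdot+1}$ strictly between $z'$ and $z$, so the minimum over $z'\le z$ obeys the stated recursion for $b_L$ and analogously for $z'\ge z$. Combining the $O(l)$ per-vertex distance with the outer loop over the $t$ rectangles gives the claimed $O(tl)$ bound.
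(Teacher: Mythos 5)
Your proof is correct and follows essentially the same strategy as the paper's: reduce to a per-rectangle distance computation to the closest vertex $r$, then replace the $O(l^2)$ enumeration of layer pairs by an $O(l)$ dynamic program over layers, using the fact that via costs between layers are prefix sums so that the min-plus step can be done by upward and downward sweeps. The only cosmetic difference is that the paper composes the two sweeps sequentially (first a top-down pass giving the distance restricted to horizontal runs on high layers, then a bottom-up pass lifting the restriction), whereas you run the two sweeps independently and take their pointwise minimum; these are equivalent and both yield $O(l)$ per rectangle.
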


\begin{proof}
Again we enumerate over all $t$ rectangles that $T$ is composed of, and 
for each such rectangle $R$, we determine the vertex $r\in R$ that is closest to $s$ (geometrically) in constant time.
First compute the total cost $c_{z_1,z_2}$ of a path of vias between layer $z_1$ and layer $z_2$ 
for all $z_1,z_2\in\{1,\ldots,l\}$ with $\{z_1,z_2\}\cap\{z_r,z_s\}\not=\emptyset$, 
where $z_r$ and $z_s$ denote the layers of $r$ and $s$, respectively.
This can easily be done in $O(l)$ time.

Now we compute the minimum cost of a path from $r$ to $s$ that (when traversed from $r$ to $s$) 
consists of a path of vias, then a horizontal path, then a path of vias, then a vertical path, then a path of vias.
We will then do the same with exchanging the roles of $r$ and $s$, and the smaller of the two is the distance
between $r$ and $s$ by Lemma~\ref{lemma:onlyonehorizontalandonevertical}.

For each layer $z\in\{1,\ldots,l\}$, consider the vertex $v_z$ on layer $z$ 
whose y-coordinate is the one of $r$ and whose x-coordinate is the one of $s$.
We first compute the distance $\bar d_z$ between $r$ and $v_z$ in the subgraph of $G$ 
that contains no horizontal edges on the layers $1,\ldots,z-1$. 
By Lemma~\ref{lemma:onlyonehorizontalandonevertical}, a shortest path from $r$ to $v_z$ consists of a path of vias, 
a single horizontal path, and another path of vias.
Hence $\bar d_z$ is either the sum of $c_{z_r,z}$ (which we have precomputed) 
and $c^\leftrightarrow_z$ times the difference of the x-coordinates of $r$ and $s$, or 
$\bar d_{z+1}+c_{z,z+1}$ (if $z<l$), whichever is smaller. 
This implies that $\bar d_1,\ldots,\bar d_l$ can be computed in reverse order in total time $O(l)$.

Now we compute the distance $d_z$ from $r$ to $v_z$ in $G$ for all $z\in\{1,\ldots,l\}$. 
We have $d_1=\bar d_1$ by definition. 
For $z\in\{2,\ldots,l\}$, we have $d_z=\min\{\bar d_z, d_{z-1}+c_{z-1,z}\}$,
which allows to compute $d_1,\dots,d_l$ in total time~$O(l)$.
See Figure~\ref{fig:linear_time_distance_computation} for an illustration.

Next we compute the cost of a path from $r$ to $s$ that
consists of a shortest path from $r$ to $v_z$, a vertical path on layer $z$, and a path of vias.
It is given by $d_z+c_{z,z_s}$ plus $c^\updownarrow_z$ times the difference of the y-coordinates of $r$ and $s$,
and thus can now be computed in constant time.
Taking the minimum over all layers $z$ yields a shortest path from $r$ to $s$ by Lemma~\ref{lemma:onlyonehorizontalandonevertical}
and thus completes the proof.
\end{proof}

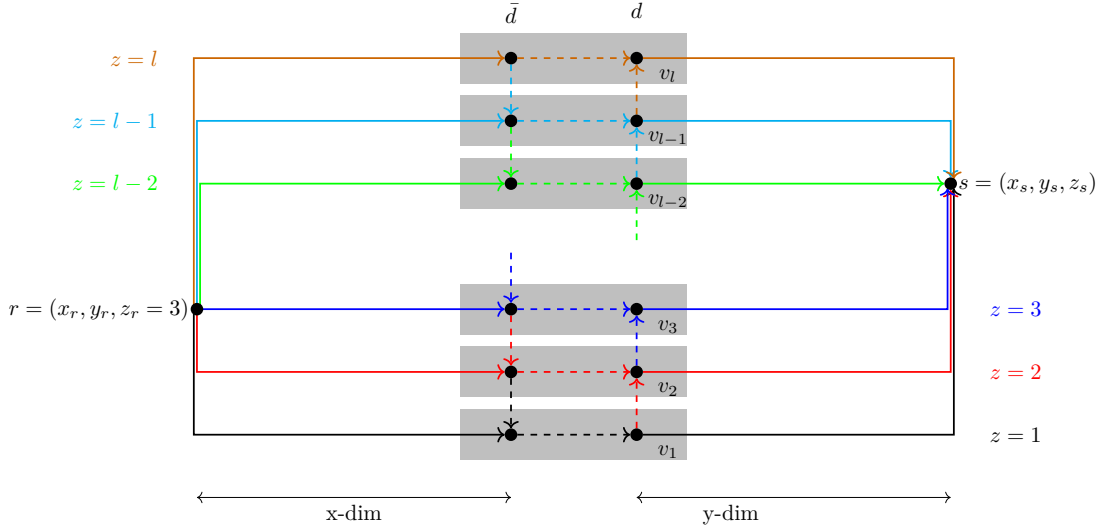
\begin{figure}
 \centering
  \resizebox{0.9\textwidth}{!}
  {
  \begin{tikzpicture}
   \fill [gray!50] (-1.8, 0.6) rectangle (1.8, 1.4);
   \fill [gray!50] (-1.8, 1.6) rectangle (1.8, 2.4);
   \fill [gray!50] (-1.8, 2.6) rectangle (1.8, 3.4);
   \fill [gray!50] (-1.8, 4.6) rectangle (1.8, 5.4);
   \fill [gray!50] (-1.8, 5.6) rectangle (1.8, 6.4);
   \fill [gray!50] (-1.8, 6.6) rectangle (1.8, 7.4);
   %paths
   \draw [->, thick]         (-6.05,3) -- (-6.05, 1) -- (-1.1,1);
   \draw [->, thick, red]    (-6.00,3) -- (-6.00, 2) -- (-1.1,2);
   \draw [->, thick, blue]   (-5.95,3) -- (-5.95, 3) -- (-1.1,3);
   \draw [->, thick, green]  (-5.95,3) -- (-5.95, 5) -- (-1.1,5);
   \draw [->, thick, cyan]   (-6.00,3) -- (-6.00, 6) -- (-1.1,6);
   \draw [->, thick, orange] (-6.05,3) -- (-6.05, 7) -- (-1.1,7);

   \draw [->, thick,        dashed] (-1,1.9) -- (-1,1.1);
   \draw [->, thick, red,   dashed] (-1,2.9) -- (-1,2.1);
   \draw [->, thick, blue,  dashed] (-1,3.9) -- (-1,3.1);
   \draw [->, thick, green, dashed] (-1,5.9) -- (-1,5.1);
   \draw [->, thick, cyan,  dashed] (-1,6.9) -- (-1,6.1);

   %target node
   \fill (-6,3) circle (0.1);
   \draw (-6,3) node[left] {$r = (x_r, y_r, z_r = 3)$};

   %intermediate nodes
   \fill (-1,1) circle (0.1);
   \fill (-1,2) circle (0.1);
   \fill (-1,3) circle (0.1);
   \fill (-1,5) circle (0.1);
   \fill (-1,6) circle (0.1);
   \fill (-1,7) circle (0.1);

   \draw [<->] (-1,0) -- (-3.5,0) node[below] {x-dim} -- (-6,0);

%  \draw [gray, dotted] (-7, 1.5) -- (7, 1.5);
%  \draw [gray, dotted] (-7, 2.5) -- (7, 2.5);
%  \draw [gray, dotted] (-7, 3.5) -- (7, 3.5);
%  \draw [gray, dotted] (-7, 4.5) -- (7, 4.5);
%  \draw [gray, dotted] (-7, 5.5) -- (7, 5.5);
%  \draw [gray, dotted] (-7, 6.5) -- (7, 6.5);

%  \draw [      ] (-6, 8) node[below] {$x=x_r$}; 
%  \draw [      ] ( 6, 8) node[below] {$y=y_s$}; 
   \draw [      ] (-1, 8) node[below] {$\bar{d}$}; 
   \draw [      ] ( 1, 8) node[below] {$d$}; 
   \draw [      ] ( 6.5, 1) node[right] {$z=1$}; 
   \draw [red   ] ( 6.5, 2) node[right] {$z=2$}; 
   \draw [blue  ] ( 6.5, 3) node[right] {$z=3$}; 
   \draw [green ] (-6.5, 5) node[left ] {$z=l-2$}; 
   \draw [cyan  ] (-6.5, 6) node[left ] {$z=l-1$}; 
   \draw [orange] (-6.5, 7) node[left ] {$z=l$};
   %paths
   \draw [<-, thick]         (6.05,4.925) -- (6.05, 1) -- (1.1,1);
   \draw [<-, thick, red]    (6.00,4.9) -- (6.00, 2) -- (1.1,2);
   \draw [<-, thick, blue]   (5.95,4.925) -- (5.95, 3) -- (1.1,3);
   \draw [<-, thick, green]  (5.9,5) -- (1.1,5);
   \draw [<-, thick, cyan]   (6.00,5.1) -- (6.00, 6) -- (1.1,6);
   \draw [<-, thick, orange] (6.05,5.075) -- (6.05, 7) -- (1.1,7);
   \draw [->, thick, red   , dashed] (1,1.1) -- (1,1.9);
   \draw [->, thick, blue  , dashed] (1,2.1) -- (1,2.9);
   \draw [->, thick, green , dashed] (1,4.1) -- (1,4.9);
   \draw [->, thick, cyan  , dashed] (1,5.1) -- (1,5.9);
   \draw [->, thick, orange, dashed] (1,6.1) -- (1,6.9);
   \draw [->, thick,         dashed] (-0.9,1) -- (0.9,1);
   \draw [->, thick, red   , dashed] (-0.9,2) -- (0.9,2);
   \draw [->, thick, blue  , dashed] (-0.9,3) -- (0.9,3);
   \draw [->, thick, green , dashed] (-0.9,5) -- (0.9,5);
   \draw [->, thick, cyan  , dashed] (-0.9,6) -- (0.9,6);
   \draw [->, thick, orange, dashed] (-0.9,7) -- (0.9,7);
   \draw [      ] (1.5, 0.95) node[below] {$v_1$};
   \draw [      ] (1.5, 1.95) node[below] {$v_2$};
   \draw [      ] (1.5, 2.95) node[below] {$v_3$};
   \draw [      ] (1.5, 4.95) node[below] {$v_{l-2}$};
   \draw [      ] (1.5, 5.95) node[below] {$v_{l-1}$};
   \draw [      ] (1.5, 6.95) node[below] {$v_l$};

   %query node
   \fill (6,5) circle (0.1);
   \draw (6,5) node[right] {$s = (x_s, y_s, z_s)$};

   %intermediate nodes
   \fill (1,1) circle (0.1);
   \fill (1,2) circle (0.1);
   \fill (1,3) circle (0.1);
   \fill (1,5) circle (0.1);
   \fill (1,6) circle (0.1);
   \fill (1,7) circle (0.1);
%  \draw (0,1) node[left] {$v_1 = (x_s, y_r, 1)$};
%  \draw (0,2) node[left] {$v_2 = (x_s, y_r, 2)$};
%  \draw (0,3) node[left] {$v_3 = (x_s, y_r, 3)$};
%  \draw (0,5) node[left] {$v_{l-2} = (x_s, y_r, l-2)$};
%  \draw (0,6) node[left] {$v_{l-1} = (x_s, y_r, l-1)$};
%  \draw (0,7) node[left] {$v_l = (x_s, y_r, l)$};

   \draw [<->] (1,0) -- (2.5,0) node[below] {y-dim} -- (6,0);
  \end{tikzpicture}
  }
  \caption 
  { \label{fig:linear_time_distance_computation}
    Illustration of the algorithm described in the proof of Theorem~\ref{thm:futurecost_l_time}.
    The vertices correspond to values computed during the algorithm, the edges correspond to the paths used to derive them. 
    The two vertices in the same gray box are at the same geometric location $v_z = (x_s,y_r,z)$.
    The algorithm starts by computing $\bar d_{l}$ and then derives $\bar d_{z}$ for $z \in \{1,\dots,l-1\}$ as the minimum of $\bar d_{z+1} + c_{z,z+1}$ and the cost of the path using layer $z$ for the horizontal edges.
   Then it propagates the values from bottom to top to compute $d_1, \dots, d_l$.
   For each $z \in \{1,\dots,l\}$ we combine the resulting $r$-$v_z$-path with the paths indicated on the right to get an $r$-$s$-path.
   Finally, we take the cheapest of the $l$ paths to obtain a shortest $r$-$s$-path.
  }
\end{figure}

\section{Logarithmic query time in the simple model}
\label{section:LogarithmicQuery}

%In~\cite{Ahrens2020}, the first author showed 
We will now show
how to achieve $O(\log (t + l))$ query time with polynomial preprocessing time.
We will need two ingredients before we can prove this.
The first ingredient is an efficient algorithm for computing the intersection of three-dimensional half-spaces:

\begin{theorem}[\cite{PreparataMueller1979}] \label{thm:half_space_intersection}
 The intersection of a set of $n$ half-spaces in three-dimensional space can be computed in $O(n \log n)$ time.
 If the intersection is nonempty, it is a convex polyhedron that is presented as a minimal set of inequalities, 
 the cycle of edges surrounding each face, and the coordinates of their endpoints.
\end{theorem}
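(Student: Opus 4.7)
The plan is to reduce half-space intersection in $\mathbb{R}^3$ to three-dimensional convex hull computation via polarity, and then solve the convex hull problem in $O(n \log n)$ time. First I would normalize: test whether the intersection $P = \bigcap_i H_i$ is empty or unbounded using a three-dimensional linear program (solvable in $O(n)$ expected time by Seidel's algorithm, or deterministically in $O(n)$ by Megiddo), and if $P$ is bounded and nonempty, translate coordinates so that some interior point of $P$ is the origin. After this preprocessing each $H_i$ has the form $\{x : a_i \cdot x \le 1\}$ for some $a_i \in \mathbb{R}^3$, which takes $O(n)$ time.

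Next I would invoke polar duality. The polar $P^\circ$ of the polytope $P$ equals $\mathrm{conv}\{a_1,\ldots,a_n\}$, and the face lattices of $P$ and $P^\circ$ are anti-isomorphic: vertices of $P^\circ$ correspond to facet-defining inequalities of $P$ (the minimal set!), edges correspond to edges, and facets of $P^\circ$ correspond to the vertices of $P$ with explicit coordinates recoverable in $O(1)$ per incidence. Thus, if I compute the convex hull of the $n$ points $a_i$ together with its full combinatorial incidence structure (a doubly connected edge list with cyclic face traversals), one traversal of size $O(n)$ suffices to emit the requested output format for $P$. Since planar subdivisions on the sphere have $O(n)$ edges and faces by Euler's formula, the total output size is $O(n)$.

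The main obstacle is therefore the $O(n \log n)$ three-dimensional convex hull computation. My preferred route is divide-and-conquer: sort the $n$ points by $x$-coordinate in $O(n \log n)$ time, split into two halves $L$ and $R$, recursively compute $\mathrm{conv}(L)$ and $\mathrm{conv}(R)$, and then merge. The merge is the delicate step: I would construct the belt of common tangents connecting the two hulls by first locating an edge of the lower common tangent polygon (via a linear-time walk analogous to the two-dimensional gift-wrapping routine), then ``wrapping'' around the two hulls to enumerate the successive tangent edges, deleting all faces of the two hulls that become hidden. The correctness of the wrapping follows from the local convexity criterion at each new tangent edge; the linear running time of the merge requires a charging argument showing each deleted face is charged once. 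This gives $T(n) = 2T(n/2) + O(n) = O(n \log n)$.

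Finally, I would assemble the output: run the duality translation above, ensuring that each facet of $\mathrm{conv}\{a_i\}$ emits exactly one vertex of $P$ (with coordinates computed by solving the $3 \times 3$ linear system coming from its three bounding half-spaces), and each vertex $a_i$ emits exactly one bounding inequality of $P$ together with the cyclic order of its neighbors around the corresponding facet. A matching $\Omega(n \log n)$ lower bound in the algebraic decision-tree model follows from a reduction from sorting (use half-spaces whose bounding planes are indexed by the input numbers), confirming the bound is tight.
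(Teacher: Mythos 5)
The paper does not give a proof of this theorem; it is imported as a black box from Preparata and Mueller (1979). Your argument is a legitimate alternative route to the same bound, but it is \emph{not} the route of the cited source: Preparata and Mueller divide the half-spaces themselves into two halves, recurse, and merge by intersecting two convex polyhedra directly in linear time, whereas you first pass to polar duality and then apply Preparata--Hong divide-and-conquer to the dual point set. The two algorithms are essentially polar to one another, so neither is wrong; the direct version avoids the need for a normalizing interior point and handles the output format (inequalities, edge cycles, vertex coordinates) without a final dualization pass, while the dual version lets you reuse an off-the-shelf 3D hull algorithm. Your appeal to linear-time 3D LP (Megiddo/Seidel) for the emptiness test is anachronistic relative to 1979 but not incorrect.

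There is one genuine gap. You explicitly confine yourself to the case where $P$ is bounded (``if $P$ is bounded and nonempty, translate coordinates so that some interior point of $P$ is the origin\dots''), but the theorem, as stated, must produce the polyhedron whenever the intersection is nonempty, including the unbounded case. The duality you invoke degrades there: with $0 \in \mathrm{int}(P)$ but $P$ unbounded, the polar is $\mathrm{conv}\{0, a_1, \ldots, a_n\}$, with the facets incident to the origin encoding the recession cone and the unbounded faces of $P$; the clean anti-isomorphism between the two face lattices no longer holds verbatim, and the ``each facet of the hull emits one vertex of $P$'' translation breaks for those facets. This omission is not cosmetic in the context of this paper: in Lemma~\ref{lemma:query_data_structure} the theorem is applied to the half-spaces $\{(x,y,\varphi) : \varphi \le f(x,y)\}$, whose intersection always extends to $\varphi \to -\infty$ and is therefore \emph{always} unbounded. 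A fix would be to add a large bounding half-space (or work with the upper hull of the dual point set, which directly gives the lower envelope), but as written the proposal does not cover the case the paper actually needs.
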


The second ingredient is an algorithm for solving the planar point location problem, a well-studied problem in computational geometry.
In the following, we regard any connected, closed part of a line in $\Rbb^2$ as a \emph{line segment} and any region whose boundary consists of a finite number of line segments as a \emph{polygon}.

\begin{theorem}[\cite{LiptonTarjan1977}] \label{thm:point_location}
  Let $L$ be a finite set of line segments that intersect only at their endpoints and let $(P_i)_{i \in I}$ be the (open polygonal) connected components of $\Rbb^2 \setminus \bigcup L$.
  Then there is a data structure that requires $O(\lvert L \rvert\log\lvert L \rvert)$ preprocessing time and, given any query point $p \in \Rbb^2$, can then determine an index~$i \in I$ such that $p$ lies in the closure of $P_i$ in $O(\log\lvert L \rvert)$ query time.
\end{theorem}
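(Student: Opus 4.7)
The plan is to prove the result by combining a vertical \emph{slab decomposition} with \emph{persistent balanced binary search trees}, following Sarnak and Tarjan. In the preprocessing phase, I would first sort the at most $2\lvert L\rvert$ endpoints of segments in $L$ by x-coordinate; this partitions the plane into at most $2\lvert L\rvert+1$ vertical slabs $S_1,\dots,S_m$. Because distinct segments in~$L$ meet only at endpoints, within any fixed slab $S_j$ the segments crossing $S_j$ are pairwise non-crossing and hence admit a total order by y-coordinate. Each connected component $P_i$ of $\Rbb^2\setminus\bigcup L$, restricted to $S_j$, corresponds uniquely to the segment immediately above it in this order (or to ``no segment above''); storing the index $i$ with each such (slab, segment) pair reduces the query to locating the slab and then the segment directly above the query point.

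The crucial structural observation is that the sorted set of segments crossing $S_j$ differs from that crossing $S_{j+1}$ by only $O(1)$ insertions and deletions, namely by the segments whose left or right endpoint sits on the shared boundary between the two slabs. Hence, building a single persistent balanced binary search tree keyed on segments ordered by y-coordinate within the current slab, and processing the slabs left-to-right while performing these updates, yields $m$ tree versions (one per slab) in total time and space $O(\lvert L\rvert\log\lvert L\rvert)$, since each update on a persistent tree costs $O(\log\lvert L\rvert)$ time and space while preserving earlier versions. A query point $p=(x,y)$ is then handled by two binary searches: first on the sorted slab boundaries to locate the slab containing~$x$, then in the associated persistent tree version to find the segment directly above $p$, both in $O(\log\lvert L\rvert)$ time. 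The index stored with that (slab, segment) pair gives some $i\in I$ whose closure contains $p$.

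The main obstacle I anticipate is a clean handling of degeneracies: vertical segments that lie on slab boundaries, several endpoints sharing an x-coordinate, and query points lying exactly on a segment or at a vertex (the statement only requires $p$ to lie in the closure of the returned $P_i$, so ties must be broken consistently rather than avoided). I would address these via standard symbolic perturbation with lexicographic tiebreaking on the coordinates, together with a careful convention assigning each slab boundary and segment to one adjacent slab or face. A secondary concern is achieving the stated $O(\lvert L\rvert\log\lvert L\rvert)$ preprocessing bound rather than $O(\lvert L\rvert\log^2\lvert L\rvert)$; this is ensured by the amortized analysis of persistent red-black trees, whose per-update cost (including the copies of rotated nodes along the search path) is $O(\log\lvert L\rvert)$ in both time and space.
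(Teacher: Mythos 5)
Your proof is correct, but it establishes the theorem via a genuinely different route than the source cited in the paper. The theorem is attributed to Lipton and Tarjan (1977), whose point-location data structure is built on the planar separator theorem: one recursively splits the planar subdivision with small balanced separators, producing a $O(\log n)$-depth decision structure. You instead prove the result by the slab decomposition combined with persistent balanced search trees, which is the method of Sarnak and Tarjan (1986). The paper itself explicitly lists Sarnak--Tarjan (alongside Kirkpatrick's hierarchical triangulations and the trapezoidal method of Edelsbrunner, Guibas, and Stolfi) as a later algorithm that achieves the same $O(\lvert L\rvert\log\lvert L\rvert)$ preprocessing / $O(\log\lvert L\rvert)$ query guarantees while being simpler to implement, so your choice is a legitimate and arguably preferable proof of the stated bounds. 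Two minor points worth tightening if you wrote this out in full: the statement that adjacent slabs differ by ``$O(1)$'' insertions and deletions is only true in general position --- with shared $x$-coordinates a single boundary can trigger many updates --- but the preprocessing bound survives because the total number of insertions and deletions over all slab transitions is $O(\lvert L\rvert)$ (each segment enters once and leaves once), which is what the amortization actually uses; and a single face $P_i$ restricted to one slab can consist of several trapezoids, so the (slab, segment-above) pair determines a face, but the map is many-to-one --- this is harmless but should be stated precisely. Your remark that ties need only be broken consistently because the theorem asks for $p$ in the \emph{closure} of $P_i$ is exactly the right observation and is what makes the degeneracy handling routine.
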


Point location algorithms that attain the same theoretic guarantees as \cite{LiptonTarjan1977}, but successively improve practical performance and ease of implementation have been described in \cite{Kirkpatrick1983,EdelsbrunnerGuibasStolfi1986,SarnakTarjan1986}.

Theorems~\ref{thm:half_space_intersection} and~\ref{thm:point_location} can be combined in order to obtain a data structure for storing affine functions efficiently such that their pointwise minimum can be queried in logarithmic time:

\begin{lemma} \label{lemma:query_data_structure}
  Let $F$ be a set of affine functions $f \colon \Rbb^2 \to \Rbb$ and $R := [x^-, x^+] \times [y^-, y^+]$ a closed rectangle. Then there is a data structure that requires $O(\lvert F \rvert \log\lvert F \rvert)$ preprocessing time and, given any query point $p \in R$, can then determine the value $\min_{f \in F}f(p)$ in $O(\log\lvert F \rvert)$ query time.
\end{lemma}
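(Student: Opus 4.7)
The plan is to compute the lower envelope of the $|F|$ planes $z = f(x,y)$ for $f \in F$, restricted to the rectangle $R$, as a planar subdivision into convex polygons, each labeled with a function $f \in F$ attaining $\min_{f' \in F} f'$ on that polygon; a query is then reduced to planar point location plus one affine evaluation.

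First I would set up a bounded three-dimensional halfspace intersection. For every $f \in F$, take $H_f := \{(x,y,z) \in \mathbb{R}^3 : z \leq f(x,y)\}$; for the rectangle, take the four bounding halfspaces $\{x \geq x^-\}$, $\{x \leq x^+\}$, $\{y \geq y^-\}$, $\{y \leq y^+\}$; and for boundedness add one halfspace $\{z \geq M\}$, where $M$ is any number with $M \leq \min_{f \in F,\, p \in R} f(p)$, computable in $O(|F|)$ time by evaluating each $f$ at the four corners of $R$. By Theorem~\ref{thm:half_space_intersection}, the intersection is a convex polyhedron $P$ constructed in $O(|F| \log |F|)$ time, presented through its faces, their boundary cycles, and vertex coordinates.

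Second I would extract the ``upper'' portion of $P$. Each face of $P$ lies on the boundary hyperplane of exactly one input halfspace; those lying on the boundary of some $H_f$ are precisely the faces visible from $z = +\infty$, and on such a face we have $\min_{f' \in F} f'(x,y) = f(x,y)$. Projecting these upper faces to the $xy$-plane yields a covering of $R$ by convex polygons, each carrying a label $f \in F$. The projected boundary edges form a set $L$ of line segments meeting only at their endpoints, with $|L| = O(|F|)$ because $P$ has $O(|F|)$ combinatorial complexity. Feeding $L$ into the data structure of Theorem~\ref{thm:point_location} costs another $O(|F| \log |F|)$ preprocessing and then supports $O(\log |F|)$ queries: given $p \in R$, we locate a polygon whose closure contains $p$, retrieve its label $f$, and return $f(p)$ in $O(1)$ additional time, giving the claimed bounds.

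The main obstacle is handling degeneracies: several affine functions may coincide on a face, an upper face may project to a set of lower dimension (e.g.\ when some $f$ never attains the minimum on $R$), or edges from different faces may overlap. These are all standard issues and can be resolved by symbolic perturbation of the coefficients of the $f$'s, together with a cleanup pass that discards degenerate projected faces and merges coincident segments before $L$ is handed to the point location structure. After this cleanup each polygon of the resulting planar subdivision carries a well-defined label, correctness of the query follows because every $p \in R$ lies in the closure of some upper projection, and the stated preprocessing and query complexities follow directly from Theorems~\ref{thm:half_space_intersection} and~\ref{thm:point_location}.
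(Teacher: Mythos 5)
Your proof is correct and takes essentially the same approach as the paper's: intersect the half-spaces $\{(x,y,\varphi) \mid \varphi \le f(x,y)\}$ via Theorem~\ref{thm:half_space_intersection}, project the resulting upper boundary to a convex polygonal subdivision of $R$, and answer queries via the point-location structure of Theorem~\ref{thm:point_location}. The extra bounding half-spaces and the symbolic-perturbation cleanup you describe are sensible implementation details that the paper leaves implicit.
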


\begin{proof}
 We intersect the $\lvert F \rvert$ half-spaces 
 $\{(x,y,\varphi) \in \mathbb R^3 \mid \varphi \leq f(x,y)\}$ for $f \in F$ 
 using Theorem~\ref{thm:half_space_intersection}.
 After projecting the result into the plane, we obtain a subdivision of the rectangle~$R$
 into at most $\lvert F \rvert$ convex polygons and a minimizing function $f \in F$ for each polygon.
 
 Second, we initialize a data structure to solve the point location problem within that subdivision. 
 % (1): E = F + K - 1
 % Nach Eulers't Formula, und weil F das äußere nicht enthält, sonst wäre es - 2
 % (2): K \leq \frac{2}{3}E + \frac{4}{3}
 % Weil die 4 Ecken Grad mindestens 2 und alle anderen Knoten Grad mindestens 3 haben
 % 3\cdot(1) + 3\cdot(2): E \leq 3F + 1
 By Euler's formula, since each vertex (with possible exception of the four corners of $R$) in this subdivision is incident to at least three edges, 
 the subdivision contains at most $3 \lvert F \rvert + 1$ line segments.
 Hence, this preprocessing can be implemented to run in $O(\lvert F \rvert \log\lvert F \rvert)$ time
 by Theorems~\ref{thm:half_space_intersection} and~\ref{thm:point_location}.

 When given a query location $p \in R$, we look up a polygon containing the query location in $O(\log \lvert F \rvert)$ time by Theorem~\ref{thm:point_location},
 and evaluate the function attaining the minimum on that polygon in constant time.
 If the query point is on the boundary of multiple polygons,
 it suffices to evaluate the minimizing function of any one of these polygons.
\end{proof}

We can now prove the main result of this section:

\begin{theorem}\label{thm:futurecost_log_l_time}
 Let $c:E\rightarrow\mathbb R_{> 0}$ depend only on direction and layer,
 and let $T \subseteq V$ consist of $t$ rectangles.
 Then there is a data structure that
 requires $O(t^2 l^{3} \log l)$ preprocessing time and,
 for any given $s \in V$,
 can then determine $\dist_{(G,c)}(s,T)$ in $O(\log(t+l))$ query time.
\end{theorem}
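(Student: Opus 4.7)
The plan is to refine the grid with respect to the $t$ target rectangles, obtaining at most $(2t+1)^2 = O(t^2)$ axis-aligned \emph{cells} in $\mathbb{R}^2$; within each such cell, the sign of $x_s - \xi_i^\pm$ and of $y_s - \upsilon_i^\pm$ is constant for every target rectangle $R_i$, so $R_i$ sits in one of nine fixed \emph{sign regions} relative to the cell. A point-location data structure (Theorem~\ref{thm:point_location}), built in $O(t\log t)$ time, will identify the cell containing a query point in $O(\log t)$. For each pair $(C, z_s)$ with $C$ a cell and $z_s\in\{1,\dots,l\}$, I would then build a separate Lemma~\ref{lemma:query_data_structure} data structure that returns $\pi(s)$ whenever $(x_s,y_s)\in C$ and the source layer is $z_s$.

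The heart of the argument is to show that, restricted to one such pair $(C,z_s)$, the function $\pi$ is the pointwise minimum of only $O(l^2)$ affine functions in $(x_s,y_s)$. By Lemma~\ref{lemma:onlyonehorizontalandonevertical}, any shortest $s$-to-$R_i$ path has at most one horizontal segment on some layer $z_1$, at most one vertical segment on some layer $z_2$, and three via sequences connecting $z_s$, $z_1$, $z_2$ and $\zeta_i$ in one of two orders; choosing the endpoint in $R_i$ optimally produces axis-distance terms $|x_s-x_i^*|$ and $|y_s-y_i^*|$ which, since the sign region is fixed within $C$, become affine in $(x_s,y_s)$ with slopes $b_x^{C,i},b_y^{C,i}\in\{-1,0,+1\}$. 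Consequently the candidate cost for a tuple $(R_i,z_1,z_2,\mathrm{ord})$ is affine of the form
\[
\alpha_i(z_s,z_1,z_2,\mathrm{ord}) + c_{z_1}^{\leftrightarrow} b_x^{C,i} x_s + c_{z_2}^{\updownarrow} b_y^{C,i} y_s,
\]
and its slope depends on $i$ \emph{only through the sign pair} $(b_x^{C,i},b_y^{C,i})$. Grouping the $O(tl^2)$ candidates by the at most $9\cdot 2\cdot l^2$ combinations of sign pair, order, and $(z_1,z_2)$, each group shares a common slope, so only the member of minimum intercept $\alpha_i$ can ever attain the minimum. This collapses the representation to $O(l^2)$ affine functions per $(C,z_s)$.

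Applying Lemma~\ref{lemma:query_data_structure} to these $O(l^2)$ functions costs $O(l^2\log l)$ preprocessing and $O(\log l)$ per query; summed over the $O(t^2\cdot l)$ pairs $(C,z_s)$ this contributes $O(t^2 l^3 \log l)$ to the total preprocessing, and a query consists of cell location ($O(\log t)$), dispatch on $z_s$ ($O(1)$), and one Lemma-query ($O(\log l)$), giving $O(\log(t+l))$. The main obstacle is computing, for every $(C,z_s)$ and every sign/configuration group, the minimum intercept $\alpha_i$ over the rectangles lying in the corresponding sign region relative to $C$ without blowing up the budget (naive enumeration would cost $\Theta(t^3 l^3)$). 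I would handle this by observing that the only $z_s$-dependent summand in $\alpha_i$ is $C_{\mathrm{via}}(z_s,z_1)$ or $C_{\mathrm{via}}(z_s,z_2)$, so it suffices to run, for each of the $O(l^2)$ sign-and-configuration groups, one offline 2D dominance sweep over the $t$ rectangles to obtain the cell-indexed minima of the remaining $z_s$-independent quantity; the per-$z_s$ summand is then added in $O(1)$ when the functions are fed into Lemma~\ref{lemma:query_data_structure}. This auxiliary step contributes only lower-order terms to the stated preprocessing bound.
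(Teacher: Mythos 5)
Your proposal is correct and follows essentially the same route as the paper's proof: refine the grid with respect to the target rectangles, observe (via Lemma~\ref{lemma:onlyonehorizontalandonevertical}) that restricted to each tile/layer the potential is a pointwise minimum of $O(l^2)$ affine functions (one per direction-pair and per choice of horizontal/vertical layer), compute the intercepts by a directional dominance sweep over the tile grid, and build a Lemma~\ref{lemma:query_data_structure} structure per tile-and-layer. The one place where you organize the work a bit differently is the computation of the intercepts: the paper runs one sweep per (direction-combination, query layer $z$), i.e.\ $O(l^3)$ sweeps each costing $O(t^2)$, whereas you factor out the $z_s$-dependent via stack $c_{z_s,z_1}$ (resp.\ $c_{z_s,z_2}$) and run only $O(l^2)$ sweeps, adding the $z_s$-term afterwards. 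This saves a factor of $l$ in the number of sweeps but not in the total preprocessing, since $\Theta(t^2 l^3)$ intercepts must be materialized either way and the $O(t^2 l^3 \log l)$ cost of building the per-tile point-location structures dominates; both variants meet the stated bound, and the step you flag as the main obstacle is handled correctly.
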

\begin{proof}
 We first interpret our instance as an instance of the general model
 by choosing $p=q=0$, and then refine the grid with respect to the targets,
 which yields $O(t^2l)$ tiles $V^{ij}_z$.
 We compute an independent data structure for each tile.
 
 By Lemma~\ref{lemma:onlyonehorizontalandonevertical}, a shortest path from any $s\in V$
 to any $r\in T$ contains at most one sequence of horizontal and at most one sequence of vertical edges.
 For a fixed set of directions and layers of these sequences, 
 for example north on layer $z^\N$ and west on layer $z^\W$,
 there are at most two such paths, depending on whether we first go north and then west or vice versa,
 having possibly different via costs.
 Given a fixed tile $V^{ij}_z$,
 for every $s = (x_s, y_s, z) \in V^{ij}_z$,
 we are interested in the cost of a shortest such path to $T$,
 which is given by an affine function
 $$f^{(i,j,z,\NW,z^\N,z^\W)}(x_s,y_s) = c^{(i,j,z,\NW,z^\N,z^\W)} - y_s c^{\updownarrow}_{z^\N} + x_s c^\leftrightarrow_{z^\W}$$
 for some constant $c^{(i,j,z,\NW,z^\N,z^\W)}$.
 To obtain the constant, one takes the minimum over all $r = (x_r, y_r, z_r) \in T$ that lie northwest of $V^{ij}_z$,
 considering the sum of
 $y_r c^{\updownarrow}_{z^\N} - x_r c^\leftrightarrow_{z^\W}$ and
 the via cost in the cheaper of the two cases regarding the order of the two directions.
 If there is no $r \in T$ northwest of $V^{ij}_z$, no path of the required structure exists, and the constant can be considered to be $\infty$.

Similarly, an affine function for each of the $1+4l+4l^2$ combinations 
$$(-),(\E,z^\E),(\N,z^\N),(\W,z^\W),(\S,z^\S),(\NE,z^\N,z^\E),(\NW,z^\N,z^\W),(\SW,z^\S,z^\W),(\SE,z^\S,z^\E)$$
can be defined,
and $f^{(i,j,z)}:(x_s,y_s)\mapsto\dist(s,T)$ is the pointwise minimum of these.
Here $(-)$ means that we go to the target only by vias (or we are already at a target).

We next show how to compute all $O(t^2l^3)$ affine functions in total time $O(t^2l^3)$.
We describe this for one combination $(\NW,z^\N,z^\W)$; it works analogously for the other combinations.
Here we need to compute the constants $c^{ij}_z := c^{(i,j,z,\NW,z^\N,z^\W)}$ for all tiles $V^{ij}_z$.
For each layer $z$, we do this from northwest to southeast. We start with
$c^{ij}_z=\infty$ whenever $i=0$ or $j=q$ and then set
$$c^{ij}_z = \min\left\{ c^{i,j+1}_z,\, c^{i-1,j}_z,\, 
\min\left\{ \upsilon^{j+1}c^{\updownarrow}_{z^\N} - \xi^ic^\leftrightarrow_{z^\W} + \min\left\{c_{z,z^\N,z^\W,z'},c_{z,z^\W,z^\N,z'}\right\} \mid (\xi^i,\upsilon^{j+1},z')\in T\right\}\right\}$$
in increasing order of $i-j$, where $c_{z_1,z_2,z_3,z_4}$ is the total cost of the vias to go from layer $z_1$ to layer $z_2$ to layer $z_3$ to layer $z_4$;
see Figure~\ref{fig:affinefunctionstiles} for an illustration.
Since each of the $t$ target rectangles shows up $O(l^3)$ times (once for each combination and each $z$),
all these $O(t^2l^3)$ affine functions can be computed in $O(t^2l^3)$ time.

\begin{figure}
  \centering
  \begin{tikzpicture}
    \draw[fill=black!20] (2,1) rectangle (3,2);

    \draw (0,1) -- (4,1);
    \draw (0,2) -- (4,2);
    \draw (0,3) -- (4,3);
    \draw (1,0) -- (1,4);
    \draw (2,0) -- (2,4);
    \draw (3,0) -- (3,4);

    \node at (0.5,4.5) {\small $i = 0$};
    \node at (1.5,4.5) {\small $i = 1$};
    \node at (2.5,4.5) {\small $i = 2$};
    \node at (3.5,4.5) {\small $i = 3$};
    \node at (-0.5,0.5) {\small $j = 0$};
    \node at (-0.5,1.5) {\small $j = 1$};
    \node at (-0.5,2.5) {\small $j = 2$};
    \node at (-0.5,3.5) {\small $j = 3$};

    \node at (0.5,0.5) {$\infty$};
    \node at (0.5,1.5) {$\infty$};
    \node at (0.5,2.5) {$\infty$};
    \node at (0.5,3.5) {$\infty$};
    \node at (0.5,3.5) {$\infty$};
    \node at (1.5,3.5) {$\infty$};
    \node at (2.5,3.5) {$\infty$};
    \node at (3.5,3.5) {$\infty$};

    \draw[->,thick] (1.5,1.5) -- (2.4,1.5);
    \draw[->,thick] (2.5,2.5) -- (2.5,1.6);
    \fill[darkgreen] (2,2) circle (0.1);
  \end{tikzpicture}
  \hspace{3cm}
  \begin{tikzpicture}
    \draw[fill=black!20] (2,1) rectangle (3,2);

    \draw (0,1) -- (4,1);
    \draw (0,2) -- (4,2);
    \draw (0,3) -- (4,3);
    \draw (1,0) -- (1,4);
    \draw (2,0) -- (2,4);
    \draw (3,0) -- (3,4);

    \node at (0.5,4.5) {\small $i = 0$};
    \node at (1.5,4.5) {\small $i = 1$};
    \node at (2.5,4.5) {\small $i = 2$};
    \node at (3.5,4.5) {\small $i = 3$};
    \node at (-0.5,0.5) {\small $j = 0$};
    \node at (-0.5,1.5) {\small $j = 1$};
    \node at (-0.5,2.5) {\small $j = 2$};
    \node at (-0.5,3.5) {\small $j = 3$};

    \node at (0.5,3.5) {$\infty$};
    \node at (1.5,3.5) {$\infty$};
    \node at (2.5,3.5) {$\infty$};
    \node at (3.5,3.5) {$\infty$};

    \draw[darkgreen,ultra thick] (2,2) -- (3,2);
    \draw[->,thick] (2.5,2.5) -- (2.5,1.6);
  \end{tikzpicture}
  \caption{Computation of the constants $c^{ij}_z$ for a fixed layer $z$ in the proof of Theorem~\ref{thm:futurecost_log_l_time}. The left picture illustrates the propagation for the directions $\NW$, the right picture for $\N$. The constants corresponding to tiles that have no neighboring tiles in northwest or north direction, respectively, are set to $\infty$. When determining the constant $c^{21}_z$, the constants $c^{22}_z$ and $c^{11}_z$ in the $\NW$ case, and only $c^{22}_z$ in the $\N$ case, are taken into account. Moreover, if the marked point in the $\NW$ case or the marked line segment in the $\N$ case belongs to a target rectangle in any layer $z'$, the cost of a path to this point or line contributes to the computation of $c^{21}_z$ as well.}
  \label{fig:affinefunctionstiles}
\end{figure}
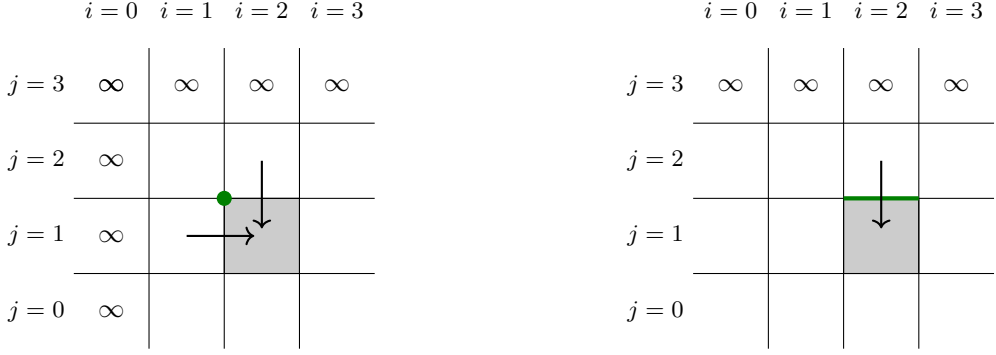

Rather than storing just a list of these $1+4l+4l^2$ affine functions for each $V^{ij}_z$,
we now build a data structure for each tile, using Lemma~\ref{lemma:query_data_structure}, in order to obtain a logarithmic query time. The total preprocessing time required to build these data structures is $O(t^2l^3\log l)$.

 When given a query location $s \in V$, we find the correct tile $V^{ij}_z$
 and hence the correct data structure
 in $O(\log t)$ time by performing two binary searches.
 Then we look up the distance from $s$ to $T$ in $O(\log l)$ time by Lemma~\ref{lemma:query_data_structure}.
 If the query point is on the boundary of multiple tiles,
 it suffices to evaluate the minimizing function of any one of these tiles.
\end{proof}

\section{The general model} \label{sec:general_model}

In this section, we develop an algorithm to compute the potential $\pi(v) = \dist_{(G,c)}(v,T)$ for any $v$ in the general model efficiently after preprocessing. 
We will assume $T$ to be consistent with the grid, i.e., we have already refined the grid if it was not. 
Our preprocessing will work on the horizontal and vertical line segments of the grid, i.e., the sets $\mathrm{Hor}^{ij}_z \coloneqq \left\{(x,y,z) \in V^{ij}_z \mid y = \upsilon^j\right\}$ and $\mathrm{Ver}^{ij}_z \coloneqq \left\{(x,y,z) \in V^{ij}_z \mid x = \xi^i\right\}$. 
The exposition will focus on the horizontal line segments; vertical segments can be handled analogously. 
Our algorithm consists of two preprocessing steps and a query step. The first preprocessing step is a variant of Dijkstra's algorithm. For its correctness, the following observation about the structure of shortest paths, similar to Lemma~\ref{lemma:onlyonehorizontalandonevertical} for the simple model, is essential:

\begin{lemma}
  \label{lemma:pathstructure}
  Let $c : E \to \Rbb_{> 0}$ depend on tile and direction, let $T \subseteq V$ be consistent with the grid, and $s \in V \setminus T$. 
  Then there is a shortest path $P$ from $s$ to $T$ in $(G, c)$ that uses only one type of edges (either horizontal, vertical or via) 
  before entering some tile in which $s$ does not lie.
\end{lemma}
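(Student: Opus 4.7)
The plan is to take any shortest $s$-$T$-path $P$ and, through cost-preserving modifications, transform it into a shortest path whose initial prefix uses only edges of a single type. Let $H$ denote the collection of tiles containing $s$; since $s$ has a unique layer $z_s$, all tiles in $H$ lie on layer $z_s$, and there are at most four of them (at most two in each of the $x$- and $y$-directions). Let $v_k$ be the first vertex of $P$ lying in a tile outside $H$, so that the prefix we must control consists of the edges $e_1,\ldots,e_k$. Observe first that none of $e_1,\ldots,e_{k-1}$ can be a via: a via leaves layer $z_s$, hence lands in a vertex outside $\bigcup H$, contradicting minimality of $k$. Hence the prefix in question consists only of horizontal and vertical edges on layer $z_s$.

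Next, I would adapt the exchange argument of Lemma~\ref{lemma:onlyonehorizontalandonevertical}. Consider two consecutive edges of the prefix of opposite direction, forming a sub-path $u\to v\to w$, and let $v':=u+w-v$ be the fourth vertex of the unit square. Replacing $u\to v\to w$ by $u\to v'\to w$ swaps the order of a horizontal and a vertical edge. I would show that, when performed in the appropriate direction (moving horizontal edges earlier, or vertical edges earlier, whichever does not increase cost), this swap preserves optimality. This uses the fact that all four edges of the unit square are contained in tiles of $H$, together with the convention that the cost of an edge is the minimum over all tiles containing it: the swapped pair of edges inherits costs from home tiles whose direction costs can be compared to those of the original pair through the min-operation.

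Iterating such swaps yields a shortest path whose prefix is a block of horizontal edges followed by a block of vertical edges (or vice versa), all within $\bigcup H$. To finish, I would show that one of the two blocks can be eliminated. After the horizontal block, we are at a vertex in $H$ whose $y$-coordinate equals $y_s$; if the subsequent vertical block still lies in $H$, then the horizontal block alone cannot have reached the horizontal boundary of $H$, so extending the horizontal block further (or performing the vertical moves only after first crossing into a non-home tile) produces a path of no greater cost. A symmetric statement applies if the blocks are in the opposite order. The resulting path has the desired property.

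The main obstacle is the second step: when the (up to four) home tiles have different per-direction costs, the swap changes the tiles containing the two edges in question, and the cost difference depends delicately on which tile's cost dominates via the min-operation. A clean case analysis based on the position of $s$ (strictly inside one tile, on an interior tile edge, or at a tile corner) is needed to verify that one of the two swap directions is always cost-non-increasing, so that the desired monotone prefix can actually be achieved.
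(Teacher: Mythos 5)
Your approach is genuinely different from the paper's. The paper applies the exchange argument of Lemma~\ref{lemma:onlyonehorizontalandonevertical} to the \emph{whole} path $P$, obtaining a global structural invariant (at most one horizontal sequence in every column-interior subpath and at most one vertical sequence in every row-interior subpath), and then reads the claim off from that invariant. You instead swap only inside the prefix and then try to eliminate one of the two resulting blocks. That local strategy can work, but as you have written it there is a genuine gap, and in addition a hypothesis of the lemma that you never use.

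The gap is in how you treat vias. You correctly argue that none of $e_1,\ldots,e_{k-1}$ is a via, but then assert that ``the prefix in question consists only of horizontal and vertical edges,'' silently extending this to $e_k$. This is false: $e_k$ can perfectly well be a via, and in that case $v_{k-1}$ is strictly interior to the (single) home tile, so \emph{neither} block of your monotone rearrangement reaches the tile boundary. Your step~3 then gives nothing --- reordering the two blocks cannot produce a one-type prefix, and moving the via to the front is not cost-non-increasing when the adjacent layer is more expensive. This case has to be ruled out using the hypothesis that $T$ is consistent with the grid, which your proposal never invokes. Indeed, without that hypothesis the lemma is simply false: take $s=(5,5,1)$, a tile $[0,10]\times[0,10]\times\{1\}$, $T=\{(6,6,2)\}$, and make layer~$2$ strictly more expensive in both horizontal and vertical direction; the unique shortest path is horizontal, vertical, via, and no rearrangement with a one-type prefix is as cheap. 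Grid refinement would split the tile at $x=6$ and $y=6$ and make the first horizontal edge already cross a tile boundary, dissolving the example. The way the paper's argument uses consistency (implicitly, inside ``it is easy to see'') is roughly this: if $e_k$ is a via and both $\Delta_h,\Delta_v>0$, then the global one-sequence-per-column/row invariant forces $P$ after $v_k$ to consist of vias only while it remains in the interior of the column and row; so $P$ would terminate at a point with both coordinates strictly interior, which by grid-consistency can lie in $T$ only if $T$ contains the entire tile $V^{ij}_\zeta$ --- but then the pure via stack from $s$ is strictly cheaper, contradicting optimality of $P$. You cannot reproduce this argument with prefix-local swaps alone, because it needs information about what $P$ does after $v_k$.

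Your step~3 for the non-via case also needs tightening, though the idea is correct. What you want to say is: if the prefix (after swapping) is $[\text{h-block}][\text{v-block}]e_k$ with $e_k$ horizontal or vertical, then exactly the block matching $e_k$'s direction is the one whose extension by $e_k$ hits the tile boundary; reorder so that this block comes first. The reordering is cost-non-increasing because the other block is pushed onto a boundary line, where every edge carries the minimum of the two abutting tile costs. The phrasing ``extending the horizontal block further'' and ``performing the vertical moves only after first crossing into a non-home tile'' obscures rather than conveys this. Finally, the obstacle you flag (differing costs among the up-to-four home tiles when $s$ is on a tile edge or corner) is real but secondary; the primary missing idea is the treatment of the terminal via together with grid-consistency.
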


\begin{proof}
By the same argument as in Lemma~\ref{lemma:onlyonehorizontalandonevertical}, there is a shortest path
$P$ such that every subpath of $P$ that is entirely in the interior of one column $\{(x, y, z) \in V \mid \xi^i \le x \le \xi^{i+1}\}$ contains at most one horizontal sequence of edges
and every subpath of $P$ that is entirely in the interior of one row $\{(x, y, z) \in V \mid \upsilon^j \le y \le \upsilon^{j+1}\}$ contains at most one vertical sequence of edges.
It is easy to see that such a path satisfies the claim.
\end{proof}

Our algorithm will first compute $\dist_{(G, c)}(s, T)$ for all $s$ lying in horizontal segments of the grid. 
By applying Lemma~\ref{lemma:pathstructure} inductively, we may assume that the corresponding shortest paths never use a horizontal edge or a via in the interior of a tile.

In the variant of Dijkstra's algorithm that we apply, we do not mark all vertices whose labels are guaranteed to be permanent explicitly, but the set of these vertices at some point in the algorithm is implicitly given: it consists of all vertices that have a smaller label than the one considered in the current iteration. In every iteration, several vertices might be added to this set and several updates of neighboring vertices are performed. This will be useful in order to decrease the required number of iterations significantly compared to the regular Dijkstra's algorithm, while the additional updates in every iteration can be performed with negligible overhead. This is possible because all horizontal edges along some horizontal segment have the same cost, and hence, the corresponding labels can be represented by affine functions that can be stored and updated very efficiently. 
Instead of vertices we will store affine functions in a heap (priority queue), and the \emph{key} of such a function will be the minimum relevant function value.

More precisely, for each horizontal segment $\mathrm{Hor}^{ij}_z$, the algorithm maintains a set $F^{ij}_z$ of affine functions $f \colon [\xi^i, \xi^{i+1}] \to \Rbb_{\ge 0}$ such that each value $f(x)$ corresponds to the cost of a path between $(x, \upsilon^j, z)$ and~$T$. At any point during the algorithm, for every vertex $(x, \upsilon^j, z)$, the minimum value $\min \{f(x) \mid f \in F^{ij}_z\}$ can be considered to be its current label. If $\xi^i = \xi^{i+1}$, we simply store that value. Otherwise we keep only those functions that are not dominated, i.e., attain the pointwise minimum in more than one point. 
The key of a non-dominated function $f \in F^{ij}_z$ is the minimum function value in the interval in which $f$ attains the pointwise minimum.
For storing the sets $F^{ij}_z$, we will use the following result. See Figure~\ref{fig:addfunction} for an illustration.

\begin{lemma}
\label{lemma:functionsets}
  Let $x^-, x^+ \in \Rbb$ be given with $x^- < x^+$.
  We can maintain a data structure that
  \begin{itemize}
    \item stores a set $F$ of affine functions $g : \Rbb \to \Rbb$
      such that the set
      $\{ x \in [x^-, x^+]:\ g(x) = \min_{f \in F} f(x)\}$
      of all points where the function attains the minimum
      is an interval $[x_g^-, x_g^+]$ with $x_g^- < x_g^+$,
    \item stores for every function $g$ the interval $[x_g^-,x_g^+]$
      and the minimum value $\mathrm{key}(g) := \min\{g(x_g^-), g(x_g^+)\}$
      of $g$ on this interval, and
  \item can be updated in $O((D+1)\log (D +|F|))$ time when adding a new function,
    where $D$ is the number of functions that have to be removed from $F$
    because they are dominated.
  \end{itemize}
Moreover, each update increases the key of at most one function that remains in $F$, and never decreases a key.
\end{lemma}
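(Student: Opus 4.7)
The plan is to exploit the standard structure of the lower envelope of a family of affine functions on an interval: among the non-dominated functions, the slopes are strictly monotonic along the envelope. Concretely, if we order the stored functions by the position of their minimality interval from left to right on $[x^-,x^+]$, then the corresponding slopes are strictly decreasing. (This is immediate from the observation that at a crossover $x_f^+ = x_h^-$ the function that dominates on the right must have the strictly smaller slope.) Equivalently, this is an ordering by slope, and it is the one I would use to index $F$.

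I would store $F$ in a balanced binary search tree (BBST) keyed by slope, supporting insertion, deletion and predecessor/successor queries in $O(\log |F|)$ time; additionally, an auxiliary BBST (or an indirect min-heap) keyed by the stored $\mathrm{key}(g)$ values would be kept in order to permit $O(\log |F|)$ key updates and $O(1)$ access to the overall minimum. For every $g \in F$, I would store pointers to its two slope-neighbors and its current interval $[x_g^-,x_g^+]$. A pair of virtual boundary entries will play the role of the endpoints $x^-$ and $x^+$ so that every real function has two neighbors; this will absorb all the corner-case bookkeeping (coincident slopes, $g$ extreme on the envelope, empty cascades).

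To insert a new affine function $g$, first locate its slot in the slope-indexed BBST in $O(\log |F|)$ time, and let $h_L,h_R$ be its immediate neighbors. Write $\chi(f,f')$ for the unique $x$ with $f(x)=f'(x)$. If $\chi(g,h_R)\le\chi(g,h_L)$, then $g$ is dominated on $[x^-,x^+]$ by $\{h_L,h_R\}$ and the insertion terminates in $O(\log |F|)$ total time. Otherwise, cascade on the left: as long as $g(\chi(h_L,h_L'))\le h_L(\chi(h_L,h_L'))$, the function $h_L$ is dominated by $\{h_L',g\}$, so delete $h_L$ from both data structures and set $h_L \gets h_L'$. Do the symmetric cascade on the right. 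Finally insert $g$, recompute the intervals of $g$, $g_L$ and $g_R$ (the two surviving neighbors) from the pairwise intersections, and update their stored keys. Each of the $D$ deletions and the single insertion costs $O(\log(|F|+1))$ in each data structure, and all other work is $O(\log |F|)$, giving the claimed $O((D+1)\log(D+|F|))$ bound.

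For the key-monotonicity claim, the intervals of all functions other than $g,g_L,g_R$ are untouched, so their keys are unchanged, and only $\mathrm{key}(g_L)$ and $\mathrm{key}(g_R)$ can possibly change; by the slope ordering their slopes satisfy $a_L>a_g>a_R$. The interval of $g_L$ is shortened only on its right endpoint (the new $x_{g_L}^+$ equals $\chi(g_L,g)$, strictly smaller than before), so $\mathrm{key}(g_L)=\min\{g_L(x_{g_L}^-),g_L(x_{g_L}^+)\}$ strictly increases exactly when $a_L<0$ (the argmin sits at the moved right endpoint) and is otherwise unchanged. Symmetrically, $\mathrm{key}(g_R)$ strictly increases exactly when $a_R>0$ and is otherwise unchanged. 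Because $a_L>a_R$, the conditions $a_L<0$ and $a_R>0$ cannot both hold, so at most one surviving function has its key changed, any such change is a strict increase, and no key ever decreases. The main (modest) obstacle is the uniform handling of degenerate cases via the virtual boundaries; the amortized charging of $O(\log(|F|+1))$ per removed function is what buys the stated running time.
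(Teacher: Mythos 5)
Your proof is correct and takes essentially the same approach as the paper: store $F$ in a balanced search tree ordered by slope, locate the insertion slot, cascade outward deleting dominated neighbours, and update the two surviving boundary intervals. The only difference is that where the paper appeals tersely to ``concavity of the pointwise minimum'' to justify that at most one key increases, you make the argument explicit via the slope ordering ($a_L > a_g > a_R$, so $a_L < 0$ and $a_R > 0$ cannot both hold) — which is the same fact, just unpacked.
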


\begin{proof}
  We store the set of affine functions as a binary search tree in which they are sorted by their slopes.
  When a function $g$ is added, we check whether $g$ is dominated and otherwise insert it into the search tree in time $O(\log |F|)$. 
  Then, starting from $g$, we iterate forward and backward, deleting all functions $f$ that are identical to $g$ or dominated by $g$ on the whole interval $[x_f^-, x_f^+]$. Each of these operations requires time $O(\log |F|)$. If a function $f$ is partly dominated by $g$, then its interval and, if necessary, its key are updated. It can be required to shorten two intervals, left and right of the interval of $g$. However, only one key can increase due to the concavity of the pointwise minimum.
\end{proof}

\begin{figure}
  \centering
  \begin{tikzpicture}[thick]
      \fill[yellow!20] (12,4.5) -- (12,1.1) -- (2,4.2) -- (2,4.5) -- (12,4.5);
      \fill[yellow!50] (12,4.5) -- (12,0.4) -- (11,4.5) -- (12,4.5);
      \fill[yellow!50] (12,4.5) -- (12,1.5) -- (9,4.5) -- (12,4.5);
      \fill[yellow!50] (12,4.5) -- (12,2.1) -- (2,4.3) -- (2,4.5) -- (12,4.5);
      \fill[yellow!50] (12,4.5) -- (12,2.6) -- (2,3.6) -- (2,4.5) -- (12,4.5);
      \fill[yellow!50] (3.8,4.5) -- (2,0) -- (2,4.5) -- (3.8,4.5);
  
      \draw[->] (0,0) -- (14,0) node[anchor=west] {$x$};
      \draw[->] (0,0) -- (0,4.5) node[anchor=south] {$f_i(x),g(x)$};
      \draw[gray] (2,4.5) -- (2,-0.2) node[anchor=north] {$x^-$};
      \draw[gray] (12,4.5) -- (12,-0.2) node[anchor=north] {$x^+$};
      \draw (12,0.4) -- (11,4.5) node[right] {$f_5$};
      \draw (12,1.5) -- (9,4.5) node[right=1mm] {$f_4$};
      \draw (12,2.1) -- (2,4.3) node[above right=-1mm] {$f_3$};
      \draw (12,2.6) -- (2,3.6) node[above right=-1mm] {$f_2$};
      \draw (3.8,4.5) -- node[above left=-1mm] {$f_1$} (2,0);
  
      \draw[very thick,blue] (12,1.1) -- node[below] {$g$} (2,4.2);
      
      \draw[dotted] (0,3.27) node[left] {\small new $\text{key}(f_2)$} -- (4.95,3.27) -- (4.95,-0.2) node[below] {\small new $x_{f_2}^+$};
      \draw[dotted] (0,2.97) node[left] {\small old $\text{key}(f_2)$} -- (7.9,2.97) -- (7.9,-0.2) node[below] {\small old $x_{f_2}^+$}; 
      \draw[dotted] (3.4,3.4) -- (3.4,-0.2) node[below] {\small $x_{f_2}^-$}; 
       
  \end{tikzpicture}
  \caption{Update of the data structure described in
    Lemma~\ref{lemma:functionsets} when adding an affine function $g$
    to a set $F=\{f_1,f_2,f_3,f_4,f_5\}$: here $f_3$ and $f_4$ are
    deleted, $x_{f_2}^+$ decreases, $\mathrm{key}(f_2)$ increases, and
    $x_{f_5}^-$ increases.}
  \label{fig:addfunction}
\end{figure}
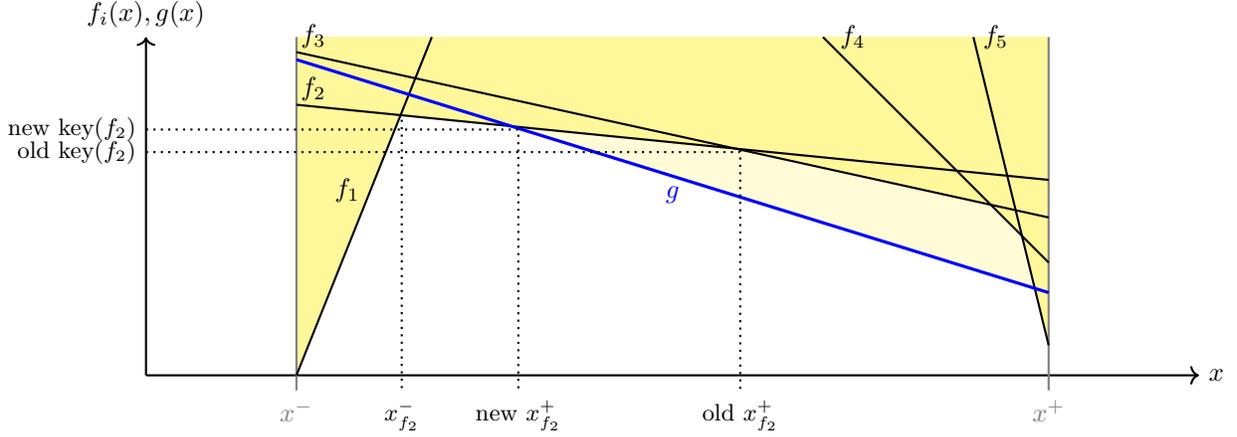

In addition to storing each set $F^{ij}_z$ as specified in Lemma~\ref{lemma:functionsets}, we maintain a binary heap representing all functions in $\bigcup \{F^{ij}_z \mid i \in \{0, \dots, p\}, j \in \{1, \dots, q\}, z \in \{1, \dots, l\}\}$ that have not been processed yet, using the keys defined in Lemma~\ref{lemma:functionsets}. The functions that are added to or removed from some $F^{ij}_z$ must be added to or removed from the heap at the same time, and whenever a key changes, it must be updated also in the heap.

The algorithm starts by initializing $F^{ij}_z \coloneqq \emptyset$ for all $i \in \{0, \dots, p\}$, $j \in \{1, \dots, q\}$, and $z \in \{1, \dots, l\}$. If $\mathrm{Hor}^{ij}_z \subseteq T$, we add the constant function $x \mapsto 0$ to the corresponding set $F^{ij}_z$. If not the whole segment, but one (or both) of its endpoints lies in $T$, we add the affine function describing the distance to this endpoint, i.e., $x \mapsto \min\{c^{ij\leftrightarrow}_z,c^{i(j-1)\leftrightarrow}_z\} \cdot (x-\xi^i)$ or $x \mapsto \min\{c^{ij\leftrightarrow}_z,c^{i(j-1)\leftrightarrow}_z\} \cdot (\xi^{i+1}-x)$.

 In every iteration, a function $f$ with minimum value $\mathrm{key}(f)$ is chosen and removed from the heap. The function $f$ describes the labels of a subset of some horizontal segment $\mathrm{Hor}^{ij}_z$, corresponding to the interval $[x_f^-, x_f^+]$. We now propagate the labels from these vertices to the neighboring horizontal segments by computing at most six new affine functions:
 
 \noindent
 \begin{tabular}{@{}l@{}l@{}}
\text{(down)} & \text{If $z > 1$, add the function $x \mapsto f(x) + \min\{c^{ij}_{z-1,z},c^{i(j-1)}_{z-1,z}\}$ to $F^{ij}_{z-1}$.} \\
\text{(up)} & \text{If $z < l$, add the function $x \mapsto f(x) + \min\{c^{ij}_{z,z+1},c^{i(j-1)}_{z,z+1}\}$ to $F^{ij}_{z+1}$.} \\
\text{(south)} \quad & \text{If $j > 1$, add the function $x \mapsto f(x) + c^{i(j-1)\updownarrow}_z \cdot (\upsilon^j-\upsilon^{j-1})$ to $F^{i(j-1)}_z$.} \\
\text{(north)} & \text{If $j < q$, add the function $x \mapsto f(x) + c^{ij\updownarrow}_z \cdot (\upsilon^{j+1}-\upsilon^j)$ to $F^{i(j+1)}_z$.} \\
\text{(west)} & \text{If $i > 0$, add the function $x \mapsto f(\xi^i) + \min\{c^{(i-1)j\leftrightarrow}_z,c^{(i-1)(j-1)\leftrightarrow}_z\} \cdot (\xi^i-x)$ to $F^{(i-1)j}_z$.} \\
\text{(east)} & \text{If $i < p$, add the function $x \mapsto f(\xi^{i+1}) + \min\{c^{(i+1)j\leftrightarrow}_z,c^{(i+1)(j-1)\leftrightarrow}_z\} \cdot (x-\xi^{i+1})$ to $F^{(i+1)j}_z$.}
\end{tabular}

The algorithm stops when the heap is empty. For an example run of the algorithm, see Figure~\ref{figure:horizontalsegmentalgorithm}. Its correctness, i.e., the fact that, after termination, for all $i \in \{0, \dots, p\}$, $j \in \{1, \dots, q\}$, $z \in \{1, \dots, l\}$, and $(x, \upsilon^j, z) \in \mathrm{Hor}^{ij}_z$, we have $\min\{f(x) \mid f \in F^{ij}_z\} = \dist_{(G, c)}((x, \upsilon^j, z), T)$, follows from the following two lemmas.

\begin{figure}
 \centering
 \begin{tikzpicture}
  \draw ( -0.3, 0.0) -- (14.3, 0.0);
  \draw ( -0.3, 2.0) -- (14.3, 2.0);
  \draw ( 0.0, -0.3) -- ( 0.0, 2.3);
  \draw ( 8.0, -0.3) -- ( 8.0, 2.3);
  \draw (14.0, -0.3) -- (14.0, 2.3);

  \node[anchor=west] at ( 3.2, 0.8) {$c^\leftrightarrow$};
  \node[           ] at ( 4.0, 0.75) {$=$};
  \node[anchor=east] at ( 4.7, 0.8) {$2$};
  \node[anchor=west] at ( 3.2, 1.2) {$c^\updownarrow$};
  \node[           ] at ( 4.0, 1.1) {$=$};
  \node[anchor=east] at ( 4.7, 1.15) {$20$};
  \node[anchor=west] at ( 10.2, 0.8) {$c^\leftrightarrow$};
  \node[           ] at ( 11.0, 0.75) {$=$};
  \node[anchor=east] at ( 11.7, 0.8) {$1$};
  \node[anchor=west] at ( 10.2, 1.2) {$c^\updownarrow$};
  \node[           ] at ( 11.0, 1.1) {$=$};
  \node[anchor=east] at ( 11.7, 1.15) {$10$};

  \fill[color=darkgreen] ( 0.0, 0.0) circle (0.1cm);
  \node[color=darkgreen, anchor=north east ] at ( 0.0, 0.0) {$T$};
  \node[color={rgb:darkgreen,1;darkblue,1;white,1}, anchor=north] at ( 4.0, 0.0) {$x \mapsto 2x$};
  \draw[color={rgb:darkgreen,1;darkblue,1;white,1}, thick, shorten >= 0.01cm, shorten <= 0.01cm] ( 0.0, 0.0) -- ( 8.0, 0.0);
  \node[color=darkblue, anchor=north] at (11.0, 0.0) {$x \mapsto 4 + x$};
  \draw[color=darkblue, thick, shorten >= 0.02cm, shorten <= 0.02cm] ( 8.0, 0.0) -- (14.0, 0.0);
  \node[color={rgb:brown,2;black,2}, anchor=south] at ( 1.5, 2.0) {$x \mapsto 20 + 2x$};
  \draw[color={rgb:brown,2;black,2}, thick, shorten >= 0.02cm, shorten <= 0.02cm] ( 0.0, 2.0) -- ( 3.0, 2.0);
  \node[color=darkred, anchor=south] at ( 5.5, 2.0) {$x \mapsto 26 - 2x$};
  \draw[color=darkred, thick, shorten >= 0.02cm, shorten <= 0.02cm] ( 3.0, 2.0) -- ( 8.0, 2.0);
  \node[color=violet, anchor=south] at (11.0, 2.0) {$x \mapsto 14 + x$};
  \draw[color=violet, thick, shorten >= 0.02cm, shorten <= 0.02cm] ( 8.0, 2.0) -- (14.0, 2.0);

  \tikzset{arrow/.style={-latex, shorten >= 0.02cm, shorten <= 0.02cm}}
  \draw[arrow, gray, bend angle = 45, bend left] ( 0, 0) to ( 1, 0);
  \draw ( 0.5, 0.5) circle (0.2cm) node {$0$};
  \draw[arrow, gray, bend angle = 20, bend right] ( 2, 0) to ( 2, 2);
  \draw ( 1.9, 1.0) circle (0.2cm) node {$1$};
  \draw[arrow, gray, bend angle = 70, bend left] ( 6.7, 0) to ( 8.7, 0);
  \draw ( 7.7, 0.3) circle (0.2cm) node {$1$};
  \draw[arrow, gray, bend angle = 20, bend right] ( 9.4, 0) to ( 9.4, 2);
  \draw ( 9.3, 1.0) circle (0.2cm) node {$2$};
  \draw[arrow, gray, bend angle = 70, bend left] ( 8.7, 2) to ( 6.7, 2);
  \draw ( 7.7, 1.7) circle (0.2cm) node {$3$};
 \end{tikzpicture}
 \begin{caption}
 {
  \label{figure:horizontalsegmentalgorithm}
  Example run of the algorithm
  computing the distance from all horizontal line segments to $T$.
  The instance consists of two horizontally adjacent tiles (i.e., $p=3$, $q=2$, and $l=1$).
  The coordinates of these tiles are $\xi^1=0$, $\xi^2=4$, $\xi^3=7$, $\upsilon^1=0$, and $\upsilon^2=1$.
  The target \textcolor{darkgreen}{$T=\{(0,0,1)\}$} consists of the single point in the bottom left corner of the left tile.
  We disregard the outside tiles (by setting their costs to infinity).
  All other costs are as written in the centers of the respective tiles.
  During the algorithm, five affine functions are added to the four horizontal segments. 
  The horizontal segments are colored by the function attaining the minimum in the end of the algorithm.
  The incoming arrow depicts the propagation by which that function was added
  and is numbered by the iteration of the algorithm (where 0 stands for initialization).
 }
 \end{caption}
\end{figure}

\begin{lemma}
 \label{lemma:pathexists}
 Let $i \in \{0, \dots, p\}$, $j \in \{1, \dots, q\}$, $z \in \{1, \dots, l\}$, and $(x, \upsilon^j, z) \in \mathrm{Hor}^{ij}_z$. If, at any point during the algorithm, an affine function $f$ is added to $F^{ij}_z$, then there is an $(x,\upsilon^j,z)$-$T$-path of cost at most $f(x)$. In particular, $f(x) \ge \dist_{(G, c)}((x, \upsilon^j, z), T)$ holds for all $f \in F^{ij}_z$.
\end{lemma}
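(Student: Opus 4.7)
The plan is to prove the lemma by induction on the order in which affine functions are added to the sets $F^{ij}_z$ during the algorithm. The ``in particular'' statement then follows immediately, because every $f$ currently in $F^{ij}_z$ was added at some earlier iteration, and functions are only removed once they are dominated (which preserves the upper bound at every $x$).

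For the base case, I would handle the initialization. If $\mathrm{Hor}^{ij}_z \subseteq T$, the zero function is added and the empty path witnesses the bound. If only an endpoint of the segment, say $(\xi^i,\upsilon^j,z)$, lies in $T$, then traversing $x - \xi^i$ horizontal edges along $\mathrm{Hor}^{ij}_z$ and using, for each edge, whichever of the two adjacent tiles $V^{ij}_z$ and $V^{i(j-1)}_z$ gives the cheaper cost, produces a path of cost exactly $\min\{c^{ij\leftrightarrow}_z,c^{i(j-1)\leftrightarrow}_z\}\cdot(x-\xi^i)$, matching the affine function added. The case where $(\xi^{i+1},\upsilon^j,z) \in T$ is symmetric.

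For the inductive step, suppose $f$ is removed from the heap from some segment $\mathrm{Hor}^{ij}_z$. By the inductive hypothesis, for every $x \in [\xi^i,\xi^{i+1}]$ there is a path $P_x$ from $(x,\upsilon^j,z)$ to $T$ of cost at most $f(x)$. I would verify each of the six propagation rules separately by exhibiting a concrete path. For (up), the function added to $F^{ij}_{z+1}$ at $x$ equals $f(x) + \min\{c^{ij}_{z,z+1},c^{i(j-1)}_{z,z+1}\}$; concatenating the via from $(x,\upsilon^j,z+1)$ to $(x,\upsilon^j,z)$ (which lies on the boundary of both tiles $V^{ij}_z$ and $V^{i(j-1)}_z$, so we may charge the cheaper via cost) with $P_x$ gives a path of the claimed cost. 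The case (down) is symmetric. For (north), the new function at $x \in [\xi^i,\xi^{i+1}]$ on $F^{i(j+1)}_z$ evaluates to $f(x) + c^{ij\updownarrow}_z\cdot(\upsilon^{j+1}-\upsilon^j)$; prepending to $P_x$ the vertical subpath within tile $V^{ij}_z$ from $(x,\upsilon^{j+1},z)$ to $(x,\upsilon^j,z)$ yields the required path. Case (south) is symmetric. Finally, for (east), the new function on $F^{(i+1)j}_z$ at $x$ equals $f(\xi^{i+1}) + \min\{c^{(i+1)j\leftrightarrow}_z,c^{(i+1)(j-1)\leftrightarrow}_z\}\cdot(x-\xi^{i+1})$; prepending to $P_{\xi^{i+1}}$ the horizontal subpath from $(x,\upsilon^j,z)$ to $(\xi^{i+1},\upsilon^j,z)$ along the segment $\mathrm{Hor}^{(i+1)j}_z$ (charging each edge to the cheaper of the two tiles sharing that segment) produces a path of the asserted cost. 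Case (west) is symmetric.

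There is no serious obstacle: the proof is bookkeeping. The only point requiring mild care is that each horizontal or vertical segment on an internal grid line lies on the boundary of two tiles, so an edge along the segment can always be charged to whichever of the two tiles assigns it the smaller cost — which is exactly why the minima appear in the formulas for the propagated functions and why the constructed paths are legitimate upper bounds.
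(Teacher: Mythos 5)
Your proof is correct and follows essentially the same approach as the paper: induction on the order in which affine functions are added, with the base case at initialization and the inductive step concatenating a straight sequence of edges (charged to the cheaper adjacent tile) with the path provided by the inductive hypothesis for the source function. You spell out the six propagation cases explicitly where the paper handles them in one sentence, but the underlying argument is the same.
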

\begin{proof}
 We prove the assertion by induction on the order in which the affine functions are added (for all combinations of $i$, $j$, $z$, and $x$ at once). 
 If $f$ is added during the initialization phase, then there is a path with the desired property that only consists of horizontal edges.
Otherwise, $f$ is added during some iteration later on and derived from some $g \in F^{i'j'}_{z'}$. 
Then it suffices to build (possibly zero) edges in one of the six possible directions until a vertex $(x', \upsilon^{j'}, z') \in \mathrm{Hor}^{i'j'}_{z'}$ is reached. 
By the induction hypothesis, there is an $(x', \upsilon^{j'}, z')$-$T$-path of cost at most $g(x')$. 
The propagation ensures that $f(x)$ is at least the cost of this path combined with the straight series of edges.
\end{proof}

\begin{lemma}
 \label{lemma:findspath}
Let $K \in \Rbb_{\ge 0}$ be the key of a function that is chosen in some iteration of the algorithm (or $K = \infty$ if the algorithm has terminated) 
and let $i \in \{0, \dots, p\}$, $j \in \{1, \dots, q\}$, $z \in \{1, \dots, l\}$, and $x \in \{\xi^i, \dots, \xi^{i+1}\}$ such that $\dist_{(G, c)}((x, \upsilon^j, z), T) < K$. 
Then there is a function $f \in F^{ij}_z$ with $f(x) = \dist_{(G, c)}((x, \upsilon^j, z), T)$. 
\end{lemma}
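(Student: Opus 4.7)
The plan is to proceed by contradiction via a minimum-distance counterexample. Suppose the claim fails and, among all failing tuples $(K, i, j, z, x)$, pick one that minimises $d := \dist_{(G,c)}(v, T)$, where $v := (x, \upsilon^j, z)$. Combined with Lemma~\ref{lemma:pathexists}, the failure of the claim amounts to saying that every $f \in F^{ij}_z$ satisfies $f(x) > d$ at the start of the iteration of key $K$.

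First I would handle $d = 0$: then $v \in T$, and since $T$ is consistent with the grid, the rectangle of $T$ containing $v$ either covers the whole segment $\mathrm{Hor}^{ij}_z$ or meets it at exactly one of its endpoints, so the initialisation of the algorithm has already placed a function in $F^{ij}_z$ of value $0$ at $x$, contradicting the choice of $v$. For $d > 0$, I would apply Lemma~\ref{lemma:pathstructure} to a shortest $v$-$T$-path $P$, so that $P$ uses only one type of edges (horizontal, vertical, or via) before entering a tile not containing $v$; let $w$ be the first vertex at which such a new tile is entered. In each of the six possible directions, $w$ lies on a horizontal segment $\sigma$ adjacent to $\mathrm{Hor}^{ij}_z$ in the propagation graph, $d(w) < d$, and the matching propagation rule (west, east, south, north, down, or up) turns any $g \in F(\sigma)$ with $g(x_w) = d(w)$ into a function $h \in F^{ij}_z$ satisfying $h(x) = d(w) + c(v,w) = d$. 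By the minimality of $d$ together with Lemma~\ref{lemma:pathexists}, such a $g$ already exists at iteration $m$; moreover $\mathrm{key}(g) \le g(x_w) = d(w) < d < K$, so $g$ has been extracted in some earlier iteration, and at that moment the propagation added $h$ to $F^{ij}_z$. If $h$ is still in $F^{ij}_z$ we contradict the assumption directly; otherwise $h$ was removed because some $h' \in F^{ij}_z$ dominated it, giving $h'(x) \le h(x) = d$, while Lemma~\ref{lemma:pathexists} forces $h'(x) \ge d$, so $h'(x) = d$, again a contradiction.

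The step I expect to require the most care is the boundary situation in which $v$ sits at a column coordinate $x \in \{\xi^i,\xi^{i+1}\}$ and the first phase of $P$ walks along a neighbouring horizontal segment rather than along $\mathrm{Hor}^{ij}_z$ itself (or, analogously, a vertical or via crossing whose cheapest edges live in a tile different from the one named by the direct propagation rule). There the minimality hypothesis only supplies $g$ in a segment that is two propagation steps away from $F^{ij}_z$, so one has to follow a short chain of propagations, verifying at each link that the intermediate function's key is still strictly below $K$ (bounded by its endpoint values and Lemma~\ref{lemma:pathexists}), so that every required extraction and propagation has occurred before iteration $m$. The same domination argument then closes the gap and produces the desired function in $F^{ij}_z$.
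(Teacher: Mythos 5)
Your argument follows the same inductive strategy as the paper's (a minimal-distance counterexample is an induction on $\dist_{(G,c)}(s,T)$ in disguise), and the extraction/propagation/domination mechanics you invoke are the right ones. But the pivotal claim in your second paragraph is false as stated: it is not true that, in every case, the matching propagation rule turns a $g \in F(\sigma)$ with $g(x_w) = d(w)$ into a function $h \in F^{ij}_z$ with $h(x) = d$. Concretely, if $x = \xi^i$ and $P$ first walks west, then $w = (\xi^{i-1}, \upsilon^j, z)$ and the only $\sigma$ adjacent to $\mathrm{Hor}^{ij}_z$ is $\mathrm{Hor}^{(i-1)j}_z$; the east propagation from $F^{(i-1)j}_z$ into $F^{ij}_z$ evaluates $g$ at the \emph{opposite} endpoint $\xi^i$, giving $h(\xi^i) = g(\xi^i)$, while the minimality hypothesis only controls $g(\xi^{i-1}) = d(w)$ --- and $g(\xi^i)$ may strictly exceed $d$. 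The same trouble appears when two consecutive $\upsilon$-coordinates coincide, which your ``or, analogously\ldots'' aside does not really cover.

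You do flag this boundary situation and sketch the right repair --- a chain of propagations, each intermediate function having key $< K$ and hence already extracted before iteration $m$. That is exactly what the paper's proof does, but it organizes the idea into two explicit parts that make the argument airtight. The first part deliberately disregards the given $(i,j)$: by choosing $i',j'$ for $s'$ so that the propagation lands on the \emph{correct} endpoint (in the example above, take $\sigma = \mathrm{Hor}^{(i-2)j}_z$ and propagate east into $F^{(i-1)j}_z$), it produces a function with value $d$ at $x$ in \emph{some} $F^{i_1j_1}_z$ with $s \in \mathrm{Hor}^{i_1j_1}_z$. The second part then pushes this to \emph{all} $(i,j)$ with $s \in \mathrm{Hor}^{ij}_z$, via the zero-cost propagations between segments that share the endpoint $s$, using $\mathrm{key}(f_1) \le d < K$ at each link together with Lemma~\ref{lemma:pathexists} to ensure extraction happened in time. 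The text right after the proof and Figure~\ref{figure:horizontalsegmentalgorithm} show that both parts are genuinely needed. Restructuring your sketch along those two parts, rather than aiming at the given $(i,j)$ directly and patching afterward, is what is required to close the gap.
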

\begin{proof}
Let $s = (x, \upsilon^j, z)$.
We prove the result by induction on $\dist_{(G, c)}(s, T)$.
Note that there can be different $i$ and $j$ with $s \in \mathrm{Hor}^{ij}_z$,
both because $x$ can be a grid coordinate
and because there can be two grid coordinates at $\upsilon^j$.
Our inductive step consists of two parts.
In the first part, we will disregard the given $i$ and $j$
and instead show it is possible to choose $i$ and $j$
such that the inductive hypothesis holds for the given point $s$.

If $s \in T$, then an affine function as desired was added to some $F^{ij}_z$ in the initialization phase.
By Lemma~\ref{lemma:pathexists},
the minimum value of the functions in $F^{ij}_z$ at $x$ cannot decrease.
Because we only remove dominated functions,
there will always be a function in $F^{ij}_z$ attaining this value.

If $s \notin T$, then consider a shortest $s$-$T$-path $P$.
By Lemma~\ref{lemma:pathstructure}, we may assume that
$P$ starts with a straight series of edges to
the first point that lies in a tile not containing $s$.
Denote this point by $s'$.
Consider $i'$, $j'$, and $z'$ with $s' \in \mathrm{Hor}^{i'j'}_{z'}$.
By the induction hypothesis, a function $g$ with $g(x') = \dist_{(G,c)}(s',T)$ is contained in~$F^{i'j'}_{z'}$. 
Using Lemma~\ref{lemma:pathexists}, we derive that $g$ is not strictly dominated in $x'$. 
Hence, $\mathrm{key}(g) \le g(x') = \dist_{(G,c)}(s',T) < \dist_{(G, c)}(s, T) < K$. 
Since keys never decrease during the algorithm by Lemma~\ref{lemma:functionsets},
$g$ was chosen and removed from the heap in a prior iteration.
Now note that it is possible to choose $i$, $j$, $i'$, and $j'$ such that
$s \in \mathrm{Hor}^{ij}_z$ and
the propagation of $g$ adds a function $f$ satisfying $f(x) = c(P)$
to $F^{ij}_z$.
This concludes the first part of the inductive step.

For the second part of the inductive step,
consider all $i$ and $j$ with $s \in \mathrm{Hor}^{ij}_z$.
Note that the possible choices of $i$ and $j$ are independent and consecutive.
We know from the first part that there are some $i_1$ and $j_1$ such that there is
a function $f_1 \in F^{i_1j_1}_z$ with $f_1(x) = \dist_{(G,c)}(s,T) < K$.
Since this function is not dominated in $x$,
we conclude $\mathrm{key}(f_1) < K$.
This means that this function has already been propagated
and has added functions with the same value at $x$
to all adjacent $F^{ij}_z$
for which $s \in \mathrm{Hor}^{ij}_z$.
Successive application concludes the proof.
\end{proof}

Note that both steps are necessary in the above proof.
Figure~\ref{figure:horizontalsegmentalgorithm} contains one such situation:
we are looking for a function $h \in F^{12}_1$
that gives us the correct distance from the top right corner of the left tile to $T$,
i.e., with $h(4) = \dist_{(G,c)}((4,1,1), T)=18$.
Since the vertical cost is lower in the right tile,
the first step gives us the function \textcolor{darkblue}{$g:x \mapsto 4 + x \in F^{21}_1$},
which is propagated to \textcolor{violet}{$f:x \mapsto 14 + x \in F^{22}_1$},
which in turn is propagated to the desired function \textcolor{darkred}{$h:x \mapsto 26 - 2x \in F^{12}_1$}.

To ensure that the algorithm terminates and has the desired running time, we first show:

\begin{lemma}
 \label{lemma:maxfunctions}
 Let $k' \coloneqq \min\{k, (q+1)l\}$.
 The number of slopes of functions ever added to $F^{ij}_z$ is at most
 $2k' + 1$.
 In particular, this also bounds the cardinality of $F^{ij}_z$ at any stage.
\end{lemma}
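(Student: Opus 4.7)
The plan is to track the possible slopes of affine functions that are ever added to a fixed $F^{ij}_z$, show that the number of distinct such slopes is bounded by $2k' + 1$, and finally use non-domination to bound the cardinality of $F^{ij}_z$ by the number of distinct slopes.

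First I would inspect the six propagation rules and the initialization, and classify them by their effect on the slope of a function. Initialization either inserts a constant (slope $0$) or a function with slope $\pm\min\{c^{ij\leftrightarrow}_z,c^{i(j-1)\leftrightarrow}_z\}$. The four vertical/via rules (up, down, south, north) only add a constant to $f$, so the slope is preserved. The two horizontal rules (east, west) replace $f$ by a function evaluated at a fixed endpoint plus a new linear term whose slope is $\pm\min\{c^{ij\leftrightarrow}_z,c^{i(j-1)\leftrightarrow}_z\}$, where the indices $i,j,z$ are those of the destination segment. Hence every function eventually residing in $F^{ij}_z$ has a slope that was ``set'' either by initialization on some segment in column $i$ or by an east/west propagation that arrived at column $i$; all subsequent propagations preserve it.

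From this observation, tracing back to the last slope-setting event, the nonzero slopes of functions in $F^{ij}_z$ all have the form $\pm\min\{c^{ij'\leftrightarrow}_{z'},c^{i(j'-1)\leftrightarrow}_{z'}\}$ for some $j'\in\{1,\dots,q\}$ and $z'\in\{1,\dots,l\}$ (the case $j'=j$, $z'=z$ covers initialization on $\mathrm{Hor}^{ij}_z$ itself). These minima are taken over the set of horizontal costs $\{c^{ij''\leftrightarrow}_{z'} : j''\in\{0,\dots,q\},\, z'\in\{1,\dots,l\}\}$, which has cardinality at most $(q+1)l$, yielding at most $2(q+1)l$ nonzero slopes. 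Each such slope is also a horizontal edge cost, of which there are at most $k$ distinct values by the definition of $k$ in \eqref{eq:def_k}. Combining these two bounds and adding the slope $0$ gives at most $2\min\{k,(q+1)l\}+1 = 2k'+1$ distinct slopes that can ever appear in $F^{ij}_z$.

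For the ``in particular'' part, I would argue that two affine functions sharing the same slope on the common interval $[\xi^i,\xi^{i+1}]$ are either identical or one strictly dominates the other on the whole interval, so by the invariant maintained in Lemma~\ref{lemma:functionsets} at most one of them can remain in $F^{ij}_z$ at any given time. Thus $|F^{ij}_z|$ is bounded by the number of distinct slopes, which is at most $2k'+1$. The main obstacle is the bookkeeping in the first step: one must verify carefully that up/down/south/north propagations really do preserve the slope, and that any east/west step completely overwrites it, so that the slope of every function in $F^{ij}_z$ can indeed be attributed to column $i$; once this structural point is nailed down, the counting is straightforward.
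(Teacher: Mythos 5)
Your proposal is correct and follows essentially the same route as the paper's proof: an inductive characterization of the possible slopes (zero or $\pm$ a horizontal cost $c^{ij'\leftrightarrow}_{z'}$ in column $i$), giving the bound $2k'+1$ on the number of distinct slopes, combined with the observation that two functions of equal slope cannot coexist in $F^{ij}_z$ by the non-domination invariant of Lemma~\ref{lemma:functionsets}. Your version merely makes the induction explicit by classifying the six propagation rules as slope-preserving (up/down/north/south) versus slope-setting (east/west and initialization), which is a more detailed presentation of the same argument.
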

\begin{proof}
 By induction, the slope of every affine function added to $F^{ij}_{z}$ during the algorithm is either zero or an element of $\{+c^{ij'\leftrightarrow}_{z'}, -c^{ij'\leftrightarrow}_{z'} \mid j' \in \{0, \dots, q\}, z' \in \{1, \dots, l\}\}$. 
 Hence, if there were more than $2k'+1$ functions in $F^{ij}_z$, then there would be two different functions having the same slope. 
 But then one of them is strictly dominated by the other one and would be removed from $F^{ij}_z$, 
 contradicting the specification in Lemma~\ref{lemma:functionsets}.
\end{proof}

In order to achieve the desired running time, 
we bound the number of iterations by~$(p+1)q(2k'+1)l$. Each iteration chooses and removes a function $f$ from the heap. 
Let $(x,\upsilon^j,z) \in \mathrm{Hor}^{ij}_z$ such that $f(x) = \mathrm{key}(f)$. 
By Lemma~\ref{lemma:findspath}, $f(x) = \dist((x,\upsilon^j,z),T)$ and, by Lemma~\ref{lemma:pathexists}, 
each function $g$ added to $F^{ij}_z$ later on satisfies $g(x) \geq \dist_{G,c}((x,\upsilon^j,z), T)$. 
This means no other function with the same slope as $f$
can ever be an element of $F^{ij}_z$ in the future.
Since the number of slopes of functions in $F^{i'j'}_{z'}$ was bounded by $2k'+1$ in Lemma~\ref{lemma:maxfunctions}, we obtain the claimed bound on the number of iterations.

 Finally, we need to implement each iteration in amortized time $O(\log(p+q+l))$. 
 Since each iteration generates at most six new functions and the number of iterations is bounded by $(p+1)q(2k'+1)l$, at most $2(p+1)ql + 6(p+1)q(2k'+1)l$ functions are added in total, including the initialization. 
 This also bounds the size of the binary heap to $O(pq^2l^2)$ such that each heap operation can be performed in time $O(\log(p+q+l))$. 
 By Lemma~\ref{lemma:functionsets}, each added function causes at most one increase-key operation. 
 The total number of deletions is clearly bounded by the total number of insertions. 
 Thus, the total time needed for the heap operations is given by $O(pqk'l\log(p+q+l))$. 
 The same holds for the time required for the updates of the sets $F^{ij}_z$ by Lemma~\ref{lemma:functionsets}. This proves:

\begin{theorem}
 \label{thm:function_labeling}
 There is an algorithm that computes for each horizontal segment $\mathrm{Hor}^{ij}_z$ a set $F^{ij}_z$ of at most $2k'+1$ 
 affine functions such that $\min\left\{f(x) \mid f \in F^{ij}_z\right\} = \dist_{(G,c)}((x,\upsilon^j,z),T)$ 
 for all $(x,\upsilon^j,z) \in \mathrm{Hor}^{ij}_z$. 
 The algorithm can be implemented to run in $O(pqk'l \log (p+q+l))$ time,
 where again $k' = \min\{k, (q+1)l\}$.
 \hfill\qed
\end{theorem}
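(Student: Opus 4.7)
The plan is to assemble the proof from the ingredients developed in the preceding discussion: correctness of the labeling, the structural size bound on each $F^{ij}_z$, and the amortized runtime analysis.

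First I would argue correctness. By Lemma~\ref{lemma:pathexists}, every affine function ever added to $F^{ij}_z$ is a valid upper bound for the distance function along $\mathrm{Hor}^{ij}_z$, so $\min\{f(x)\mid f\in F^{ij}_z\}\ge\dist_{(G,c)}((x,\upsilon^j,z),T)$ holds throughout. For the opposite direction, I would apply Lemma~\ref{lemma:findspath} with $K=\infty$ (the key value at termination): for each $(x,\upsilon^j,z)\in\mathrm{Hor}^{ij}_z$, there exists $f\in F^{ij}_z$ with $f(x)=\dist_{(G,c)}((x,\upsilon^j,z),T)$. The bound $|F^{ij}_z|\le 2k'+1$ is exactly the content of Lemma~\ref{lemma:maxfunctions}, so the claimed structural statement follows immediately.

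For the runtime, the main step is to bound the number of heap extractions by $(p+1)q(2k'+1)l$. The key point is that whenever a function $f$ is removed from the heap in some iteration with $\mathrm{key}(f)=f(x)$ for some $x$ and some index triple $(i,j,z)$ with $f\in F^{ij}_z$, Lemma~\ref{lemma:findspath} (applied just before the extraction) implies $f(x)=\dist_{(G,c)}((x,\upsilon^j,z),T)$, and by Lemma~\ref{lemma:pathexists} no function subsequently added to $F^{ij}_z$ can undercut this value at $x$. Hence a function of the same slope as $f$ can never again enter $F^{ij}_z$. Combining this with Lemma~\ref{lemma:maxfunctions}, each of the $(p+1)ql$ sets $F^{ij}_z$ can contribute at most $2k'+1$ extractions over the lifetime of the algorithm, giving the desired bound.

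Finally I would account for the per-iteration cost. Each extraction triggers at most six propagations, so the total number of insertions into heap and into the various $F^{ij}_z$ is $O(pqk'l)$, which also bounds the size of the heap. Each heap operation therefore costs $O(\log(p+q+l))$ after noting $k'\le(q+1)l$. For the data-structure updates from Lemma~\ref{lemma:functionsets}, the per-insertion cost $O((D+1)\log(D+|F|))$ telescopes: summing the $D$'s over all insertions is bounded by the total number of deletions, which in turn is bounded by the total number of insertions. Multiplying by the $\log(p+q+l)$ factor gives $O(pqk'l\log(p+q+l))$ total time across all updates, including the at most one increase-key operation per insertion mentioned in Lemma~\ref{lemma:functionsets}. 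Combining with the initialization (at most $2(p+1)ql$ functions) yields the stated bound. The only conceptually delicate point is the slope-exhaustion argument bounding the iteration count; everything else is bookkeeping.
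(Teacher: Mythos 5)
Your proposal is correct and follows essentially the same route as the paper, which in fact embeds the proof of Theorem~\ref{thm:function_labeling} in the paragraphs immediately preceding the theorem statement (hence the $\qed$): correctness from Lemmas~\ref{lemma:pathexists} and~\ref{lemma:findspath} (the latter with $K=\infty$), the cardinality bound from Lemma~\ref{lemma:maxfunctions}, the iteration bound via the slope-exhaustion argument combining Lemmas~\ref{lemma:findspath}, \ref{lemma:pathexists}, and~\ref{lemma:maxfunctions}, and the amortized accounting of heap operations and of the $D$ terms from Lemma~\ref{lemma:functionsets}. You correctly identify the slope-exhaustion step as the one non-routine ingredient; the rest of your bookkeeping, including the observation that $k'\le(q+1)l$ is what makes the heap-operation cost $O(\log(p+q+l))$, matches the paper.
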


We will now consider how to implement queries. We do so for all pairs of $i$ and $j$ independently.
If we wanted to execute a query right after executing the algorithm described by Theorem~\ref{thm:function_labeling}
without any further preprocessing,
the best we could do is $O(\log (p + q) + l \log k)$ query time:
we first compute $i$ and $j$ by binary search.
In each of the $4l$ segments
$\mathrm{Hor}^{ij}_{z'}$, $\mathrm{Ver}^{(i+1)j}_{z'}$, $\mathrm{Hor}^{i(j+1)}_{z'}$, and $\mathrm{Ver}^{ij}_{z'}$ 
($z' \in \{1, \dots, l\}$), we then compute the closest point $r$ to the query location $s$.
The distance from $s$ to $r$ can be computed in amortized constant time
and the distance from $r$ to $T$ can be looked up in $O(\log k)$ time
in the data structure storing the affine functions of that segment.

The above can be seen as an evaluation of the minimum
of $O(\min\{k, (p+q+1)l\} l)$ affine functions.
Hence it might be worthwhile building up the data structure 
described by Lemma~\ref{lemma:query_data_structure}
in an additional preprocessing step.
This would speed up our queries to $O(\log (p + q + l))$ time,
however at the cost of an additional $O(\min\{k, (p + q + 1)l\} l \log (p + q + l))$
preprocessing time per tile.

The following result will obtain a trade-off between these two alternatives.
By choosing the trade-off factor $0 < \epsilon \leq 1$ to be a small constant,
we obtain a query time of $O(\log(p + q + l))$
after a preprocessing time which is arbitrarly close to
$O(pq\min\{k,(p+q+1)l\}l\log(p+q+l))$.

\begin{theorem}
\label{thm:general_query_data_structure}
Let $0 < \epsilon \le 1$, let $c : E \to \Rbb_{>0}$ depend on tile and direction, and let $T \subseteq V$ be consistent with the grid. Then there is a data structure that requires $O(pq\min\{k,(p+q+1)l\}l^{1+\epsilon}\frac{1}{\epsilon}\log(p+q+l))$ preprocessing time and, for any given $s \in V$, can then determine $\dist_{(G,c)}(s,T)$ in $O(\log (p + q) + \frac{1}{\epsilon} \log (k + l))$ query time.
\end{theorem}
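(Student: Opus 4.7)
The plan is to combine Theorem~\ref{thm:function_labeling} with a hierarchical layer-partitioning data structure per tile footprint. First, apply Theorem~\ref{thm:function_labeling} and its analogue for vertical segments to compute, for every horizontal and vertical segment, a representation of the distance from that segment to $T$ as the pointwise minimum of at most $2k''+1$ affine functions, where $k'' := \min\{k,(p+q+1)l\}$; this takes $O(pq\,k''\,l\log(p+q+l))$ time and is absorbed in the target bound. By Lemma~\ref{lemma:pathstructure}, any shortest $s$-$T$-path from $s=(x_s,y_s,z_s)\in V^{ij}_{z_s}$ either stays on layer $z_s$ until it leaves $V^{ij}_{z_s}$, or begins with vias to some layer $z^*\neq z_s$ and then, by Lemma~\ref{lemma:pathstructure} applied at $(x_s,y_s,z^*)$, leaves $V^{ij}_{z^*}$ through a straight horizontal or vertical segment. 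Combining both cases,
\[
\dist_{(G,c)}(s,T) \;=\; \min_{z^*\in\{1,\ldots,l\}}\bigl[\,c^{ij}_{z_s,z^*} + \mathrm{LD}^{ij}_{z^*}(x_s,y_s)\,\bigr],
\]
where $c^{ij}_{z_s,z^*}$ is the total via cost in footprint $(i,j)$ between $z_s$ and $z^*$, and $\mathrm{LD}^{ij}_{z^*}$ is expressible as the pointwise minimum of $O(k'')$ affine functions in $(x,y)$ by case-splitting on the four boundary segments of $V^{ij}_{z^*}$ and on the signs of $x_r-x_s$ and $y_r-y_s$ for the Manhattan movement on layer $z^*$ from $(x_s,y_s)$ to the chosen boundary point~$r$.

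The core construction is a hierarchical layer tree $T^{ij}$ for each footprint. Let $d:=\lceil l^{\epsilon}\rceil$ and build a balanced $d$-ary tree over $\{1,\ldots,l\}$ of depth $O(1/\epsilon)$; each internal node $v$ covers a contiguous range $[z_a^v,z_b^v]$ with $d$ children $c$ having disjoint sub-ranges $[z_a^c,z_b^c]$. Exploiting the additive decomposition $c^{ij}_{z_s,z^*}=c^{ij}_{z_s,z_a^c}+c^{ij}_{z_a^c,z^*}$ for $z^*<z_a^c\le z_s$ (and symmetrically for $z^*>z_b^c\ge z_s$), build, for every $(v,c)$, two Lemma~\ref{lemma:query_data_structure} data structures on the rectangle $[\xi^i,\xi^{i+1}]\times[\upsilon^j,\upsilon^{j+1}]$:
\begin{align*}
D^{c,L}_v &: \bigl\{(x,y)\mapsto c^{ij}_{z_a^c,z^*}+f(x,y)\bigm| z^*\in[z_a^v,z_a^c-1],\ f\in F^{ij}_{z^*}\bigr\},\\
D^{c,R}_v &: \bigl\{(x,y)\mapsto c^{ij}_{z_b^c,z^*}+f(x,y)\bigm| z^*\in[z_b^c+1,z_b^v],\ f\in F^{ij}_{z^*}\bigr\},
\end{align*}
where $F^{ij}_{z^*}$ is the set of $O(k'')$ affine functions realizing $\mathrm{LD}^{ij}_{z^*}$. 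Also precompute prefix sums of the via costs $c^{ij}_{z,z+1}$ (for $O(1)$ evaluation of $c^{ij}_{z_s,z_a^c}$ and $c^{ij}_{z_s,z_b^c}$) and one Lemma~\ref{lemma:query_data_structure} structure per tile $V^{ij}_{z}$ for $\mathrm{LD}^{ij}_z$ itself, used at leaves.

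For the preprocessing bound, a depth-$h$ node covers $l^{1-h\epsilon}$ layers, so its $2d$ structures together hold $O(d\cdot k''\cdot l^{1-h\epsilon}) = O(k''\,l^{1-(h-1)\epsilon})$ affine functions; multiplied by the $l^{h\epsilon}$ depth-$h$ nodes and a $\log(p+q+l)$ factor from Lemma~\ref{lemma:query_data_structure}, this yields $O(k''\,l^{1+\epsilon}\log(p+q+l))$ per depth, and summing over $O(1/\epsilon)$ depths and all $(p+1)(q+1)$ footprints gives the claimed preprocessing bound. A query locates $(i,j,z_s)$ by two binary searches in $O(\log(p+q))$ time and then descends $T^{ij}$: at each internal node $v$, let $c$ be the unique child with $z_s\in[z_a^c,z_b^c]$, query $D^{c,L}_v$ and $D^{c,R}_v$ at $(x_s,y_s)$ in $O(\log(k''l))=O(\log(k+l))$ time each (using $k''\le k$), and add $c^{ij}_{z_s,z_a^c}$ and $c^{ij}_{z_s,z_b^c}$ respectively; then descend into $c$. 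At the leaf, evaluate the $\mathrm{LD}^{ij}_{z_s}$ data structure for the $z^*=z_s$ case. Along the descent the complements $[z_a^v,z_b^v]\setminus[z_a^c,z_b^c]$ partition $\{1,\ldots,l\}\setminus\{z_s\}$, so returning the minimum of the $O(1/\epsilon)$ candidates yields the correct distance in total time $O(\log(p+q)+\tfrac{1}{\epsilon}\log(k+l))$.

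The main technical obstacle will be the direction case analysis turning $\mathrm{LD}^{ij}_{z^*}(x,y)$ into an explicit minimum of $O(k'')$ affine functions in $(x,y)$, and verifying the via-cost additive decomposition so that the two sibling-range structures at each internal node correctly reconstruct the contribution from every $z^*\neq z_s$ with neither omissions nor double counting.
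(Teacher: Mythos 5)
Your overall framework is essentially the paper's own: compute per-boundary-segment distance functions via Theorem~\ref{thm:function_labeling}, compose them with the straight-line cost to the boundary and a via-stack cost, and aggregate across layers using a balanced $d$-ary tree of depth $O(1/\epsilon)$ with $d=\lceil l^\epsilon\rceil$, building one point-location structure per sibling range so that each layer participates in only $O(l^\epsilon/\epsilon)$ structures. Your $D^{c,L}_v$ and $D^{c,R}_v$ are exactly the paper's $D^{\downarrow L^<(c)}$ and $D^{\uparrow L^>(c)}$ under a renaming, and the accounting of preprocessing and query time is the same.

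There is, however, one genuine gap. Your master formula
\[
\dist_{(G,c)}(s,T)=\min_{z^*}\bigl[c^{ij}_{z_s,z^*}+\mathrm{LD}^{ij}_{z^*}(x_s,y_s)\bigr]
\]
with $\mathrm{LD}^{ij}_{z^*}$ defined as the cost of first going \emph{straight to a boundary segment} of $V^{ij}_{z^*}$ and then onward, misses the case where a shortest path never reaches the boundary at all. Since $T$ is consistent with the grid, this happens precisely when the whole tile $V^{ij}_{z^*}$ lies in $T$ for some $z^*$ and the optimum is a vias-only path; there your $\mathrm{LD}^{ij}_{z^*}(x_s,y_s)$ is the positive cost of reaching the boundary of an already-target tile, so the formula overestimates by that amount for any interior query point. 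The paper handles this explicitly with an $O(pql)$ precomputation of vias-only costs, queried in $O(1)$ alongside the tree descent; you need to add this (or redefine $\mathrm{LD}^{ij}_{z^*}$ to be identically zero whenever $V^{ij}_{z^*}\subseteq T$, which keeps it a minimum of $O(k'')$ affine functions). A smaller inaccuracy: you describe the move to the boundary as ``Manhattan movement'' with sign case-splitting on both $x_r-x_s$ and $y_r-y_s$, but by Lemma~\ref{lemma:pathstructure} the move to a fixed boundary segment is in a single direction only (vertical for a horizontal segment, horizontal for a vertical one); the $O(k'')$ count comes from the four boundary segments times $O(k'')$ functions per segment, not from a four-quadrant Manhattan case split. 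Fixing the vias-only case makes the argument complete.
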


\begin{proof}
  We first apply Theorem~\ref{thm:function_labeling} to compute $\dist_{(G,c)}(v,T)$ 
  for all $v$ in some $\mathrm{Hor}^{ij}_z$ and (analogously) for all $v$ in some $\mathrm{Ver}^{ij}_z$.
  Now our preprocessing will consider all combinations of $i \in \{0, \dots, p\}$ and $j \in \{0, \dots, q\}$ separately. 
  Given a query location $(x,y,z) \in V^{ij}_{z}$, $i$ and $j$ can be determined in $O(\log(p + q))$ time by binary search. 
  Hence we fix $i$ and $j$ from now on.

  We refer to the union of the up to $4l$ segments
  $\mathrm{Hor}^{ij}_{z'}$, $\mathrm{Ver}^{(i+1)j}_{z'}$, $\mathrm{Hor}^{i(j+1)}_{z'}$, and $\mathrm{Ver}^{ij}_{z'}$ 
  ($z' \in \{1, \dots, l\}$) as the \emph{boundary}. 
  First suppose that a shortest $s$-$T$-path does not touch the boundary. 
  Since $T$ is consistent with the grid, such a path consists of vias only.
  The cost of such vias-only paths can be easily precomputed in $O(pql)$ total preprocessing time,
  allowing for $O(1)$ query time.
  
  Now suppose that a shortest $s$-$T$-path touches the boundary at least once. 
  Consider (without loss of generality) a horizontal segment $H$ of the boundary that is touched first by one such path. By Lemma~\ref{lemma:pathstructure}, we may pick our shortest path such that it starts with a sequence of vias to the layer of $H$, followed by a sequence of vertical edges to $H$. Hence, we can compute the cost of the path until it first touches $H$ as an affine function in $y$, where $s = (x, y, z)$. The rest of the path is a shortest path from some point in $H$ to $T$, so we already computed its cost as a minimum of $O(\min\{k,(q+1)l\})$ affine functions in $x$, by Theorem~\ref{thm:function_labeling}.
 By iterating over each of these affine functions for every horizontal and vertical segment in the boundary, we can express the cost of a shortest $s$-$T$-path as a minimum of $O(\min\{k,(p+q+1)l\}l)$ affine functions in $x$ and~$y$. Note that the only dependence of these functions on $z$ is the cost of the initial via stack. We will exploit this now.
  
  Instead of building a separate data structure for each layer, each involving all these affine functions,
  we distinguish between the cases whether the shortest path from $s$ to $T$ begins 
  without vias, with vias up to a higher level, or with vias down to a lower layer. 
  For the first case, we build up the point location data structure just as in Lemma~\ref{lemma:query_data_structure},
  but each involving only the $O(\min\{k,(p+q+1)l\})$ affine functions on the boundary segments on that layer.
  We call these data structures $D^{= z}$ for $z\in\{1,\ldots,l\}$.
  
  For the other two cases, we build data structures
  $D^{\uparrow [a,b]}$ and $D^{\downarrow [a,b]}$ for some $1\le a\le b \le l$.
  Here $D^{\uparrow [a,b]}$ considers query locations on all layers $z \in \{1, \dots, a\}$ and all boundary segments in the layer range $\{a,\ldots,b\}$, 
  i.e., paths from $(x,y,z)$ that begin with a (possibly empty) via stack from layer $z$ up to some layer $z'\in\{a,\ldots,b\}$ 
  and then proceed via a straight horizontal or vertical path to the boundary.
  Similarly, $D^{\downarrow [a,b]}$ considers query locations on layers $z \in \{b, \dots, l\}$ 
  and boundary segments in the layer range $\{a,\ldots,b\}$. 
  Note that such a data structure involves $O((b+1-a)\min\{k, (p+q+1)l\})$ affine functions, hence,
  by Lemma~\ref{lemma:query_data_structure}, it can be constructed in $O((b+1-a)\min\{k, (p+q+1)l\} \log(p+q+l))$ time
  and then allows for queries in $O(\log(k+l))$ time.

The main advantage is that we can use the same data structure $D^{\uparrow[a,b]}$ for all layers $z \in \{1, \dots, a\}$ because the cost of 
the via stack from layer $z$ to layer $a$ is a constant term that depends only on $z$. 
We can design the data structures so that each affine function on the boundary shows up only in at most $2l^{\epsilon}$ 
(instead of $l$) of these data structures.

To this end, we consider a balanced arborescence $A$ whose leaves are the layers $1,\ldots,l$, such that
$A$ has maximum out-degree $d=\lceil l^{\epsilon}\rceil$ and depth $\lceil\frac{1}{\epsilon}\rceil$, and,
for every vertex $v$ of $A$, the set of leaves reachable from $v$ in $A$ is a consecutive range $L(v)$.
Let $\text{parent}(v)$ denote the parent of $v$ (unless $v$ is the root).
For every vertex $v$ except for the root,
let $L^>(v):= \{z\in L(\text{parent}(v)) \mid z>z' \text{ for all } z'\in L(v)\}$ and
$L^<(v):= \{z\in L(\text{parent}(v)) \mid z<z' \text{ for all } z'\in L(v)\}$.
Then we store $D^{\uparrow L^>(v)}$ and $D^{\downarrow L^<(v)}$, unless this layer range is empty;
see Figure~\ref{figure:partition_layers_new}.

To answer a query for a point on layer $z$, we ask $D^{=z}$ and then
traverse the path from $z$ to the root in the arborescence $A$,
and for each vertex $v$ on that path (except for the root), we ask $D^{\uparrow L^>(v)}$ and $D^{\downarrow L^<(v)}$.
We have to query at most $2\lceil\frac{1}{\epsilon}\rceil +1$ data structures, and each of these queries takes $O(\log(k+l))$ time.

To bound the preprocessing time, we see that each layer $z$ 
appears in the layer range of only $d$ data structures on each level of the arborescence
(one per vertex whose parent's layer range contains $z$), and hence $O(l^{\epsilon}\frac{1}{\epsilon})$ overall.
Hence, the total preprocessing time is $O(\min\{k, (p+q+1)l\} l^{1+\epsilon}\frac{1}{\epsilon} \log(p+q+l))$.
Since we do this for all $i$ and $j$, the theorem follows.

\end{proof}

\begin{figure}
 \centering
 \begin{tikzpicture}
   % This tikz code was generated by partition_layers.py.
   % Don't edit the code, modify script and execute it instead
   \fill(+ 9.00, + 9.00) circle (0.1);
   \draw[gray, anchor=north] (+ 9.00, + 8.90) node {\small $1-39$};
   \fill(+ 9.00, + 9.00) circle (0.1);
   9.0 9.0 3.5999999999999996 7.5
   \draw[->, gray!80] (+ 8.90, + 8.97) -- (+ 3.70, + 7.53);
   \draw(+ 3.60, + 7.50) circle (0.1);
   \draw[ red, anchor=east  ] (+ 3.50, + 7.50) node {\small $\emptyset$};
   \draw[gray, anchor=north] (+ 3.60, + 7.40) node {\small $1-10$};
   \draw[blue, anchor=west ] (+ 3.70, + 7.50) node {\small $11-39$};
   9.0 9.0 7.2 7.5
   \draw[->, gray!80] (+ 8.92, + 8.94) -- (+ 7.28, + 7.56);
   \draw(+ 7.20, + 7.50) circle (0.1);
   \draw[ red, anchor=east  ] (+ 7.10, + 7.50) node {\small $1-10$};
   \draw[gray, anchor=north] (+ 7.20, + 7.40) node {\small $11-20$};
   \draw[blue, anchor=west ] (+ 7.30, + 7.50) node {\small $21-39$};
   9.0 9.0 10.8 7.5
   \draw[->, gray!80] (+ 9.08, + 8.94) -- (+10.72, + 7.56);
   \draw(+10.80, + 7.50) circle (0.1);
   \draw[ red, anchor=east  ] (+10.70, + 7.50) node {\small $1-20$};
   \draw[gray, anchor=north] (+10.80, + 7.40) node {\small $21-30$};
   \draw[blue, anchor=west ] (+10.90, + 7.50) node {\small $31-39$};
   9.0 9.0 14.4 7.5
   \draw[->, gray!80] (+ 9.10, + 8.97) -- (+14.30, + 7.53);
   \draw(+14.40, + 7.50) circle (0.1);
   \draw[ red, anchor=east  ] (+14.30, + 7.50) node {\small $1-30$};
   \draw[gray, anchor=north] (+14.40, + 7.40) node {\small $31-39$};
   \draw[blue, anchor=west ] (+14.50, + 7.50) node {\small $\emptyset$};
   \fill(+10.80, + 7.50) circle (0.1);
   10.8 7.5 5.4 6.0
   \draw[->, gray!80] (+10.70, + 7.47) -- (+ 5.50, + 6.03);
   \draw(+ 5.40, + 6.00) circle (0.1);
   \draw[ red, anchor=east  ] (+ 5.30, + 6.00) node {\small $\emptyset$};
   \draw[gray, anchor=north] (+ 5.40, + 5.90) node {\small $21-23$};
   \draw[blue, anchor=west ] (+ 5.50, + 6.00) node {\small $24-30$};
   10.8 7.5 9.0 6.0
   \draw[->, gray!80] (+10.72, + 7.44) -- (+ 9.08, + 6.06);
   \draw(+ 9.00, + 6.00) circle (0.1);
   \draw[ red, anchor=east  ] (+ 8.90, + 6.00) node {\small $21-23$};
   \draw[gray, anchor=north] (+ 9.00, + 5.90) node {\small $24-26$};
   \draw[blue, anchor=west ] (+ 9.10, + 6.00) node {\small $27-30$};
   10.8 7.5 12.600000000000001 6.0
   \draw[->, gray!80] (+10.88, + 7.44) -- (+12.52, + 6.06);
   \draw(+12.60, + 6.00) circle (0.1);
   \draw[ red, anchor=east  ] (+12.50, + 6.00) node {\small $21-26$};
   \draw[gray, anchor=north] (+12.60, + 5.90) node {\small $27-28$};
   \draw[blue, anchor=west ] (+12.70, + 6.00) node {\small $29-30$};
   10.8 7.5 16.200000000000003 6.0
   \draw[->, gray!80] (+10.90, + 7.47) -- (+16.10, + 6.03);
   \draw(+16.20, + 6.00) circle (0.1);
   \draw[ red, anchor=east  ] (+16.10, + 6.00) node {\small $21-28$};
   \draw[gray, anchor=north] (+16.20, + 5.90) node {\small $29-30$};
   \draw[blue, anchor=west ] (+16.30, + 6.00) node {\small $\emptyset$};
   \fill(+ 9.00, + 6.00) circle (0.1);
   9.0 6.0 5.4 4.5
   \draw[->, gray!80] (+ 8.91, + 5.96) -- (+ 5.49, + 4.54);
   \fill(+ 5.40, + 4.50) circle (0.1);
   \draw[ red, anchor=east  ] (+ 5.30, + 4.50) node {\small $\emptyset$};
   \draw[gray, anchor=north] (+ 5.40, + 4.40) node {\small $24-24$};
   \draw[blue, anchor=west ] (+ 5.50, + 4.50) node {\small $25-26$};
   9.0 6.0 9.0 4.5
   \draw[->, gray!80] (+ 9.00, + 5.90) -- (+ 9.00, + 4.60);
   \fill(+ 9.00, + 4.50) circle (0.1);
   \draw[ red, anchor=east  ] (+ 8.90, + 4.50) node {\small $24-24$};
   \draw[gray, anchor=north] (+ 9.00, + 4.40) node {\small $25-25$};
   \draw[blue, anchor=west ] (+ 9.10, + 4.50) node {\small $26-26$};
   9.0 6.0 12.6 4.5
   \draw[->, gray!80] (+ 9.09, + 5.96) -- (+12.51, + 4.54);
   \fill(+12.60, + 4.50) circle (0.1);
   \draw[ red, anchor=east  ] (+12.50, + 4.50) node {\small $24-25$};
   \draw[gray, anchor=north] (+12.60, + 4.40) node {\small $26-26$};
   \draw[blue, anchor=west ] (+12.70, + 4.50) node {\small $\emptyset$};
   \fill(+16.20, + 6.00) circle (0.1);
   16.2 6.0 14.399999999999999 4.5
   \draw[->, gray!80] (+16.12, + 5.94) -- (+14.48, + 4.56);
   \fill(+14.40, + 4.50) circle (0.1);
   \draw[ red, anchor=east  ] (+14.30, + 4.50) node {\small $\emptyset$};
   \draw[gray, anchor=north] (+14.40, + 4.40) node {\small $29-29$};
   \draw[blue, anchor=west ] (+14.50, + 4.50) node {\small $30-30$};
   16.2 6.0 18.0 4.5
   \draw[->, gray!80] (+16.28, + 5.94) -- (+17.92, + 4.56);
   \fill(+18.00, + 4.50) circle (0.1);
   \draw[ red, anchor=east  ] (+17.90, + 4.50) node {\small $29-29$};
   \draw[gray, anchor=north] (+18.00, + 4.40) node {\small $30-30$};
   \draw[blue, anchor=west ] (+18.10, + 4.50) node {\small $\emptyset$};

 \end{tikzpicture}
 \begin{caption}
 {
  \label{figure:partition_layers_new}
  A possible choice for the aborescence $A$ in the proof of Theorem~\ref{thm:general_query_data_structure} if $l=39$ and $\epsilon = \frac{1}{3}$. For the sake of clarity, the branches below the five white vertices were omitted. Below each node $v$, there is a \textcolor{gray}{gray} label describing the range \textcolor{gray}{$L(v)$}. If $v$ is not the root, there are also a \textcolor{red}{red} label left of $v$ describing the layer range \textcolor{red}{$L^<(v)$} and a \textcolor{blue}{blue} label right of $v$ describing the layer range \textcolor{blue}{$L^>(v)$}.
 } \end{caption}
\end{figure}

\begin{corollary}
\label{thm:futurecost_stefan}
Let $0 < \epsilon \le 1$, let $c : E \to \Rbb_{>0}$ depend on tile and direction, and let $T \subseteq V$, not necessarily consistent with the grid. 
Then there is a data structure that requires $O((p+t)(q+t) \min\{k,(p+q+1)l\}l^{1+\epsilon}\frac{1}{\epsilon}\log(p+q+l+t))$ preprocessing time and, 
for any given $s \in V$, can then determine $\dist_{(G,c)}(s,T)$ in $O(\log (p + q + t) + \frac{1}{\epsilon} \log (k + l))$ query time.
\end{corollary}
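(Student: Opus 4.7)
The plan is to reduce Corollary~\ref{thm:futurecost_stefan} to Theorem~\ref{thm:general_query_data_structure} by first refining the grid with respect to the target rectangles and then invoking the theorem on the enlarged grid. Recall from Section~1.1 that refining the grid with respect to a set $T$ of $t$ rectangles adds at most $2t$ new x-coordinates and at most $2t$ new y-coordinates and makes $T$ consistent with the resulting grid. Writing $p'$ and $q'$ for the number of coordinate intervals after refinement, we have $p'\le p+2t$ and $q'\le q+2t$.

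The next step is to check that the parameters entering the bound of Theorem~\ref{thm:general_query_data_structure} do not deteriorate beyond what is stated in the corollary. The number $l$ of layers is obviously unchanged, and so is the value $k$: every tile produced by the refinement inherits its edge costs from the original tile that contained it, so no new horizontal or vertical edge-cost value is created. The slightly subtler point is that the quantity $(p+q+1)l$ appearing inside the $\min$ need not be replaced by $(p'+q'+1)l$: this factor ultimately counts how many distinct cost values $c^{ij'\leftrightarrow}_{z'}$ and $c^{i'j\updownarrow}_{z'}$ can occur as slopes of the affine functions on a single tile boundary (cf.\ Lemma~\ref{lemma:maxfunctions}), and by the same inheritance argument this count is not increased by refinement.

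Plugging the refined parameters into Theorem~\ref{thm:general_query_data_structure} then yields a preprocessing time of
\[
O\bigl(p'q'\min\{k,(p+q+1)l\}\, l^{1+\epsilon}\, \tfrac{1}{\epsilon}\, \log(p'+q'+l)\bigr)
= O\bigl((p+t)(q+t)\min\{k,(p+q+1)l\}\, l^{1+\epsilon}\, \tfrac{1}{\epsilon}\, \log(p+q+l+t)\bigr)
\]
and a query time of $O(\log(p'+q') + \tfrac{1}{\epsilon}\log(k+l)) = O(\log(p+q+t) + \tfrac{1}{\epsilon}\log(k+l))$, exactly as claimed.

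The only step that is not mere bookkeeping is the verification that the factor $\min\{k,(p+q+1)l\}$ does not have to grow to $\min\{k,(p+q+t+1)l\}$ after refinement, despite $p'+q'$ exceeding $p+q$ by $\Theta(t)$; this is the only place where one genuinely has to use that refinement duplicates rather than invents cost values. Everything else is substitution of $p'$ and $q'$ into the preprocessing and query bounds of Theorem~\ref{thm:general_query_data_structure}.
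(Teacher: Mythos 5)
Your proof follows exactly the paper's approach: refine the grid with respect to the targets and then apply Theorem~\ref{thm:general_query_data_structure} to the refined instance. You are right to flag that the $\min\{k,(p+q+1)l\}$ factor deserves scrutiny --- the paper's one-line proof only remarks that $k$ is unchanged --- and your inheritance argument (that the set of slopes appearing on any refined segment is a subset of the slope set of the original segment containing it, so the bound of Lemma~\ref{lemma:maxfunctions} still holds with the original $p$ and $q$) is precisely what is needed to close that gap.
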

\begin{proof}
 Refine the grid with respect to the targets,
 then apply Theorem~\ref{thm:general_query_data_structure}
 to $\epsilon$ and the refined instance.
 Note that this refinement does not increase the number of different costs.
 \end{proof}

 We remark that the same ideas used in this section could be applied in the simple model
 to obtain a faster preprocessing time of $O(t^2l^{2+\epsilon}\frac{1}{\epsilon}\log l)$
 at the cost of a slower query time of $O(\log t + \frac{1}{\epsilon}\log l)$.

\section{Practical aspects} \label{section:practical}

\subsection{Implementation} \label{subsection:implementation}

With some modifications that we will describe below, we implemented the algorithms presented in the previous sections
as part of \BRD \cite{Ahrens2020, AhrensGesterKlewinghausEtc2015,GesterMuellerNiebergPantenSchulteVygen2013,Klewinghaus2022}, 
a detailed router developed at the University of Bonn in joint work with IBM. 
\BRD is the main detailed routing tool used by IBM for the design of its processor chips.

Up to parallelization and conflict resolution, \BRD routes one net after the other.
Each net is routed by iteratively connecting two of its components by a path until the net is fully connected, i.e., one component remains. 
The path search is the algorithmic core of \BRD and requires approximately 80--90\,\% of the total runtime.

To ensure that the layout can be manufactured, certain design rules must be obeyed.
For example, two vias must not be too close to each other even if they belong to the same net.
Shortest paths in the detailed routing graph often correspond to wirings that violate design rules.
Respecting even simple design rules is NP-hard \cite{Ahrens2020}.
\BRD uses a framework consisting of multiple components for avoiding violations.
First, every computed path is handed to a post-processing routine, which attempts to resolve violations locally.
Second, we apply multi-labeling, i.e., we search for shortest paths in a modified graph that can have multiple copies of each vertex (and different edges).
The modifications are done in such a way that certain design rule violations are avoided.
Finally, we impose restrictions to avoid violations at the start and end of a path.
For further implementation details, see \cite{Ahrens2020,AhrensGesterKlewinghausEtc2015}.

All experiments were performed on the same AMD EPYC 7601 machine with 64 CPUs and 1024\,GB main memory using 64 threads. 
Table~\ref{table:testbed} gives an overview of our testbed. 
It consists of nine real-world instances from three recent IBM processor chips in 7\,nm and 5\,nm technology nodes. 
We started all experiments on the same instance from the same snapshot, which was taken right before the detailed routing. 
At this point, a (three-dimensional) global routing and possibly an allowed layer range were already computed for each net.

We use edge costs as they have been developed for many years in real design practice.
They have three main components. The first component is called the base cost.
The base cost does not depend on the net and models the amount of routing resources consumed by a path. Wiring against the preferred direction of a layer (if allowed at all) is ten times as expensive as wiring in the preferred direction. Apart from that, wires in x- and y-direction have the same base cost on all layers. The base cost of vias is chosen such that a via bridge, i.e., a path consisting of two vias on the same layer and a single segment of wiring in preferred direction between them, is cheaper than the direct connection between the two endpoints if and only if it blocks strictly fewer additional tracks. This means that the precise via costs depend heavily on the precise design rules and track patterns of the technology. On the highest layers, the base cost of vias can be more than ten times more expensive than on the lowest layers. This is because the thicker wires on the high layers require a track pattern with larger spacing. 
The second component of the cost function is an additive penalty, increasing the cost of wires outside of the assigned layer range. This is a heuristic approach to avoid timing failures due to wires on lower layers having more resistance. 
Finally, the third component serves to restrict our path search to vertices that are inside the area corresponding to the global routing solution. 
In the context of our general cost model, this can be done by setting the cost to infinity outside this area.

\begin{table}
\centering
{
 \begin{tabular}{llrrrrrrrrrr}
 Chip &  Tech & $l$ &  $|V|$ &            Area & Wires &   Vias &   Nets &   Pins &  Calls &  $t$ & $pql$ \\
      &       &     & $10^9$ & $\mathrm{mm}^2$ &     m & $10^6$ & $10^6$ & $10^6$ & $10^6$ &   $$ &    $$ \\
 \hline
   A1 & 7\,nm &  10 &    1.0 &            0.08 &   2.3 &   3.50 &   0.37 &   1.15 &   0.87 & 1.83 &  3018 \\
   A2 & 5\,nm &  10 &    1.1 &            0.09 &   3.2 &   3.22 &   0.29 &   0.92 &   0.68 & 1.98 &  3180 \\
   A3 & 7\,nm &  10 &    1.4 &            0.10 &   2.8 &   2.75 &   0.26 &   0.78 &   0.56 & 1.94 &  3074 \\
   B1 & 5\,nm &  16 &    4.7 &            0.36 &   9.6 &   6.67 &   0.63 &   1.79 &   1.38 & 2.24 & 10868 \\
   B2 & 7\,nm &  16 &   15.9 &            1.20 &  28.6 &  18.82 &   1.73 &   5.00 &   3.98 & 2.09 & 10410 \\
   B3 & 7\,nm &  16 &   36.7 &            2.77 &  24.8 &  14.04 &   1.37 &   3.73 &   2.75 & 2.51 & 12123 \\
   C1 & 7\,nm &  16 &   94.4 &            6.52 &  26.8 &   1.78 &   0.14 &   0.31 &   0.29 & 6.49 & 28739 \\
   C2 & 7\,nm &  18 &  244.7 &           16.73 &  97.2 &   8.19 &   0.57 &   1.21 &   2.60 & 7.46 & 52052 \\
   D1 & 7\,nm &  16 & 9615.9 &          601.97 & 178.3 &  10.30 &   0.88 &   1.86 &   1.96 & 5.77 & 19065 \\
 \end{tabular}
}
\caption{
 \label{table:testbed}
 Testbed consisting of nine real-world instances.
 Tech refers to the naming of the technology nodes by the foundry.
 The vertex set $V$ of the detailed routing graph is the set of all locations that are on track with respect to at least one wire type on the current layer and would be on track when projected to at least one adjacent layer.
 Both (total length of) Wires and (total number of) Vias refer to the detailed routing computed by BonnRoute.
 Calls is the number of calls to our Dijkstra implementation. Since BonnRoute may try out different side constraints to find a path with as few design rule violations as possible, this does not match the number of computed paths.
 $t$ and $pql$ are the arithmetic mean over all calls to the preprocessing. 
}
\end{table}

\begin{table}
\centering
{
 \begin{tabular}{llr|rr|rr|r}
 &&& \multicolumn{2}{c|}{All Dijkstra calls} & \multicolumn{2}{c|}{Standard Dijkstra calls} & Total \BRD \\
 Chip &         Potential & Preprocessing & Runtime & Labels & Runtime & Labels & Wall time \\
      &                   &          h:mm &    h:mm & $10^9$ &    h:mm & $10^9$ &      h:mm \\
 \hline
 \hline
   A1 &           without &          0:00 &   19:12 &   19.7 &   15:21 &   18.2 &      0:35 \\
   A1 & $\ell_1$-distance &          0:00 &    6:05 &    5.8 &    3:53 &    4.9 &      0:22 \\
   A1 &            simple &          0:00 &    4:28 &    4.0 &    2:32 &    3.2 &      0:21 \\
   A1 &           general &          0:47 &    3:46 &    2.9 &    1:49 &    2.1 &      0:20 \\
 \hline
   A2 &           without &          0:00 &   15:04 &   17.0 &   11:20 &   15.2 &      0:28 \\
   A2 & $\ell_1$-distance &          0:00 &    6:16 &    6.5 &    3:41 &    5.2 &      0:21 \\
   A2 &            simple &          0:00 &    5:37 &    5.6 &    3:06 &    4.4 &      0:20 \\
   A2 &           general &          0:33 &    4:34 &    4.0 &    2:09 &    2.8 &      0:19 \\
 \hline
   A3 &           without &          0:00 &   15:25 &   16.8 &   12:14 &   15.2 &      0:27 \\
   A3 & $\ell_1$-distance &          0:00 &    5:38 &    5.9 &    3:39 &    4.9 &      0:17 \\
   A3 &            simple &          0:00 &    5:02 &    5.3 &    3:10 &    4.3 &      0:17 \\
   A3 &           general &          0:27 &    3:58 &    3.6 &    2:09 &    2.6 &      0:19 \\
 \hline
   B1 &           without &          0:00 &   62:46 &   59.1 &   37:15 &   44.9 &      1:42 \\
   B1 & $\ell_1$-distance &          0:00 &   37:15 &   31.8 &   16:37 &   20.6 &      1:17 \\
   B1 &            simple &          0:03 &   31:43 &   26.7 &   13:14 &   16.9 &      1:13 \\
   B1 &           general &          3:12 &   26:23 &   20.1 &    9:33 &   11.5 &      1:11 \\
 \hline
   B2 &           without &          0:00 &  194:21 &  168.5 &  119:21 &  136.2 &      5:13 \\
   B2 & $\ell_1$-distance &          0:00 &  108:31 &   87.1 &   54:33 &   64.0 &      3:55 \\
   B2 &            simple &          0:10 &   91:56 &   71.7 &   41:35 &   50.2 &      3:40 \\
   B2 &           general &          9:37 &   77:58 &   54.7 &   30:33 &   35.2 &      3:38 \\
 \hline
   B3 &           without &          0:00 &  143:39 &  141.3 &  109:28 &  121.7 &      4:08 \\
   B3 & $\ell_1$-distance &          0:00 &   78:39 &   74.7 &   50:49 &   59.3 &      3:15 \\
   B3 &            simple &          0:07 &   60:15 &   56.7 &   35:33 &   43.6 &      2:57 \\
   B3 &           general &          6:48 &   45:18 &   38.5 &   22:48 &   26.9 &      2:54 \\
 \hline
   C1 &           without &          0:00 &  111:55 &  129.4 &   86:24 &  113.0 &      2:20 \\
   C1 & $\ell_1$-distance &          0:00 &  102:23 &   92.0 &   77:30 &   77.9 &      2:20 \\
   C1 &            simple &          0:00 &   86:21 &   78.9 &   63:16 &   66.1 &      1:56 \\
   C1 &           general &          0:57 &   68:21 &   59.4 &   45:50 &   48.0 &      1:45 \\
 \hline
   C2 &           without &          0:00 & 2331:30 & 1378.3 &  290:01 &  350.9 &     41:40 \\
   C2 & $\ell_1$-distance &          0:00 & 2136:23 & 1110.2 &  309:56 &  274.6 &     38:36 \\
   C2 &            simple &          0:08 & 1997:04 & 1024.7 &  252:46 &  229.7 &     36:36 \\
   C2 &           general &         16:59 & 1942:03 &  935.8 &  206:34 &  180.6 &     36:06 \\
 \hline
   D1 &           without &          0:00 & 5039:08 & 3315.9 & 3937:44 & 2818.2 &     86:14 \\
   D1 & $\ell_1$-distance &          0:00 & 2976:07 & 1452.6 & 2015:02 & 1106.8 &     53:45 \\
   D1 &            simple &          0:06 & 1909:23 &  958.0 & 1102:38 &  663.5 &     36:31 \\
   D1 &           general &          7:16 & 1435:37 &  796.8 &  773:07 &  538.7 &     29:08 \\
 \hline
 \hline
  Sum &           without &          0:00 & 7933:04 & 5246.5 & 4619:12 & 3633.9 &    142:52 \\
  Sum & $\ell_1$-distance &          0:02 & 5457:19 & 2867.0 & 2535:45 & 1618.6 &    104:12 \\
  Sum &            simple &          0:39 & 4191:51 & 2232.0 & 1517:52 & 1082.4 &     83:54 \\
  Sum &           general &         46:40 & 3608:00 & 1916.2 & 1094:36 &  848.9 &     75:44 \\
 \end{tabular}
}
\caption{
 \label{table:future_costs_uniform}
Performance of the following four different feasible potentials on our testbed.
In the rows \textbf{without} potential, each query returns 0 in constant time.
When using \textbf{\boldmath$\ell_1$-distance}, the distances in x- and y-direction are scaled by the minimal $c_z^{\leftrightarrow}$ and $c_z^\updownarrow$, respectively, over all $z$. An $O(l)$ preprocessing computes these two numbers and the total cost $c_{1,z}$ 
of vias from layer 1 to each layer $z$; for the distance of two points on layers $z<z'$, we then use $c_{1,z'}-c_{1,z}$. 
The query returns the $\ell_1$-distance to $T$ in $O(t)$ time by iterating over the targets.
In the \textbf{simple} and \textbf{general} rows, the shortest distance to $T$ in the respective models is returned. 
Here the only difference between the two is that the general model restricts to the area corresponding to the global routing solution 
(outside of it, the costs are infinite). Implementation details for the last two potentials are described in the text.
Runtimes are summed over all 64 threads except for the last column, which shows the total wall time of the
overall \BRD run.
}
\end{table}

Table~\ref{table:future_costs_uniform} compares the performance of path searches with the original edge costs and with the reduced costs using three feasible potentials. 
Each of these three potentials is the distance to $T$ in the same supergraph $G$ of $G'$, but with respect to different edge costs $c$. 
For the simple and the general model, our implementation differs from the description in the previous sections as described below.
\begin{itemize}
 \item Since $t$ is usually small ($3.3$ on average), we iterate over all target rectangles in $T$ for every query. 
 When computing the distance from the query location to one of the target rectangles in the \textbf{simple} model, we know the start and end layer. 
 If we guess the lowest and highest layer used by a shortest path, the distance can be computed in constant time (Proposition~\ref{prop:futurecost_l2_time}). 
 Due to the special structure of our cost function, at most $l$ combinations need to be considered. In a preprocessing step, we compute those combinations that can be optimal for some query locations.
 \item The implementation of the \textbf{general} model uses a modified version of the algorithm used in Theorem~\ref{thm:function_labeling}. Since the distance from a single tile to $T$ can be expressed as a minimum of very few affine functions (on every instance, the average is below $1.2$), it is more efficient to compute these functions instead of the distance from the horizontal and vertical segments to $T$. 
 On the other hand, instead of Lemma~\ref{lemma:functionsets}, we need to use a more complicated data structure to maintain the set of non-dominated functions. We store the convex polygon of points on which each function attains the minimum. This way, insertion can be implemented to run in $O(|F| \log |F|)$ time. During each query, we find the correct tile in $O(\log (p + q + t))$ time using binary search and evaluate all non-dominated functions of that tile on the query location to compute the minimum.
\end{itemize}
The results show that the general potential performs significantly better than the simple potential, which already performs much better than the $\ell_1$-distance potential. Both the number of labels and the runtime improve on every instance, even when considering the additional preprocessing time. The relative improvement differs a lot between different instances. Most of this difference can be explained by some situations in which we get only minor improvements by our potentials:
\begin{itemize}
  \item If no path is found, all reachable vertices in the graph are labeled. None of the potentials show any improvement on these instances. 
  In fact, the path searches without potential are the fastest since they do not need any query time, with a total of $285$ hours (summed over all instances and all 64 threads). 
  With the three potentials, these path searches take a total of $329$, $337$, and $331$ hours, respectively.
 \item After a path search failed, \BRD may perform a backup path search which allows routing through existing wires at high cost (and then would remove (\emph{rip up}) such wires and try to re-route them). 
 Since these rip-up costs are not modeled in any of our potentials, a large portion of the graph may be labeled regardless of which potential is used.
On such instances, the order of the potentials regarding their performance is the same as when looking at all instances, but the relative improvements are much smaller.
\end{itemize}
The column \emph{Standard Dijkstra calls} in Table~\ref{table:future_costs_uniform} excludes these situations and hence shows an even larger gain than the column \emph{All Dijkstra calls}.
The question how to model rip-up costs efficiently when computing potentials remains for future research.

\subsection{Reservations and discounts for incremental routing}
\label{sec:reservations}

In chip design practice, there are two main scenarios where a detailed routing is not computed from
scratch, using just a global routing as input, but in an incremental way, using an approximate detailed routing as input.
The first scenario is when a detailed routing has already been computed, but now a few changes have been made,
for example in order to correct the logical function of the chip or to improve its timing behavior.
The second scenario is when a step in between global and detailed routing is used, typically called
track assignment, that maps the global wires to routing tracks in a way that obeys most --- but not all --- design rules.

In both scenarios of incremental routing, we get an almost feasible detailed routing as part of the input,
and the task is to compute a completely feasible detailed routing by doing only few changes.
While it is not exactly specified what ``few'' means, the motivation is that the input routing has already
been optimized, for example with respect to the timing behavior of the chip; moreover,
one aims at saving runtime. 

The traditional approach to incremental routing is to check for violations of design rules (e.g., wires of different nets overlapping) 
and to try to repair such violations locally, in a relatively small area around that violation.
While this can be parallelized very well, many violations cannot be repaired locally, and then the overall approach may fail
or resort to global path searches as backup. Moreover, if the wiring of a net needs to be repaired in multiple places,
the final result can be quite bad, for example with too many detours to meet timing constraints.

We suggest to repair violations globally but with a preference of using the initial solution.
To this end, we convert any detailed wire in the input to a global wire and possibly a reservation. 
A reservation reserves that space for the particular net. When other nets are routed earlier, this space is blocked.
Therefore, reservations are created only for (parts of) detailed wires that do not conflict with other detailed wires in the input.
For an example situation, see Figure~\ref{fig:reservations}.

\begin{figure}
 \centering
 \resizebox{0.99\textwidth}{!}
 {
  \begin{tikzpicture}
   \newcommand{\yslant}{0.2} %0.5
   \newcommand{\xslant}{-0.5} %-1
   \newcommand{\yshift}{379}
   \newcommand{\opa}{0.6}

   % Layer 1
   \begin{scope}[
    yshift=-\yshift,every node/.append style={
    yslant=\yslant,xslant=\xslant},yslant=\yslant,xslant=\xslant
   ]
    \draw (37, 11) node {\fontsize{50}{60}\selectfont layer $1$};
    \fill[black!20,fill opacity=\opa] (-10,2) rectangle (40,17); 
   
    \draw [ red] (22, 12) rectangle  (23,12.8);
   \draw [ red] (22.5, 13.8) node {\fontsize{50}{60}\selectfont $q$};
   \draw[ultra thick,-{Stealth[scale=2]}] (22.5,11.8) -- (22.5,9.2);
    \fill [ red] (22  , 8.2) rectangle  (23,9);
 %  \draw [ red] (22.5, 6) node {\fontsize{50}{60}\selectfont $q$};
   
       \fill [ darkgreen] (-8.5  , 4.6) rectangle  (-7.5,5.4);
   \draw [ darkgreen] (-8, 3.6) node {\fontsize{50}{60}\selectfont $p_1$};
   
       \fill [ darkgreen] (37.5  , 7.6) rectangle  (38.5,8.4);
   \draw [ darkgreen] (38, 6.6) node {\fontsize{50}{60}\selectfont $p_2$};

       \draw [densely dotted, very thick, darkgreen] (11, 7.6) rectangle  (12,8.4);
   \draw [ darkgreen] (11.5, 6.6) node {\fontsize{50}{60}\selectfont $p_3$};
   \draw[ultra thick,-{Stealth[scale=2]}] (11.5,8.6) -- (11.5,14.4);
       \fill [ darkgreen] (11  , 14.6) rectangle  (12,15.4);
   
    %part of shortest path
    \draw [darkgreen!80, line width = 4pt, dotted ] (12, 8) -- (20, 8);
    \draw [red,          line width = 4pt, dotted     ] (20, 8) -- (25, 8);
    \draw [darkgreen!80, line width = 8pt        ] (25, 8) -- (38, 8);
    \draw [darkgreen!80, line width = 8pt        ] ( -8, 5) -- (13, 5);
    \draw [darkgreen!80, line width = 8pt, dotted] (13, 5) -- (17, 5);
    \draw [blue,         line width = 8pt        ] (12,15) -- (25,15);
   \end{scope}
   
   \begin{scope}[
    yshift=-\yshift,every node/.append style={
    yslant=\yslant,xslant=\xslant},yslant=\yslant,xslant=\xslant
   ]
    \draw [darkgreen!80, line width=4pt, dotted] (17, 5) -- ($(17, 5) + (6.675,13.35)$);
    \draw [darkgreen!80, line width=4pt, dotted] (17, 8) -- ($(17, 8) + (6.675,13.35)$);
    \draw [blue,         line width=8pt] (13, 5) -- ($(13, 5) + (6.675,13.35)$);
    \draw [blue,         line width=8pt] (13,15) -- ($(13,15) + (6.675,13.35)$);
    \draw [blue,         line width=8pt] (25,15) -- ($(25,15) + (6.675,13.35)$);
    \draw [blue,         line width=8pt] (25, 8) -- ($(25, 8) + (6.675,13.35)$);
  \end{scope}
   
  \begin{scope}[
   yshift=0,every node/.append style={
   yslant=\yslant,xslant=\xslant},yslant=\yslant,xslant=\xslant
  ]
   \draw (37, 11) node {\fontsize{50}{60}\selectfont layer $2$};
   \fill[black!20,fill opacity=\opa] (-10,2) rectangle (40,17);

   \draw [darkgreen!80, line width = 4pt, dotted] (17, 5) -- (17, 8);
   \draw [blue, line width = 8pt] (13, 5) -- (13,15);
   \draw [blue, line width = 8pt] (25, 8) -- (25,15);
  \end{scope}
 
   \begin{scope}[
     every node/.append style={scale=2},
     scale=2,
     shift={(-31cm, -3cm)}
   ]
    \fill [white] (19.5,  1.2) rectangle (31.5, 10.5);
    \draw [black] (19.5,  1.2) rectangle (31.5, 10.5);
    \foreach \y in {3.3, 5.1, 6.9, 8.7} { \draw [black] (19.5, \y) -- (31.5, \y); };
    \draw [black] (25.5, 1.2) -- (25.5, 3.3);

    \draw [red, line width = 8pt, dotted] (24, 10.0) -- (27, 10.0);
    \node[red] at (25.5, 9.3){\huge illegal input wire};
    \node[darkgreen!80] at (25.5, 7.5){\huge legal input wire};
    \draw [darkgreen!80, line width = 8pt, dotted] (20,  8.2) -- (23,  8.2);
    \draw [darkgreen!80, line width = 16pt, dotted] (24, 8.2) -- (27, 8.2);
    \draw [darkgreen!80, line width = 16pt] (28, 8.2) -- (31, 8.2);
    \draw [darkgreen!80, line width = 16pt, dotted] (22, 6.4) -- (25, 6.4);
    \draw [darkgreen!80, line width = 16pt] (26, 6.4) -- (29, 6.4);
    \node [darkgreen!80] at (25.5, 5.7){\huge reservation};
    \draw [darkgreen!80, line width = 16pt] (20, 4.6) -- (25, 4.6);
    \draw [blue, line width = 16pt] (26,  4.6) -- (31,  4.6);
    \node at (25.5, 3.9){\huge new solution};
    \fill [darkgreen] (27, 2.2) rectangle  (28, 3.0);
    \fill [red] (29, 2.2) rectangle  (30, 3.0);
    \node  at (28.5, 1.7){\huge new pin position};
    \draw [darkgreen] (21, 2.2) rectangle  (22, 3.0);
    \draw [red] (23, 2.2) rectangle  (24, 3.0);
    \node at (22.5, 1.7){\huge old pin position};
   \end{scope}
  
  \end{tikzpicture}
  }
  \caption 
  {
    \label{fig:reservations}
    Example of re-routing a net using reservations after changes to the input have been made.
    The \textcolor{darkgreen!80}{green} and \textcolor{red}{red} wires connected pins $p_1$, $p_2$, and the old position of $p_3$. 
    Now suppose $p_3$ has been moved and is no longer connected. 
    Moreover, pin $q$ from a different net has been moved and now makes the \textcolor{red}{red} piece of wire illegal.
    Now we want to connect $p_1$, $p_2$, and the new position of $p_3$ and use much of the old wiring. 
    We convert all of the old wiring (green and red) to global wires and add a global wire (not shown) that connects to the new position of $p_3$.
    Next we create reservations.
    Even though all of the \textcolor{darkgreen!80}{green wires} are legal, we may choose to create reservations only for the thick green wires, 
    e.g., if we expect the harm of blocking other nets to outweigh the benefit of keeping them usable for this net.
    When this net is being routed, we may end up discarding the dashed part of the reservations, 
    using the rest of the reservations and adding the \textcolor{blue}{blue wires}. The solid wires then constitute the new routing for this net.
  }
\end{figure}
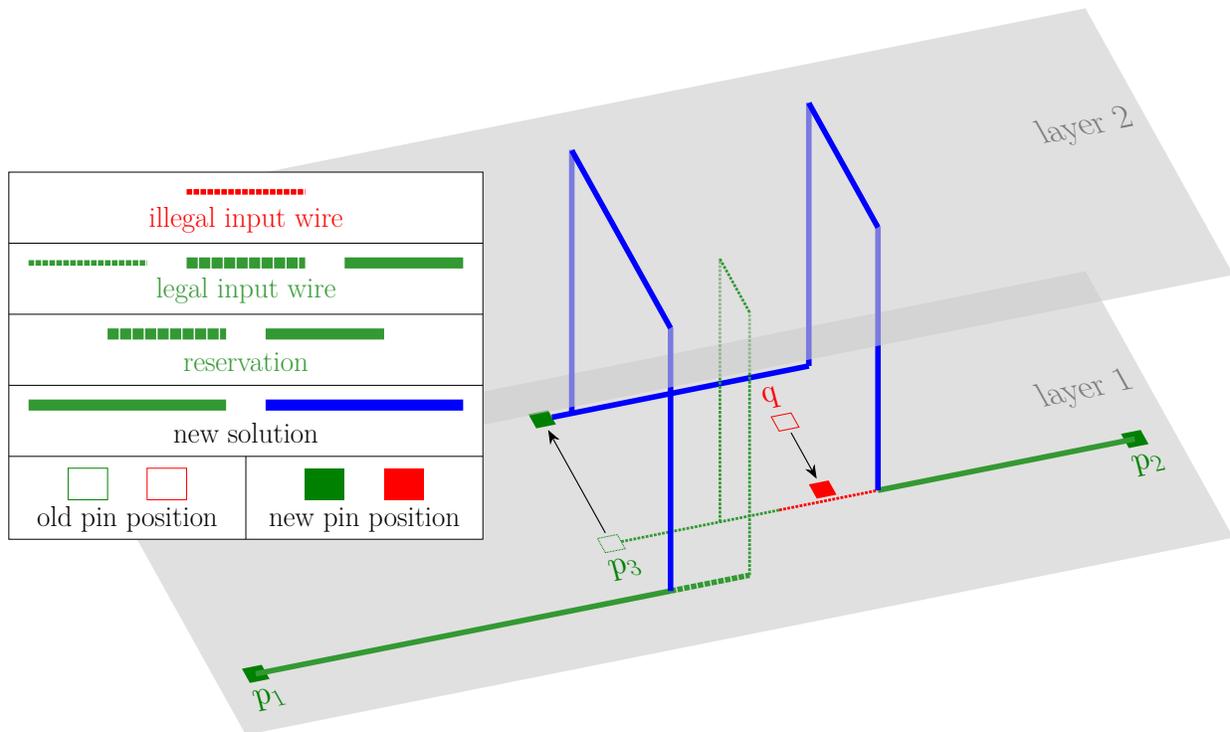

Once a net is routed for which we have created reservations, we would like to encourage, but not force, the net to use
the reserved space. We do this by defining a discount factor $0<\delta<1$ and multiplying all edge costs on a reservation
of that net by $\delta$. 
There are two reasons for using this incentive to route nets similarly as in the input: 
first, during detailed routing, we cannot do a complete timing analysis (this would be too slow), and
the input routing has already been analyzed carefully.
Second, we would hope for a speedup if the reservation serves as a useful guide how to route a net.

However, this speedup does not come automatically. In fact, with the traditional goal-oriented search techniques,
reservations would lead to a slow-down.
For example, if we define the potential $\pi$ to be the $\ell_1$-distance to the nearest target, multiplied by the minimum edge
weight in that direction, then we have to multiply it by $\delta$ if there are any reservations in that direction
(no matter how useful). 
Our generalized framework, however, allows us to refine the grid not only with respect to the targets, but also 
with respect to the reservations, and define individual (discounted) costs on the edges corresponding to reservations.

If we use the output of track assignment as input, this often is a good solution on higher layers, 
where we have mostly longer wires, but much less so on lower layers, which are primarily used for pin access. 
(This is because long wires on low layers have a high resistance and thus poor delay.) 
In this case we may define reservations only on high layers and let the pin access and the short wires be freely determined
by the detailed router.

To evaluate the effect of reservations on incremental routing,
we compare two different algorithms.
Both of them replace the same subset of the input wires by global wires and compute a new solution.
What subset is chosen depends on the scenario and is described below.
The algorithm \emph{no reservations} is our standard bulk routing algorithm,
starting from scratch using these global wires without any additional information.
The algorithm \emph{reservations} creates reservations for all input wires
that are legal, except for short wires that connect only to an illegal input wire.
Both algorithms use the distance in the general model as their potential.
In the algorithm \emph{reservations}, the cost of reservations is multiplied by
a discount factor of $\delta = \frac{3}{4}$ (i.e., $25\%$ discount).
The general model takes this discount into account.

We compare these two algorithms on $15$ instances belonging to two different scenarios.
In the first scenario, we start from an input in which all nets are connected and almost all wires are legal.
More precisely, our input is a snapshot taken right after those bulk routing runs
in Table~\ref{table:future_costs_uniform} that used the general potential.
We do not keep any part of our old solutions fixed,
but replace all the detailed wires in nets we connected by global wires.
The results of this experiment can be seen in Table~\ref{table:reservations-rerun}.

\begin{table}
\centering
{
 \begin{tabular}{llr|rr|r}
 &&& \multicolumn{2}{c|}{Dijkstra} & Total \BRD \\
 Chip &       Algorithm & Preprocessing & Runtime & Labels & Wall time \\
      &                 &          h:mm &    h:mm & $10^9$ &      h:mm \\
 \hline
 \hline
   A1 & no reservations &          0:58 &    3:55 &    2.7 &      0:23 \\
   A1 &    reservations &          1:04 &    2:23 &    0.5 &      0:15 \\
 \hline
   A2 & no reservations &          0:42 &    4:19 &    3.6 &      0:19 \\
   A2 &    reservations &          0:50 &    2:26 &    0.6 &      0:13 \\
 \hline
   A3 & no reservations &          0:40 &    4:12 &    3.5 &      0:18 \\
   A3 &    reservations &          0:45 &    2:06 &    0.6 &      0:12 \\
 \hline
   B1 & no reservations &          4:17 &   23:18 &   17.6 &      1:12 \\
   B1 &    reservations &          4:31 &    8:52 &    3.3 &      0:42 \\
 \hline
   B2 & no reservations &         16:43 &   87:29 &   58.5 &      4:12 \\
   B2 &    reservations &         15:58 &   54:17 &   14.8 &      3:11 \\
 \hline
   B3 & no reservations &          9:19 &   40:29 &   35.3 &      2:51 \\
   B3 &    reservations &         10:32 &   16:46 &    6.3 &      2:24 \\
 \hline
   C1 & no reservations &          1:00 &   70:57 &   59.4 &      2:06 \\
   C1 &    reservations &          1:27 &   25:47 &    9.3 &      1:34 \\
 \hline
   C2 & no reservations &         22:56 & 2114:25 &  902.8 &     39:10 \\
   C2 &    reservations &         21:31 & 1140:54 &  309.6 &     23:35 \\
 \hline
   D1 & no reservations &          8:17 & 1440:19 &  793.2 &     29:09 \\
   D1 &    reservations &         10:54 &  432:58 &  142.0 &     14:03 \\
 \hline
 \hline
  Sum & no reservations &         64:56 & 3789:26 & 1877.1 &     79:44 \\
  Sum &    reservations &         67:34 & 1686:34 &  487.4 &     46:14 \\
 \end{tabular}
}
\caption{
 \label{table:reservations-rerun}
 Performance of different incremental routing algorithms when applied right after bulk routing on all nets. Both runs use the general potential. The cost of reservations is multiplied by $\delta=\frac{3}{4}$.
}
\end{table}

The second scenario in which we evaluate the effect of reservations
is a detailed routing that is no longer legal due to timing optimization.
Table~\ref{table:testbed_eco} gives an overview of this part of the testbed,
consisting of six snapshots taken in a production flow just before incremental detailed routing.
Unlike in the previous scenario, we keep pin-to-pin paths fixed if they consist only of detailed wires.
Any wire not in such a path will be replaced by a global wire.
See Figure~\ref{fig:reservations} for an example.
The performance of both algorithms on these instances is shown in Table~\ref{table:eco_algorithms}.

\begin{table}
\centering
{
 \begin{tabular}{llrrrrrrrrrr}
 Chip &  Tech & $l$ &  $|V|$ &            Area & Wires &   Vias &   Nets &   Pins &  Calls &   $t$ & $pql$ \\
      &       &     & $10^9$ & $\mathrm{mm}^2$ &     m & $10^6$ & $10^6$ & $10^6$ & $10^6$ &    $$ &    $$ \\
 \hline
   b1 & 7\,nm &  16 &    5.7 &            0.40 &   6.6 &   4.22 &   0.41 &   0.48 &   0.04 & 27.95 & 16623 \\
   b2 & 7\,nm &  16 &    7.1 &            0.47 &   9.8 &   6.66 &   0.69 &   0.74 &   0.05 & 29.68 & 20696 \\
   b3 & 7\,nm &  16 &    5.4 &            0.36 &  10.0 &   7.03 &   0.72 &   0.75 &   0.04 & 35.91 & 29935 \\
   b4 & 7\,nm &  16 &    5.0 &            0.36 &  14.8 &  11.19 &   1.04 &   1.06 &   0.02 & 35.34 & 27754 \\
   b5 & 7\,nm &  16 &    6.6 &            0.46 &  15.2 &  13.80 &   1.26 &   1.27 &   0.01 & 30.64 & 21589 \\
   b6 & 7\,nm &  16 &    8.9 &            0.63 &  20.9 &  13.95 &   1.32 &   1.43 &   0.14 & 32.86 & 21524 \\
 \end{tabular}
}
\caption{
 \label{table:testbed_eco}
 Testbed consisting of six snapshots taken during a physical design flow used in production.
 Snapshots were taken after detailed routing based timing optimization,
 right before incremental detailed routing.
 For an explanation of the columns, see Table~\ref{table:testbed}.
 Because the input already contains many valid connections,
 the number of calls to our Dijkstra implementation is often significantly smaller
 than the number of nets.
}
\end{table}

\begin{table}
\centering
{
 \begin{tabular}{llr|rr|r}
 &&& \multicolumn{2}{c|}{Dijkstra} & Total \BRD \\
 Chip &       Algorithm & Preprocessing & Runtime & Labels & Wall time \\
      &                 &          h:mm &    h:mm & $10^9$ &      h:mm \\
 \hline
 \hline
   b1 & no reservations &          0:16 &    4:09 &    3.4 &      0:25 \\
   b1 &    reservations &          0:16 &    2:16 &    1.4 &      0:15 \\
 \hline
   b2 & no reservations &          0:16 &    4:24 &    3.7 &      0:38 \\
   b2 &    reservations &          0:18 &    3:54 &    2.8 &      0:36 \\
 \hline
   b3 & no reservations &          0:23 &    4:03 &    3.0 &      1:23 \\
   b3 &    reservations &          0:24 &    3:38 &    2.3 &      1:21 \\
 \hline
   b4 & no reservations &          0:18 &   11:16 &    4.5 &      0:56 \\
   b4 &    reservations &          0:17 &    6:09 &    2.2 &      0:46 \\
 \hline
   b5 & no reservations &          0:08 &    5:31 &    2.1 &      0:51 \\
   b5 &    reservations &          0:06 &    2:05 &    0.7 &      0:44 \\
 \hline
   b6 & no reservations &          0:49 &   21:26 &   12.1 &      1:15 \\
   b6 &    reservations &          0:50 &   16:05 &    7.8 &      1:09 \\
 \hline
 \hline
  Sum & no reservations &          2:12 &   50:51 &   29.0 &      5:30 \\
  Sum &    reservations &          2:13 &   34:09 &   17.4 &      4:54 \\
 \end{tabular}
}
\caption{
 \label{table:eco_algorithms}
 Performance of different incremental routing algorithms when applied after detailed routing based timing optimization
 on all paths that contain at least one illegal wire. Both algorithms use the general potential.
 The cost of reservations is multipiled by $\delta=\frac{3}{4}$.
 For an explanation of the columns, see Table~\ref{table:future_costs_uniform}.
}
\end{table}

\subsection*{Acknowledgements}

We thank the many other contributors to BonnRoute, in particular Niko Klewinghaus, Christian Roth, and Niklas Schlomberg. 
Thanks also to Lukas K\"uhne, who started the initial implementation of the reservations concept.
We also thank Niklas Schlomberg and the anonymous reviewers for carefully reading a preliminary version of our manuscript. 
Dorothee Henke has partially been supported by Deutsche Forschungsgemeinschaft (DFG) under grant no.~BU~2313/6,
and the other authors under grants EXC~59 and EXC-2047 (Hausdorff Center for Mathematics).
 
\bibliographystyle{acm}
\bibliography{literature.bib}
  
\end{document}